\theoremstyle{definition}
\newtheorem{theorem}{Theorem}[section]
\newtheorem{lemma}[theorem]{Lemma}
\newtheorem{corollary}[theorem]{Corollary}
\newtheorem{definition}[theorem]{Definition}
\newtheorem{proposition}[theorem]{Proposition}
\newtheorem{example}[theorem]{Example}
\newtheorem{remark}[theorem]{Remark}
\newtheorem{notation}[theorem]{Notation}
\newcommand{\fqn}{\mathbb{F}_{q^n}}
\newcommand{\fq}{\mathbb{F}_{q}}
\newcommand{\fqk}{\mathbb{F}_{q^k}}
\newcommand{\cA}{{\mathcal A}}
\newcommand{\cC}{{\mathcal C}}
\newcommand{\cG}{{\mathcal G}}
\newcommand{\cF}{{\mathcal F}}
\newcommand{\cU}{{\mathcal U}}
\newcommand{\cL}{{\mathcal L}}
\newcommand{\NN}{{\mathbb N}}
\newcommand{\F}{{\mathbb F}}
\newcommand{\Tr}{\mathrm{Tr}}
\newcommand{\N}{\mathrm{N}}
\newcommand{\lmb}{\lambda}
\newcommand{\la}{\langle}
\newcommand{\ra}{\rangle}
\newcommand{\PG}{\mathrm{PG}}
\newcommand{\subspace}[1]{\mbox{$\langle{#1}\rangle$}}
\newcommand{\Aut}{\mbox{\rm Aut}}
\newcommand{\Fqn}{\mbox{$\F_{q^n}$}}
\newcommand{\rrk}{\mbox{-{\rm Rank}}}
\newcommand{\GL}{{\rm GL}}
\newcommand{\id}{{\rm id}}
\newcommand{\mmod}{\mbox{\rm mod }}
\newcommand{\lcm}{{\rm lcm}}
\newcommand{\Gal}{\mbox{$\text{Gal}$}}
\newcommand{\Orb}{\mbox{$\text{Orb}$}}
\newcommand{\Smallfourmat}[4]{\mbox{$\left(\begin{smallmatrix}{#1}&{#2}\\[.4ex]{#3}&{#4}\end{smallmatrix}\right)$}}
\renewcommand{\P}{{\mathbb P}}
\newcommand{\Z}{{\mathbb Z}}
\newcommand{\cI}{{\mathcal I}}
\newcommand{\cO}{{\mathcal O}}
\begin{document}

\title{Quasi-optimal cyclic orbit codes}
\author{Chiara Castello, Heide Gluesing-Luerssen, Olga Polverino, and Ferdinando Zullo}
\date{December 31, 2024}

\maketitle

\begin{abstract}
We focus on two aspects of cyclic orbit codes: 
invariants under equivalence and quasi-optimality.
Regarding the first aspect, we establish a connection between the codewords of a cyclic orbit code and a certain linear set on the projective line.
This allows us to derive new  bounds on the parameters of the code. In the second part, we study a particular family of (quasi-)optimal cyclic orbit codes  and derive a general existence theorem for quasi-optimal codes in even-dimensional vector spaces over finite fields of any characteristic. 
Finally, for our particular code family we describe the automorphism groups under the general linear group and a suitable Galois group.
\end{abstract}

%%%%%%%%%%%%%%%%%%%%%
\section{Introduction}

A network channel can be represented as a directed acyclic graph with three types of vertices: sources (vertices without incoming edges), sinks (vertices without outgoing edges) and inner nodes (vertices with both incoming and outgoing edges).
When inner nodes combine incoming information linearly, this is named \textbf{linear network coding}. If the network's topology is unknown in advance, it is called \textbf{random network coding}.
Network coding was developed in the seminal paper~\cite{ACLY00}
% by Ahlswede, Cai, Li, and Yeung in \cite{ACLY00} 
to enhance throughput in multicast sessions. Unlike traditional routing, where intermediate nodes merely store and forward data, network coding enables these nodes to actively combine and encode the data before transmission.
This paradigm shift has driven interest in network coding, with applications in areas such as wireless communications and cloud computing; see e.g. \cite{Netsurvey}.
In \cite{KoetterK}, the authors laid the mathematical foundation for random linear network coding, highlighting the effectiveness of subspace codes for reliable transmissions. These codes leverage the model's ability to transmit linear combinations of vectors while preserving their span. More precisely, codewords are subspaces of $\mathbb{F}^n$, for some field $\mathbb{F}$, and a code is a collection of subspaces equipped with the subspace distance.
Since then, intensive research has focused on the theory of subspace codes, aiming to establish bounds, develop constructions, and design efficient decoding algorithms; see \cite{Subcodessurvey}.

In this paper we focus on cyclic orbit codes, introduced in \cite{Etzion}, as the subspace code analogue of classical linear codes in the Hamming metric.
The set of all $k$-dimensional $\F_q$-subspaces of $\F_{q^n}$ is called the \textbf{Grassmannian} over $\F_q$ and denoted by $\cG_q(k,n)$. 
A \textbf{constant dimension subspace code} is a subset $\mathcal{C}$ of $\cG_q(k,n)$ endowed with the \textbf{subspace distance} 
\[d(U,V)=2(k-\dim_{\F_q}(U \cap V))\]
for $U,V \in\cG_q(k,n)$. 
As usual, the \textbf{minimum distance} of~$\cC$ is defined as $d(\cC)=\min\{d(U,V)\colon\!\!$ $U,V\in\cC,\,U\neq V\}$.
Clearly, it is upper bounded by~$2k$.
A subspace code $\mathcal{C} \subseteq \cG_q(k,n)$ is said to be \textbf{cyclic} if $\alpha V \in \mathcal{C}$
for all $\alpha \in \F_{q^n}^*$ and $V \in \mathcal{C}$. 
For $U \in\cG_q(k,n)$ the \textbf{orbit} of $U$ is the set $\mathrm{Orb}(U)=\{\alpha U: \alpha \in \F_{q^n}^*\}$.
Clearly, $\mathrm{Orb}(U)$ is a cyclic subspace code, called a \textbf{cyclic orbit code} (or a single-orbit or one-orbit code \cite{bhaintwal2024,CPZ2024IEEE,CPZ2024FullWeight}).
The size of a cyclic orbit code is upper bounded by $(q^n-1)/(q-1)$ and codes with this size are called \textbf{full-length orbit codes}.

It is not hard to see that for $k>2$ a full-length orbit code cannot attain distance~$2k$ (see also \cite{manganiello2008spread} and \cite{Segre}).
For this reason we call a full-length orbit code \textbf{optimal} if its distance is $2(k-1)$, and \textbf{quasi-optimal} if it is $2(k-2)$.
In \cite{Trautmann}, the authors conjectured the existence of optimal full-length orbit codes 
for all positive integers $n$ and $2<k\leq n/2$ and any prime power $q$.
A lot of progress has been made on establishing this conjecture, see e.g.~\cite{BEGR, Otal, SantZullo1, Roth}.
However, the case $q=2$ and $k=n/2$ remains largely open and in particular, the conjecture is not true for $(n,k,q)=(8,4,2)$; see \cite{GLTroha15}.
For further details we refer to Remarks~\ref{R-n2} and~\ref{rmk:UsgammaoptimaliffNormneq1}.
%In \cite{BEGR}, Ben-Sasson, Etzion, Gabizon and Raviv, using subspace polynomials, established the conjecture for any given $k$ and infinitely many values of $n$. These results were later improved in \cite{Otal} 
%(see also \cite{BEGR,Otal,SantZullo1}). 
%Finally, for a variety of other cases the conjecture was established in \cite{Roth}  by making use of Sidon spaces, 
%the linear analog of Sidon sets introduced in \cite{BSZ2015,BSZ2018}. 
On the other hand, not much is known about quasi-optimal full-length orbit codes.

In order to achieve a finer distinction between cyclic orbit codes than the minimum distance, the notion of distance distribution has been introduced in \cite{heideweight}.
Therein, it has been shown that the distance distribution of an optimal full-length orbit code only depends on the parameters $(n,k,q)$, whereas this is not the case for quasi-optimal full-length orbit codes. 
In the latter case,  the distance distribution of $\Orb(U)$ depends on yet another parameter, namely the number of fractions $\overline{uv^{-1}}\in\P(\fqn)$ for $u,v \in U\setminus\{0\}$, denoted by~$f_U$.

In this paper, we establish a relation between~$f_U$ and the size of the $\fq$-linear set $L_{U\times U}$, the set of points in $\mathrm{PG}(1,q^n)$ defined by vectors in $U\times U$. 
With the aid of known  lower bounds on the size of a linear set (see \cite{DeBeuleVdV} and \cite{csajbok2023maximum}) we are able to improve upon previous bounds on~$f_U$ provided in \cite{heideweight} (under some mild constraints on the distance distribution of the code); see \cref{cor:w2k-2neq0thenfU} and   \cref{th:bounds f_U}.
This allows us to narrow down the possible distance distributions of quasi-optimal full-length orbit codes; see  \cref{cor:fUconsequencequasi-optimal}.
All of this provides answers to some questions posed in \cite[Section 6]{heideweight}.
In particular, for $k=3$, exactly three different distance distributions arise.

Another contribution of this paper is a detailed study of the orbit codes $\cO_{s,\gamma}=\Orb(U_{s,\gamma})$, where $U_{s,\gamma}=\lbrace u+u^{q^s}\gamma\colon u\in\fqk\rbrace\in\cG_q(k,n)$, where $n=2k,\,s\in[k]$ such that $\gcd(s,k)=1$ and $\gamma\in\fqn$ such that $\fqn=\fqk(\gamma)$.
These codes have been introduced in \cite{Roth}, and the authors proved that if $\gamma^{(q^n-1)/(q-1)}\ne 1$ then $\cO_{s,\gamma}$ is an optimal 
full-length orbit code.
Based on this, we show that $\cO_{s,\gamma}$ is always full-length and either optimal or quasi-optimal, and the distinction depends only on the field norm 
of~$\gamma$; see  \cref{P-OptQuasiOpt}.
It is not hard to determine the number of optimal and quasi-optimal orbit codes $\cO_{s,\gamma}$; see \cref{P-CountOpt}.
All of this allows us to establish in \cref{thm:existencequasi-optimal} the existence of quasi-optimal full-length orbit codes for any $q\geq2,\,n$ even, and 
$3\leq k\leq n/2$.

As shown in \cite{heideweight}, if~$n$ is even, then the distance distribution of a quasi-optimal full-length orbit code $\Orb(U)$ depends on whether or 
not~$U$ contains a cyclic shift of the subfield $\F_{q^2}$ (thus the same is true for the parameter~$f_U$). 
In \cref{T-ContainCycShift} we characterize the subspaces $U_{s,\gamma}$ containing such a cyclic shift and in \cref{P-CycShift} we determine the
number of the resulting orbit codes $\cO_{s,\gamma}$.

Finally, we study the codes $\cO_{s,\gamma}$ up to equivalence. 
The study of equivalence for subspace codes has been initiated in \cite{Trautmann2}, and in \cite{Heideequiv} special attention has been paid to the 
case of cyclic orbit codes.
We consider two types of equivalence: $\F_{p^t}$-linear isometries and 
$\F_{p^t}$-Frobenius isometries (where $q=p^h$ and $t$ a divisor of $h$); see \cref{D-Isometric} for details. 
In the former case the codes are related via an $\F_{p^t}$-isomorphism on~$\F_{q^n}$, while in the latter they are related via an isomorphism from the 
Galois group $\Gal(\fqn\mid\F_{p^t})$.
With the aid of a result in \cite{Heideequiv}, we prove in \cref{T-LinFrobIso} that if $U_1,\,U_2\in\cG_q(k,n)$ are generic subspaces, then $\Orb(U_1)$ and  
$\Orb(U_2)$ are $\F_{p^t}$-linearly isometric if and only if they are $\F_{p^t}$-Frobenius isometric. 
This allows us to use \cite{CPSZSidon} for determining the $\F_{p^t}$-auto\-morphism group of the codes $\cO_{s,\gamma}$; see \cref{P-AutGroup}. 
Finally, in \cref{T-AutGroupellhat} we provide, for the codes $\cO_{s,\gamma}$, a description of the $\F_p$-Frobenius-automorphism groups
and thus of the sizes of the orbits of these codes under the action of the Galois group.
Examples illustrate our findings.

%%%%%%%%%%%%%%%%%%%
\section{Linear sets}\label{S-LinSets}

In this paper, we will make use of $\fq$-linear sets in the projective line and therefore start with recalling some useful definitions and results regarding linear sets in $\PG(1,q^n)$. We refer to \cite{LavVdV} and \cite{Polverino} for comprehensive references on linear sets in projective spaces of any finite dimension.
Let $V$ be a $2$-dimensional $\fqn$-vector space and let $\Lambda=\PG(V,\fqn)=\PG(1,q^n)$. 
Let $W \neq \{\mathbf{0}\}$ be an $\fq$-subspace of $V$. The pair $(W,L_W)$ where
\[ L_W=\{\langle \mathbf{u} \rangle_{\fqn} \colon \mathbf{u}\in W\setminus\{\mathbf{0}\}\}, \]
is said to be an $\fq$-\textbf{linear set} of $W$-rank $\dim_{\fq}(W)$.
Note that the set $L_W \subseteq \Lambda$ does not uniquely determine the dimension of the subspace~$W$: we may have $L_W=L_U$ 
and $\dim_{\fq}(W) \neq \dim_{\fq}(U)$. We denote the $W$-rank of the linear set $L_W$ by $W\rrk(L_W)$.

Another important notion is the following. Consider an $\fq$-linear set $L_W$ of $W$-rank $k$ in $\PG(1,q^n)$. 
The $W$-\textbf{weight} of a point $P=\langle \mathbf{v}\rangle_{\fqn} \in \Lambda$ is defined as 
$w_{W}(P) :=\dim_{\fq}(W \cap \langle \mathbf{v}\rangle_{\fqn})$ (of course, this dimension depends on the choice of the 
ground field~$\fq$; we will omit this dependence in our notation).
For $i\in \{0,\ldots,k\}$ denote by~$N_i$ the number of points of $\Lambda$ having $W$-weight~$i$.
We call $(N_0,\ldots,N_k)$ the $W$-\textbf{weight distribution} of $L_W$.
Note that if $\mathrm{Rank}(L_W)> n$, then $N_0=0$ and hence $L_W=\Lambda$.
We say that~$L_W$ has $W$-\textbf{weight spectrum}  $(i_1,\ldots,i_t)$ 
with $1\leq i_1 <i_2 < \ldots <i_t\leq k$ if the $W$-weight of every point $P \in L_W$ is in $\{i_1,\ldots,i_t\} $ 
and each of these integers $i_j$ occurs as the weight of at least one point of $L_W$.

Naturally, the size of the linear set is given by
\[
    |L_W| =N_1+\ldots+N_k,
\]
and $N_0 =q^n+1-|L_W|$.
Counting the pairs $(P,\mathbf{w})$, where $P$ is a point of $L_W$ and $\mathbf{w}\in W$ such that $P=\la \mathbf{w}\ra_{\fqn}$, we obtain
\begin{equation}\label{eq:pesi}
    N_1(q-1)+N_2(q^2-1)+\ldots+N_k(q^k-1)=q^k-1,
\end{equation}
and thus $N_1+N_2(q+1)+\ldots+N_k(q^{k-1}+\ldots+q+1)=q^{k-1}+\ldots+q+1$.
From the previous equalities we get
\begin{equation}\label{eq:card}
    |L_W| \leq \frac{q^k-1}{q-1}.
\end{equation}
Moreover, by the Grassmann formula,  we also get
\[
    w_{W}(P)+w_{W}(Q)\leq W\rrk(L_W),
\]
for any $P,Q \in \mathrm{PG}(1,q^n)$ with $P\ne Q$. If there exist two distinct points $P,Q \in L_W$ such that $ w_{W}(P)+w_{W}(Q)= W\rrk(L_W),$ then we say that $L_W$ is a linear set with points of {\bf complementary weights}. 
The following result has been proved in \cite[Theorem 1.2]{DeBeuleVdV} (and for $k=n$ in \cite[Lemma 2.2]{BoPol}). 

%%%%%%%%%%%%%%%%%%%
\begin{theorem}\label{th:sizelinset}
An $\fq$-linear set $L_W$ of $W$-rank $k\leq n $ in $\PG(1,q^n)$ with at least one point of $W$-weight one has at least $q^{k-1}+1$ points.
\end{theorem}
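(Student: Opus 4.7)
The plan is to translate the theorem into a statement about the size of a certain union of affine lines in $\fqn$, and then to induct on the rank $k$.

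First I would pick coordinates adapted to the weight-one point: take $V=\fqn^2$, $P=\la(1,0)\ra_{\fqn}$, and $(1,0)\in W$ as a generator of $W\cap P$. Then the projection $\pi_2\colon W\to\fqn$ onto the second coordinate has kernel $\fq\cdot(1,0)$ and image $B\subset\fqn$ of $\fq$-dimension $k-1$. Fixing an $\fq$-linear section $g\colon B\to\fqn$ of $\pi_2|_W$ writes
\[W=\{(g(b)+c,b)\colon b\in B,\ c\in\fq\}.\]
Every point in $L_W\setminus\{P\}$ has the form $\la(\alpha,1)\ra_{\fqn}$ for a unique affine coordinate $\alpha=(g(t)+c)/t\in\fqn$ with $t\in B\setminus\{0\}$ and $c\in\fq$. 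With $T_t:=t^{-1}(g(t)+\fq)$, an $\fq$-affine line in $\fqn$ of size $q$ and direction $t^{-1}\fq$, this gives
\[|L_W|=1+|S|,\qquad S=\bigcup_{t\in B\setminus\{0\}}T_t.\]
Since $T_t=T_{\lambda t}$ for $\lambda\in\fq^*$, the set $S$ is a union of $(q^{k-1}-1)/(q-1)$ affine lines whose directions are pairwise distinct one-dimensional $\fq$-subspaces of $\fqn$. The goal reduces to showing $|S|\geq q^{k-1}$.

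I would then induct on $k$. The base case $k=2$ is immediate: $B$ is one-dimensional, so $S$ consists of a single line of size $q$. For $k\geq3$, pick an $\fq$-hyperplane $B'\subset B$ and set $W':=\{(g(b)+c,b)\colon b\in B',\ c\in\fq\}\subset W$. Then $P$ still has $W'$-weight one, so by the induction hypothesis $|S'|\geq q^{k-2}$, where $S'=\bigcup_{t\in B'\setminus\{0\}}T_t$. The $\fq$-lines of $B$ not contained in $B'$ produce exactly $q^{k-2}$ ``new'' affine lines, and I would aim to show that they contribute at least $q^{k-1}-q^{k-2}$ points outside $S'$, yielding $|S|\geq q^{k-1}$.

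The hard part will be this final inequality. The two available ingredients are (i) distinct new lines have distinct directions, so any two of them meet in at most one point of $\fqn$; and (ii) the pair-weight constraint $w_\alpha+w_{\alpha'}\leq k$ for distinct $\alpha,\alpha'\in L_W$, which in these coordinates reads $\dim_\fq(B_\alpha\cap B_{\alpha'})\leq 1$ with $B_\alpha:=\{t\in B\colon\alpha t-g(t)\in\fq\}$. A naive inclusion--exclusion using only (i) yields $|S|\gtrsim q^{k/2}$, which is too weak. To reach the sharp bound I would have to combine (i) and (ii) with the linearity of the section $g$ in a careful double count of incidences between the new lines and the points of $S\setminus S'$, essentially recovering the sharpness of the extremal ``club'' configuration in which a single point of weight $k-1$ absorbs precisely the new incidences that already lie in $S'$.
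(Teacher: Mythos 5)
The coordinate reduction in your first two paragraphs is correct and clean: identifying $L_W\setminus\{P\}$ with the union $S=\bigcup_{t\in B\setminus\{0\}}T_t$ of $(q^{k-1}-1)/(q-1)$ affine $\fq$-lines with pairwise distinct directions, so that the theorem becomes $|S|\geq q^{k-1}$, is a faithful reformulation, and the observations $T_{\lambda t}=T_t$ and $B_\alpha\cap B_{\alpha'}\subseteq(\alpha-\alpha')^{-1}\fq$ are fine. But the argument stops exactly where the theorem actually lives. The inductive step you propose --- that the $q^{k-2}$ ``new'' lines coming from $B\setminus B'$ contribute at least $q^{k-1}-q^{k-2}$ points outside $S'$ --- is only announced, not proved, and you yourself note that the two ingredients you have (distinct directions, hence pairwise intersections of size at most one, and the bound on $\dim(B_\alpha\cap B_{\alpha'})$) give only about $q^{k/2}$. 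There is no indication of how the ``careful double count of incidences'' would close this gap, and the club configuration (all $T_t$ concurrent through a single point $\alpha_0$ of weight $k-1$) shows the claimed step is tight, so any counting argument must be slack-free and must use the $\fq$-linearity of $g$ in an essential, quantitative way; this is precisely the hard content of the result, not a routine finishing touch. Note also that the bare induction hypothesis $|S'|\geq q^{k-2}$ does not by itself interact with the new lines, so even the shape of the inductive step would need strengthening (e.g.\ a statement depending on the weight distribution of $L_{W'}$).

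For context: the paper does not prove this statement at all; it quotes it from De Beule and Van de Voorde (\cite[Theorem~1.2]{DeBeuleVdV}), with the case $k=n$ going back to Bonoli and Polverino (\cite[Lemma~2.2]{BoPol}). The published proof is substantially more involved than incidence counting between the lines $T_t$, which is consistent with the difficulty you ran into. As it stands, your proposal is a correct reduction plus a conjectural key inequality, i.e.\ it has a genuine gap and does not yet constitute a proof.
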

%%%%%%%%%%%%%%%%%%%

%%%%%%%%%%%%%%%%%%%
\begin{remark}
\label{rmk:sizelinsetk=n+1}
Note that Theorem \ref{th:sizelinset} also holds when $k=n+1$. 
Indeed, in that case we have $L_W=\PG (1,q^n)$ and so $|L_W|=q^n+1=q^{k-1}+1$.
\end{remark}
%%%%%%%%%%%%%%%%%%%

In this paper, we will mainly consider $V=\fqn\times\fqn$ and linear sets of the form
\[ L_{U\times U}=\{\langle (u_1,u_2)\rangle_{\fqn} \colon u_1,u_2 \in U, \,\,(u_1,u_2)\ne(0,0)\} \]
for some $k$-dimensional $\fq$-subspace $U$ of $\fqn$. 
Thus $(U\!\!\times\!\! U)\rrk(L_{U \times U})=2k$.
Note that the projective space is given by 
$\Lambda=\PG(V,q^n)=\{\subspace{(1,\alpha)}_{\fqn}\colon \alpha\in\fqn\}\cup\{\subspace{(0,1)}_{\fqn}\}$.
Now we compute
\begin{equation}\label{e-Weight01}
  w_{U\times U}(\subspace{(1,0)}_{\fqn})=\dim_{\fq}(U\times \{0\})=\dim_{\fq} (U)=k= w_{U\times U}(\subspace{(0,1)}_{\fqn}).
\end{equation}
Furthermore, for any $\alpha\in\fqn^*$ we have  
\begin{equation}\label{e-UUalpha}
    (U\times U)\cap \subspace{(1,\alpha)}_{\fqn}=\{(u,\alpha u)\colon u\in U\cap\alpha^{-1} U\}
\end{equation}
and thus
\begin{equation}\label{e-lessk}
   w_{U\times U}(\subspace{(1,\alpha)}_{\fqn})=\dim_{\fq}(U\cap\alpha^{-1}U)=\dim_{\fq}(U\cap\alpha U)\leq k.
\end{equation}
In particular, $L_{U \times U}$  is a linear set with two points of complementary weights.
Moreover, if $\alpha$ is in~$\fq$, then $w_{U\times U}(\la (1,\alpha)\ra_{\fqn})=k$, and therefore
\[
    N_k\geq q+1.
\]
Furthermore, we have $N_k = q+1$ iff $\alpha U\neq U$ for all $\alpha\in\fqn\setminus\fq$.
This last condition means that~$U$ is not closed under multiplication by any scalars outside~$\fq$. 
We cast the following definition.

%%%%%%%%%%%%%%%
\begin{definition}\label{D-MaxFieldLin}
Let $W$ be an $\F_q$-subspace of~$\fqn$.
We call $\F_{q^t}$ the \textbf{maximum field of linearity} of~$W$ if it is the largest subfield of $\fqn$ such that~$W$ is 
closed under multiplication by scalars from~$\F_{q^t}$. 
\end{definition}
%%%%%%%%%%%%%%%%%

We have the following upper bound for the size of $L_{U\times U}$, which improves upon \eqref{eq:card}.

%%%%%%%%%%%%%%%%%%%
\begin{proposition} \label{upper boundlinearsetcompwe}
Let $U$ be an $\fq$-subspace of $\fqn$ of dimension $k\leq n$ and assume that $\fq$ is the maximum field of linearity of $U$.  
Let $e=\min\{w_{U\times U}(P) \colon P \in L_{U\times U} \}$.
Then
\[
    |L_{U\times U}|\leq \frac{(q^k-1)(q^k+1)}{q^e-1}-(q+1)\frac{(q^k-q^e)}{q^e-1}. 
\]
\end{proposition}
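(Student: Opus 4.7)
The plan has three main moves, organized around the identity \eqref{eq:pesi} applied to the rank-$2k$ subspace $U\times U$.

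First I would pin down $N_k$ exactly. The set $S=\{\alpha\in\fqn^*\colon \alpha U=U\}\cup\{0\}$ is closed under multiplication, inversion of nonzero elements, and addition (for $\alpha,\beta\in S$, $(\alpha+\beta)U\subseteq U$, and a dimension count forces equality unless $\alpha+\beta=0$). Hence $S$ is a subfield of $\fqn$ and coincides with the maximum field of linearity of $U$, which by hypothesis is $\fq$. Combining this with \eqref{e-Weight01} and \eqref{e-lessk}, a point $\la(1,\alpha)\ra_{\fqn}$ has $(U\times U)$-weight $k$ iff $\alpha U=U$ iff $\alpha\in\fq$; together with the weight-$k$ point $\la(0,1)\ra_{\fqn}$ this gives exactly $N_k=q+1$.

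Second, I would apply the pair-counting identity (the analogue of \eqref{eq:pesi} for the $\fq$-subspace $U\times U$ of rank $2k$). Since every point of $L_{U\times U}$ has weight at least $e$ and at most $k$, counting pairs $(P,\mathbf{w})$ with $\mathbf{w}\in(U\times U)\setminus\{(0,0)\}$ and $P=\la\mathbf{w}\ra_{\fqn}$ yields
\[
    \sum_{i=e}^{k} N_i(q^i-1)=q^{2k}-1=(q^k-1)(q^k+1).
\]
Substituting $N_k=q+1$ and rearranging,
\[
    \sum_{i=e}^{k-1} N_i(q^i-1)=(q^k-1)(q^k+1)-(q+1)(q^k-1)=(q^k-1)(q^k-q).
\]

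Third, since $q^i-1\geq q^e-1$ for every $i\geq e$, the last equation gives
\[
    \sum_{i=e}^{k-1} N_i\leq\frac{(q^k-1)(q^k-q)}{q^e-1},
\]
so, adding $N_k=q+1$,
\[
    |L_{U\times U}|\leq\frac{(q^k-1)(q^k-q)}{q^e-1}+(q+1)=\frac{(q^k-1)(q^k-q)+(q+1)(q^e-1)}{q^e-1}.
\]
A routine simplification, using $q^{k+1}+q^k-q^{e+1}-q^e=(q+1)(q^k-q^e)$, rewrites the numerator as $(q^k-1)(q^k+1)-(q+1)(q^k-q^e)$, which gives exactly the claimed bound.

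The only nontrivial step is the first one: verifying that the scalar stabilizer of $U$ is a subfield so that the hypothesis on the maximum field of linearity forces the equality $N_k=q+1$ rather than just $N_k\geq q+1$. Once that is in hand, the remainder is a transparent application of the rank-$2k$ weight-counting identity combined with the bound $q^i-1\geq q^e-1$.
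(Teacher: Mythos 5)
Your proof is correct and follows essentially the same route as the paper: establish $N_k=q+1$ from the hypothesis that $\fq$ is the maximum field of linearity (your verification that the scalar stabilizer is a subfield is exactly the fact the paper invokes), then apply the counting identity \eqref{eq:pesi} for the rank-$2k$ subspace $U\times U$ with $N_i=0$ for $i<e$ and $i>k$, and bound each $q^i-1$ below by $q^e-1$. The final algebraic simplification indeed recovers the stated bound, so there is nothing to add.
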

%%%%%%%%%%%%%%%%%%%

\begin{proof}
From \eqref{e-lessk}  we know already that the maximum weight of the points of $L_{U\times U}$ is $k$. 
Furthermore, for any $\alpha\in\fqn^*$ we have 
$w_{U\times U}(\langle(1,\alpha)\rangle_{\fqn})=k\Longleftrightarrow\dim_{\fq}(U\cap\alpha U)=k\Longleftrightarrow\alpha\in\fq$,
where the last step follows from the fact that $\fq$ is the maximum field of linearity of $U$.
Together with \eqref{e-Weight01} we arrive at $N_k=q+1$.
Since $N_1=N_2=\ldots=N_{e-1}=0$, we obtain from  \eqref{eq:pesi} 
\[
   N_e(q^e-1)+\ldots+N_{k-1}(q^{k-1}-1)+(q+1)(q^k-1)=q^{2k}-1.
 \]
Using the previous equality together with $q^i-1\geq q^e-1$ for any $i\in\{e,\ldots, k-1\}$ and the identity $|L_{U\times U}|=N_e+\ldots+N_{k-1}+q+1$, 
we get 
\[
   (|L_{U\times U}|-(q+1))(q^e-1)+(q+1)(q^k-1)\leq q^{2k}-1 
\]
and therefore
\[
    |L_{U\times U}|\leq \frac{(q^k-1)(q^k+1)}{q^e-1}-(q+1)\frac{(q^k-q^e)}{q^e-1}. 
    \qedhere
\]
\end{proof}

%%%%%%%%%%%%%%%%%%%
\section{Cyclic orbit codes}

In this short section we revisit some basic notions about cyclic orbit codes.
Recall the notation $\cG_q(k,n)$ for the Grassmannian of $k$-dimensional $\F_q$-subspaces of~$\fqn$.

%%%%%%%%%%%%%%%%%%%
\begin{theorem}\cite[Def.~19]{Trautmann}, \cite[Def.~2.1]{GLTroha15}\label{th:orbitsize}
Let $U\in\cG_q(k,n)$. The \textbf{cyclic orbit code generated by}~$U$ is defined as 
$\Orb(U)=\{\alpha U\colon \alpha\in\fqn^*\}$, and $\mathrm{Stab}(U)=\{ \alpha \in \fqn^* \colon \alpha U=U \}$ is called 
the \textbf{stabilizer} of $U$.
It is thus the multiplicative group of the maximum field of linearity; see \cref{D-MaxFieldLin}.
Clearly, a subfield $\F_{q^t}$ is the maximum field of linearity of~$U$ if and only if 
$|\Orb(U)|=\frac{q^n-1}{q^t-1}$.
In the case where $t=1$, we call $\Orb(U)$ a \textbf{full-length orbit code}.
\end{theorem}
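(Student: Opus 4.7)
The plan is to establish two facts hidden in the statement: (a) $\mathrm{Stab}(U)\cup\{0\}$ is precisely the maximum field of linearity of $U$, and (b) $\F_{q^t}$ is the maximum field of linearity of~$U$ if and only if $|\Orb(U)|=(q^n-1)/(q^t-1)$. Part (b) will follow from part (a) together with the orbit-stabilizer theorem for the natural action of $\fqn^*$ on $\cG_q(k,n)$ by scalar multiplication.

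For (a), I would first argue that $S:=\mathrm{Stab}(U)\cup\{0\}$ is a subfield of $\fqn$ containing~$\fq$. Closure under multiplication and inverses is immediate, since $\mathrm{Stab}(U)$ is a subgroup of $\fqn^*$ (verification is routine: $\alpha U=\beta U=U$ implies $(\alpha\beta)U=U$ and $\alpha^{-1}U=U$). The only genuinely new point is closure under addition: if $\alpha,\beta\in\mathrm{Stab}(U)$ with $\alpha+\beta\ne 0$, then for every $u\in U$ we have $(\alpha+\beta)u=\alpha u+\beta u\in U$ because $U$ is an $\fq$-subspace and hence closed under addition; thus $(\alpha+\beta)U\subseteq U$, and equality follows from $\dim_{\fq}((\alpha+\beta)U)=\dim_{\fq}U$. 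Finally $\fq\subseteq S$ because $U$ is $\fq$-linear. So $S=\F_{q^t}$ for some divisor $t$ of~$n$.

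Next I would check that this $\F_{q^t}$ is the maximum field of linearity in the sense of \cref{D-MaxFieldLin}. On the one hand, any subfield $\F_{q^s}$ under multiplication by which~$U$ is closed satisfies $\alpha U\subseteq U$ for all $\alpha\in\F_{q^s}^*$, and since $\dim_{\fq}(\alpha U)=\dim_{\fq}U$ this forces $\alpha U=U$, i.e.\ $\F_{q^s}^*\subseteq\mathrm{Stab}(U)$, hence $\F_{q^s}\subseteq S$. On the other hand, $S$ itself is a subfield under multiplication by which $U$ is closed. Thus $S$ is the unique largest such subfield, proving claim (a).

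For (b), the orbit-stabilizer theorem applied to the $\fqn^*$-action gives $|\Orb(U)|\cdot|\mathrm{Stab}(U)|=q^n-1$, and by (a) we have $|\mathrm{Stab}(U)|=q^t-1$ when $\F_{q^t}$ is the maximum field of linearity, yielding the forward direction. Conversely, if $|\Orb(U)|=(q^n-1)/(q^t-1)$ and $\F_{q^s}$ denotes the maximum field of linearity, then the same identity gives $q^t-1=q^s-1$, forcing $s=t$. I do not foresee a serious obstacle; the only subtlety worth highlighting is the closure of $S$ under addition, which is the step that crucially uses the $\fq$-linearity of~$U$ (and not merely the group structure of $\mathrm{Stab}(U)$).
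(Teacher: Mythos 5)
Your argument is correct: showing that $\mathrm{Stab}(U)\cup\{0\}$ is a subfield (with additive closure coming from the $\fq$-linearity of $U$), identifying it with the maximum field of linearity, and then invoking the orbit-stabilizer theorem is exactly the standard reasoning behind this statement. The paper itself offers no proof here — it records the facts as known, citing \cite{Trautmann} and \cite{GLTroha15}, where the stabilizer subfield is treated in essentially the same way — so your proposal fills in the omitted argument faithfully and without gaps.
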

%%%%%%%%%%%%%%%%%%%

Note that the minimum distance of $\Orb(U)$ reads as 
\[ 
   d(\Orb(U))=2k-2\max\{ \dim_{\fq}(U\cap \alpha U) \colon \alpha \in \fqn^*\setminus\mathrm{Stab}(U)\}. 
\]

For the next definition, we need the following notation.
Let $\F_{q^t}$ be a subfield of~$\fqn$.
The projective space over $\fqn$ considered as an $\F_{q^t}$-vector space is denoted by $\mathbb{P}_t(\fqn)$.
Its points are of the form $\alpha\F_{q^t}$, where $\alpha \in \fqn^*$.
If $t=1$, we will just write $\mathbb{P}(\fqn)$ and use $\overline{\alpha}:=\alpha\F_q$ for its points.
Suppose $\mathrm{Stab}(U)=\F_{q^t}^*$. 
Then $\alpha U=\beta U\Longleftrightarrow\alpha\beta^{-1}\in \F_{q^t}$, and therefore we have
a one-to-one correspondence between the points of $\mathbb{P}_t(\fqn)$ 
and the elements of $\mathrm{Orb}(U)$. 

For the following definition, we refer to \cite[Definition 2.5, Definition 2.6]{heideweight}. 
Note that the defined notions only depend on $\Orb(U)$ and not on its generating subspace~$U$.

%%%%%%%%%%%%%%%%%%%
\begin{definition}\label{D-WeightInterDistr}
Let $U\in\cG_q(k,n)$. Define 
\[\omega_{2i}=|\lbrace \alpha U\in\Orb(U)\colon \alpha\in\fqn^*,\, d(U,\alpha U)=2i\rbrace|\] 
for $i\in \{1,\dots,k\}$. We call $(\omega_2,\dots,\omega_{2k})$ \textbf{distance distribution} of $\Orb(U)$.
Furthermore, the minimum distance is $d(\Orb(U))=2(k-\ell)$, where 
\[\ell =\max \{\dim_{\fq}(U\cap \alpha U) \colon \alpha \in\fqn^*\setminus\mathrm{Stab}(U)\}.\]
We call $\ell$ the \textbf{maximum intersection dimension} of $\Orb(U)$. 
For $i \in \{0,\ldots,\ell\}$ we define $\lambda_i=|\mathcal{L}_i|$, where
\[\mathcal{L}_i=\{ \overline{\alpha}\in \mathbb{P}(\fqn) \colon \dim_{\fq}(U\cap \alpha U)=i \},\]
and call $(\lambda_0,\ldots,\lambda_{\ell})$ the \textbf{intersection distribution} of $\Orb(U)$.
Note that if $\text{Stab}(U)=\F_{q^t}^*$, then $\lambda_i=(q^t-1)/(q-1)\omega_{2(k-i)}$ for all $i\in \{0,\ldots,\ell\}$.
\end{definition}
%%%%%%%%%%%%%%%%%%%

One can easily see that a full-length orbit code generated by a $k$-dimensional subspace has distance at most $2k-2$; see also 
\cite[Remark 2.7]{heideweight}.

%%%%%%%%%%%%%%%%%%%
\begin{definition}[\mbox{see also \cite[Definition 2.8]{heideweight}}]
Let $U\in\cG_q(k,n)$ and suppose that $\cO=\Orb(U)$  is a full-length orbit code.
Then $\cO$ is called an \textbf{optimal full-length orbit code}  if $d(\cO)=2k-2$, and it is 
a \textbf{quasi-optimal full-length orbit code} if $d(\cO)=2k-4$.
\end{definition}
%%%%%%%%%%%%%%%%%%%

Recall that a subspace $U\in\cG_q(k,n)$ is a \textbf{Sidon space} if for all $a,b,c,d\in U\setminus\{0\}$ the identity $ab=cd$ implies $\{a\F_q,b\F_q\}=\{c\F_q,d\F_q\}$. 
The interest in Sidon spaces increased especially after \cite{Roth} in which the authors proved the following equivalence between optimal full-length orbit codes and Sidon spaces.

%%%%%%%%%%%%%%%%%%%
\begin{theorem}\cite[Lemma 34]{Roth}\label{lem:charSidon}
Let $U \in\cG_q(k,n)$. Then~$U$ is a Sidon space if and only if $\mathrm{Orb}(U)$ is an optimal full-length orbit code.
\end{theorem}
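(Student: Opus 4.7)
The plan is to prove both implications by translating the algebraic Sidon condition into statements about the intersection subspaces $U \cap \alpha U$, which directly encode the minimum distance of $\mathrm{Orb}(U)$ via \cref{D-WeightInterDistr}. I implicitly assume $k\geq 2$, the meaningful case.

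For the forward direction, assume $U$ is Sidon. To establish the full-length condition, take any $\alpha \in \mathrm{Stab}(U)$ and choose $\F_q$-linearly independent $u,v \in U$. Since $\alpha U = U$, the identity $u\cdot(\alpha v) = v\cdot(\alpha u)$ is a nontrivial product equality inside $U\setminus\{0\}$, so the Sidon property applies and, after ruling out $u\F_q = v\F_q$ by independence, forces $\alpha \in \F_q$. For the distance bound, I would assume for contradiction that $\dim_{\F_q}(U \cap \alpha U) \geq 2$ for some $\alpha \in \F_{q^n}^* \setminus \F_q$; picking $\F_q$-linearly independent elements $u_i = \alpha v_i \in U \cap \alpha U$ with $v_i \in U\setminus\{0\}$ yields the identity $u_1 v_2 = v_1 u_2$ with all four factors in $U\setminus\{0\}$, and Sidon then forces either $\alpha \in \F_q$ (from $u_1\F_q = v_1\F_q$) or $u_1,u_2$ to be $\F_q$-proportional (from $u_1\F_q = u_2\F_q$), both contradictions.

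For the converse, assume $\mathrm{Orb}(U)$ is optimal full-length and test the Sidon condition directly. Given $ab = cd$ with $a,b,c,d \in U \setminus \{0\}$, set $\alpha = a c^{-1} = d b^{-1}$, so that $a = \alpha c$ and $d = \alpha b$ both lie in $U \cap \alpha U$. If $\alpha \in \F_q$, the required projective equality $\{a\F_q, b\F_q\} = \{c\F_q, d\F_q\}$ is immediate. Otherwise $\alpha \notin \mathrm{Stab}(U) = \F_q^*$ by full-length, and the minimum distance $2(k-1)$ forces $\dim_{\F_q}(U \cap \alpha U) \leq 1$, hence $d = \lambda a$ for some $\lambda \in \F_q^*$. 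Substituting back into $ab = cd$ gives $b = \lambda c$, from which the Sidon conclusion follows.

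The main obstacle is spotting the correct product identities to feed into the Sidon condition in the forward direction; once the two natural choices $u(\alpha v) = v(\alpha u)$ (for the stabilizer step) and $u_1 v_2 = v_1 u_2$ (for the distance step) are in hand, the case analyses dictated by the Sidon condition are routine bookkeeping.
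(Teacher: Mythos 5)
The paper does not prove this statement itself---it is quoted from \cite[Lemma 34]{Roth}---and your argument is the standard one, reducing the Sidon condition in both directions to the statement that $\dim_{\F_q}(U\cap\alpha U)\le 1$ for all $\alpha\in\fqn^*\setminus\F_q$, together with $\mathrm{Stab}(U)=\F_q^*$. Both implications check out: the identities $u\cdot(\alpha v)=v\cdot(\alpha u)$ and $u_1v_2=v_1u_2$ do exactly the work needed in the forward direction, and the factorization $\alpha=ac^{-1}=db^{-1}$ together with $\dim_{\F_q}(U\cap\alpha U)\le 1$ handles the converse cleanly. The one loose end is in the forward direction: you prove that $\Orb(U)$ is full-length and that $d(\Orb(U))\ge 2k-2$, but optimality as defined in the paper means $d(\Orb(U))=2k-2$ exactly, so you must still exclude $d(\Orb(U))=2k$. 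This is the easy fact the paper records just before the theorem: a full-length orbit code with $k\ge 2$ cannot be a (partial) spread, since its $(q^n-1)/(q-1)$ codewords each contain $q^k-1$ nonzero vectors and $(q^k-1)(q^n-1)/(q-1)>q^n-1$, so two distinct codewords $\alpha U\neq \beta U$ share a nonzero vector, giving some $\alpha\notin\mathrm{Stab}(U)$ with $U\cap\alpha U\neq\{0\}$. Adding that one line (and noting, as you did, that $k\ge 2$ is needed throughout) closes the argument.
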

%%%%%%%%%%%%%%%%%%%

In the next section we will see, via a duality argument,  that we can always assume $k\leq n/2$ for the dimension~$k$ of a generator of an orbit code. 

%%%%%%%%%%%%%%%%%%%
\section{A bound on $k$ for cyclic orbit codes with prescribed minimum distance}\label{S-bound}

In this section, we provide an upper bound on the dimension~$k$ of the subspace $U\in \cG_q(k,n)$ depending on the minimum distance 
of $\Orb(U)$.
Moreover, we recall a result from \cite{CPZ2024IEEE} classifying the cyclic orbit codes generated by $3$-dimensional spaces.

We start with the definition of the orthogonal complement of an $\fq$-subspace of $\fqn$.
Recall the trace function from $\fqn$ to $\fq$, defined as
\[
      \Tr_{q^n/q}: a\in\fqn \longmapsto \sum_{i=0}^{n-1} a^{q^i}\in\fq.
\]
It is an $\fq$-linear map and gives rise to a nondegenerate, symmetric, bilinear form
\[
                 \sigma \colon (a,b)\in\fqn\times\fqn \longmapsto \Tr_{q^n/q}(ab).
\]
For any subset $U\subseteq\fqn $ the orthogonal complement $U^{\perp}$ is defined as
\[
                 U^{\perp}=\{a\in\fqn\colon \Tr_{q^n/q}(ab)=0 \text{ for all }b\in U\}.
\]
As for any nondegenerate, symmetric, bilinear form we have $\dim_{\fq}(U^{\perp})=n-\dim_{\fq}(U)$ and 
$(U+W)^{\perp}=U^{\perp}\cap W^{\perp}$.
Furthermore, the following can easily be verified.

%%%%%%%%%%%%%%%%%%%
\begin{proposition}
\label{prop:orthcompl}
Let $U,W$ be $\fq$-subspaces of $\fqn$ and $\alpha\in\fqn^*$. Then
   \begin{enumerate}[label=({\arabic*}), leftmargin=1.5em,labelwidth=1.5em,labelsep=0em,align=left, topsep=-1.3ex]
    \item $(\alpha U)^{\perp}=\alpha^{-1}U^{\perp}$;
    \item $d(U^\perp,W^\perp)=d(U,W)$;
    \item $d(U^{\perp},\alpha^{-1}U^{\perp})=d(U,\alpha U)$.
\end{enumerate} 
As a consequence, $(\mathrm{Orb}(U))^{\perp}=\mathrm{Orb}(U^{\perp})$, and the cyclic orbit codes $\mathrm{Orb}(U)$ and $\mathrm{Orb}(U^\perp)$ have the same length and distance distribution.
\end{proposition}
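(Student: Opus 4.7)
The plan is to verify (1)--(3) in that order, essentially by unwinding the definitions and invoking the standard orthogonality identity $(U+W)^{\perp}=U^{\perp}\cap W^{\perp}$ recalled just before the statement. The concluding assertion about $\Orb(U^{\perp})$ will then follow by transporting $\Orb(U)$ to $\Orb(U^{\perp})$ via the involution $V\mapsto V^{\perp}$, using (1) for surjectivity onto $\Orb(U^{\perp})$ and (2) to guarantee that distances are preserved.

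For (1) I would work directly from the definition: $a\in(\alpha U)^{\perp}$ iff $\Tr_{q^n/q}(a\cdot \alpha u)=0$ for every $u\in U$; since $\alpha\in\fqn^*$ and multiplication in $\fqn$ is associative, this rewrites as $\Tr_{q^n/q}((a\alpha)u)=0$ for all $u\in U$, i.e.\ $a\alpha\in U^{\perp}$, equivalently $a\in\alpha^{-1}U^{\perp}$. For (2), one reduces to the case $\dim_{\fq}U=\dim_{\fq}W=k$ (which is all that the orbit-code application requires); combining $\dim_{\fq}U^{\perp}=n-k$, $(U+W)^{\perp}=U^{\perp}\cap W^{\perp}$, and the Grassmann formula applied to $U+W$ yields
\[
 \dim_{\fq}(U^{\perp}\cap W^{\perp})=n-2k+\dim_{\fq}(U\cap W),
\]
which, substituted into $d(U^{\perp},W^{\perp})=2(n-k)-2\dim_{\fq}(U^{\perp}\cap W^{\perp})$, collapses to $2k-2\dim_{\fq}(U\cap W)=d(U,W)$. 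Item (3) is then immediate by chaining (1) and (2): $d(U^{\perp},\alpha^{-1}U^{\perp})=d(U^{\perp},(\alpha U)^{\perp})=d(U,\alpha U)$.

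For the final consequence, I would observe that by (1) the map $V\mapsto V^{\perp}$ sends $\alpha U\in\Orb(U)$ to $\alpha^{-1}U^{\perp}\in\Orb(U^{\perp})$, and as $\alpha$ ranges over $\fqn^*$ so does $\alpha^{-1}$; since $V\mapsto V^{\perp}$ is an involution it is bijective, whence $\Orb(U)^{\perp}=\Orb(U^{\perp})$ and in particular the two orbit codes have the same length. Moreover, this bijection sends $U\mapsto U^{\perp}$ and preserves subspace distances by (2), so for each $i$ the codewords $V\in\Orb(U)$ with $d(U,V)=2i$ are matched with the codewords $V'\in\Orb(U^{\perp})$ with $d(U^{\perp},V')=2i$, giving equality of the two distance distributions. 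I do not expect a real obstacle here: the proof is a string of standard manipulations with the nondegenerate trace form. The only minor bookkeeping point is checking that $V\mapsto V^{\perp}$ descends to a well-defined map on the orbit — that is, $\alpha U=\beta U\Longrightarrow\alpha^{-1}U^{\perp}=\beta^{-1}U^{\perp}$ — but this is an immediate consequence of (1).
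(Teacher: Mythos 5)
Your proposal is correct, and it carries out exactly the routine verification the paper intends: the paper states this proposition with the remark that it ``can easily be verified'' and gives no proof, and your chain of arguments --- (1) from the definition of the trace form, (2) from $(U+W)^{\perp}=U^{\perp}\cap W^{\perp}$ together with dimension counting, (3) by combining (1) and (2), and the consequence via the involution $V\mapsto V^{\perp}$ --- is precisely the standard argument being alluded to. No gaps; the reduction of (2) to equal dimensions is harmless since the paper's subspace distance is only used on equidimensional subspaces anyway.
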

%%%%%%%%%%%%%%%%%%%

Let $U\in\cG_q(k,n)$. If $d(\mathrm{Orb}(U))=2k-4$, then 
\[
\dim_{\fq}(U\cap \alpha U)\leq 2,
\]
for every $\alpha \in \fqn$ such that $U\neq \alpha U$.
This immediately implies that
\[n\geq \dim_{\fq}(U + \alpha U)=2k-\dim_{\fq}(U\cap \alpha U)\geq 2k-2,\]
that is 
\[k \leq \frac{n+2}2.\]
We can improve the bound as follows.

%%%%%%%%%%%%%%%%%%%
\begin{proposition} \label{boundquasioptimal}
Let $U\in\cG_q(k,n)$, and suppose that $d(\mathrm{Orb}(U))=2(k-\ell)>2$. 
Then 
\[
   k\leq\frac{n+\ell}{2}.
\]
Furthermore, if $\mathrm{Orb}(U)$ is a full-length orbit code, then $k<\frac{n+\ell}{2}$.
In this case we have in particular
\begin{enumerate}[label=(\alph*),leftmargin=1.5em,labelwidth=1.5em,labelsep=0em,align=left, topsep=-1.3ex]
    \item If $\ell=1$ then $k\leq\frac{n}{2}$.
    \item If $\ell=2$ and $n$ is odd, then $k\leq \frac{n+1}{2}$. Moreover, $k=\frac{n+1}{2}\,$ $\Longleftrightarrow U^\perp$ is a Sidon space.
\end{enumerate}
Note that in Case~(a) $\mathrm{Orb}(U)$ is an optimal full-length orbit code (and~$U$ is a Sidon space), while in Case~(b) it is a  quasi-optimal full-length orbit code.
\end{proposition}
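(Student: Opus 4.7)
My proof plan is as follows. The first inequality $k\leq(n+\ell)/2$ comes from an elementary dimension count. Since $d(\Orb(U))=2(k-\ell)>2$ we have $\ell<k-1$, so $U\neq\fqn$ and hence there exists $\alpha\in\fqn^*\setminus\mathrm{Stab}(U)$. For such an $\alpha$ the definition of $\ell$ gives $\dim_{\fq}(U\cap\alpha U)\leq\ell$, and the Grassmann formula together with $\dim_{\fq}(U+\alpha U)\leq n$ yields $2k-\ell\leq\dim_{\fq}(U+\alpha U)\leq n$.

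For the strict inequality under the full-length hypothesis I would argue by contradiction using the duality of \cref{prop:orthcompl}. Suppose $2k=n+\ell$ and $\Orb(U)$ is full-length. A direct consequence of \cref{prop:orthcompl}(1) is $\mathrm{Stab}(U^\perp)=\mathrm{Stab}(U)=\fq^*$, so $\Orb(U^\perp)$ is again full-length, and by \cref{prop:orthcompl}(3) it has the same distance distribution as $\Orb(U)$. Setting $k'=n-k$, the maximum intersection dimension $\ell'$ of $\Orb(U^\perp)$ satisfies $2(k'-\ell')=2(k-\ell)$, whence $\ell'=n-2k+\ell=0$. Thus the $(q^n-1)/(q-1)$ subspaces in $\Orb(U^\perp)$ pairwise intersect only at zero, and the disjoint union of their nonzero parts inside $\fqn\setminus\{0\}$ yields $(q^n-1)(q^{k'}-1)/(q-1)\leq q^n-1$, i.e.\ $k'\leq 1$. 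The case $k'=0$ is trivial, while $k'=1$ gives $d(\Orb(U^\perp))=2$, contradicting $d(\Orb(U))>2$.

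Parts (a) and (b) are then immediate by integrality: $k<(n+1)/2$ forces $k\leq n/2$ in (a), and for (b) the number $(n+2)/2$ is a half-integer when $n$ is odd, so $k<(n+2)/2$ gives $k\leq(n+1)/2$. The biconditional follows from \cref{lem:charSidon} applied to $U^\perp$: since $\Orb(U^\perp)$ is full-length and by duality shares the distance $2(k-2)$ of $\Orb(U)$, it is optimal full-length, equivalently $U^\perp$ is a Sidon space, precisely when $2(k-2)=2(n-k)-2$, i.e.\ when $k=(n+1)/2$.

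The step I expect to be the main obstacle is establishing the strict inequality in the full-length case: the naive dimension count only supplies the weak bound, and strictness has to be extracted by dualizing and reinterpreting the extremal configuration as a partial spread whose packing condition is incompatible with $d>2$. Everything else is integrality bookkeeping plus a direct invocation of \cref{lem:charSidon}.
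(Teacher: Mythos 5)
Your argument is correct, and it follows the same overall skeleton as the paper: the weak bound by the Grassmann formula, then dualization via \cref{prop:orthcompl} (including the observation that $\mathrm{Stab}(U^\perp)=\mathrm{Stab}(U)$, so $\Orb(U^\perp)$ is again full-length with the same distance), and part (b) via \cref{lem:charSidon} applied to $U^\perp$ exactly as in the paper. The one genuine difference is how the extremal case $2k=n+\ell$ is excluded. There, both you and the paper observe that $\Orb(U^\perp)$ would have pairwise trivial intersections (distance $2\hat{k}$, i.e.\ a spread). The paper then cites an external structural fact (\cite[Remark 2.7]{heideweight}) that such an orbit forces $\hat{k}\mid n$ and $U^\perp$ to be a multiplicative coset of $\F_{q^{\hat{k}}}$, so that $|\Orb(U^\perp)|=(q^n-1)/(q^{\hat{k}}-1)$, which is incompatible with full length unless $\hat{k}=1$, yielding $d=2$. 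You instead kill the configuration by a direct packing count: $(q^n-1)(q^{k'}-1)/(q-1)\leq q^n-1$ forces $k'\leq 1$, and $k'=1$ again gives $d(\Orb(U^\perp))=2(k'-\ell')=2$, contradicting $d>2$. Your counting step is self-contained and more elementary (no appeal to the classification of spread-type orbits), at the cost of a few extra lines; the paper's version is shorter given the citation, and as a by-product identifies the extremal $U^\perp$ explicitly as a subfield coset. Both are complete proofs of the statement.
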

%%%%%%%%%%%%%%%%%%%

\begin{proof}
$d(\mathrm{Orb}(U))=2(k-\ell)$ implies $\dim_{\fq}(U\cap \alpha U)\leq \ell$ for all $\alpha \in \fqn$ such that $U\neq \alpha U$.
This leads to $n\geq \dim_{\fq}(U + \alpha U)=2k-\dim_{\fq}(U\cap \alpha U)\geq 2k-\ell$,
and thus
\[k \leq \frac{n+\ell}2.\]
Suppose now that $\mathrm{Orb}(U)$ is a full-length orbit code and $k=\frac{n+\ell}{2}$. Then $\dim_{\fq}(U^\perp):=\hat{k}=\frac{n-\ell}{2}$. 
By \cref{prop:orthcompl} $d(\mathrm{Orb}(U^\perp))=d(\mathrm{Orb}(U))=2(k-\ell)=n-\ell=2\hat{k}$. 
Thus $\mathrm{Orb}(U^\perp)$ is a spread. 
In particular, $\hat{k}$ divides~$n$ and $U^\perp$ is a multiplicative coset of $\F_{q^{\hat{k}}}$ in $\F_{q^n}$; see 
\cite[Remark~2.7]{heideweight}.
But this means that $|\mathrm{Orb}(U)|=|\mathrm{Orb}(U^\perp)|=(q^n-1)/(q^{\hat{k}}-1)$.
Thus, if $\mathrm{Orb}(U)$ is a full-length orbit code, then $\hat{k}=1,\,k=n-1$, and  $\ell=n-2$.
Hence $d(\mathrm{Orb}(U))=2(k-\ell)=2$, contradicting our assumption.
All of this shows that $k<\frac{n+\ell}{2}$.
\\
Let us now turn to the special cases (a) and (b) assuming that $\mathrm{Orb}(U)$ is a full-lenght orbit code.
(a) is clear.
(b) We know already that $k\leq\frac{n+1}{2}$. Next, $k=\frac{n+1}{2}\Longleftrightarrow\dim_{\fq}(U^\perp)=\hat{k}=\frac{n-1}{2}$ and 
$2(k-2)=n-3=2\hat{k}-2$.
Hence $d(\mathrm{Orb}(U))=2(k-2)$ is equivalent to $\mathrm{Orb}(U^\perp)$ being an optimal full-length orbit code.
The latter is equivalent to $U^\perp$ being a Sidon space thanks to \cref{lem:charSidon}.
\end{proof}

Let us briefly discuss the existence of optimal and quasi-optimal full-length orbit codes of dimension $n/2$ and $(n-1)/2$.

%%%%%%%%%%%%%%%%%%%
\begin{remark}\label{R-n2}
\begin{enumerate}[label=(\alph*),leftmargin=1.5em,labelwidth=1.5em,labelsep=0em,align=left, topsep=-1.3ex]
\item Let $n$ be even and $k=n/2$. 
        In \cite[Thm.~12 and Thm.~16]{Roth} Sidon spaces in $\cG_q(n/2,n)$ are constructed for all $q\geq 3$.
       For $q=2$, it is known that there exist optimal full-length orbit codes for $n=4,\,6$ and that there does not exist 
            an optimal full-length orbit code for $n=8$; see \cite[Example~4.9]{GLTroha15}.
   \item Let~$n$ be odd and $k=(n+1)/2$. As we have seen above, the existence of optimal full-length orbit codes is equivalent to the existence of 
           a Sidon space of dimension $k=(n-1)/2$ in $\fqn$. \\
           For $n=5$ and $k=2$, any $2$-dimensional $\fq$-subspace $U$ of $\F_{q^5}$ defines an optimal full length orbit code (see Remark \ref{rmk:k=2nooptimal}) and so by Theorem \ref{lem:charSidon} $U$ is a Sidon space.\\
           For $n=7$, $k=3$ we can consider $U=\langle 1, \lmb, \lmb^3\rangle_{\fq}$ where $\F_{q^7}=\fq(\lmb)$. Indeed, note that $U^2:=\langle uv: u,v\in U\rangle_{\fq}=\langle 1, \lmb, \lmb^2,\lmb^3,\lmb^4,\lmb^6\rangle_{\fq}$, so $\dim_{\fq}(U^2)=6=\min\lbrace n, \binom{k+1}{2}\rbrace$. Thus by \cite[Lemma 20]{Roth}, $U$ is a Sidon space.
           \\
           For $n=9,\,k=4$, and $q=2,3$ random searches easily produce Sidon spaces,
           but as to our knowledge no constructions are known. 
\end{enumerate}
\end{remark}
%%%%%%%%%%%%%%%%%%%

For dimension less than~$3$, the orbit codes are easily understood.
In particular, there do not exist quasi-optimal full-length orbit codes in $\mathcal{G}_q(1,n)$ and $\mathcal{G}_q(2,n)$. 

%%%%%%%%%%%%%%%%%%%
\begin{remark}
\label{rmk:k=2nooptimal}
Note that every subspace $U$ in $\cG_q(1,n)$  generates a full-length equidistant code with minimum distance~$2$. 
Let now $U\in\mathcal{G}_q(2,n)$. Then 
\begin{enumerate}[label=(\alph*),leftmargin=1.5em,labelwidth=1.5em,labelsep=0em,align=left, topsep=-1.3ex]
\item either $U$ is a cyclic shift of $\F_{q^2}$ (and $n$ is even) and $\cO:=\mathrm{Orb}(U)=\mathrm{Orb}(\F_{q^2})$, thus $d(\cO)=4=2k$, i.e. $\cO$ is a {\it spread code} (see \cite{manganiello2008spread});
\item or $U$ is not a cyclic shift of $\F_{q^2}$, in which case $d(\mathrm{Orb}(U))=2=2k-2$, i.e. $\mathrm{Orb}(U)$ is an optimal full-length orbit code.  
\end{enumerate}
\end{remark}
%%%%%%%%%%%%%%%%%%%

In \cite{CPZ2024IEEE}, the cyclic orbit codes of $\mathcal{G}_q(3,n)$ are completely described. In particular, the authors proved the following classification result.

%%%%%%%%%%%%%%%%%%%
\begin{theorem}\cite[Corollary III.8]{CPZ2024IEEE}\label{T-k3}
\label{thm:classificationcase3dim}
Let $\mathcal{C}=\mathrm{Orb}(U)\subseteq \mathcal{G}_q(3,n)$. We have the following cases.
\begin{enumerate}[label=(\Roman*),leftmargin=2em,labelwidth=2em,labelsep=0em,align=left, topsep=-1.3ex]
    \item $d(\cC)=6$ and $\cC=\mathrm{Orb}(\F_{q^3})$ where $3 \mid  n$.
    \item $d(\cC)=4$ and $\cC$ is an optimal full-length code, $n\geq 6$.
    \item $d(\cC)=2$ and $\cC=\mathrm{Orb}(U)$ is a quasi-optimal full-length code. In particular, one of the following occurs:
    \begin{itemize}[leftmargin=3em,labelwidth=3em,labelsep=0em,align=left, topsep=-1.3ex]
        \item [(III.1)] $U=\la 1,\lmb,\lmb^2\ra_{\fq}$ for some $\lmb \in \F_{q^4}\setminus \F_{q^2}$, $n\geq 4$;
        \item [(III.2)] $U=\la 1,\lmb,\lmb^2\ra_{\fq}$ for some $\lmb \in \F_{q^n}\setminus \F_{q^4}$, $n\geq 5$;
        \item [(III.3)] $U=\F_{q^2}+\la \mu\ra_{\fq}$ for some $\mu \in \F_{q^n}\setminus \F_{q^4}$, $n\geq 5$.
    \end{itemize}
\end{enumerate}   
\end{theorem}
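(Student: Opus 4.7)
For $k=3$, the minimum distance $d(\cC)$ is an even integer with $2\leq d(\cC)\leq 2k=6$, so it equals $6$, $4$, or $2$, and I would treat these three cases separately. The first two are handled quickly with the tools already developed. If $d(\cC)=6$, then $\cC$ is a spread of $\cG_q(3,n)$; the spread argument in the proof of \cref{boundquasioptimal} forces $U$ to be a multiplicative coset of a subfield of order $q^3$, so $3\mid n$ and $\cC=\mathrm{Orb}(\F_{q^3})$. If $d(\cC)=4$, then $\mathrm{Orb}(U)$ is optimal full-length, so $U$ is a Sidon space by \cref{lem:charSidon}; the bound $n\geq 6$ follows by duality (\cref{prop:orthcompl}) from the degenerate cases $n=4,5$: for $n=4$ the orthogonal $U^{\perp}\in\cG_q(1,4)$ generates an equidistant code with distance~$2$, while for $n=5$ the orthogonal $U^{\perp}\in\cG_q(2,5)$ cannot be a cyclic shift of $\F_{q^2}$ (as $2\nmid 5$), so \cref{rmk:k=2nooptimal}(b) gives $d(\mathrm{Orb}(U^{\perp}))=2$; each contradicts $d(\cC)=4$.

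The substantive case is $d(\cC)=2$. Here I pick $\alpha\in\fqn^*\setminus\fq^*$ with $V:=U\cap\alpha U$ of $\fq$-dimension exactly~$2$, so $V,\alpha^{-1}V\subseteq U$, and split according to whether $V=\alpha^{-1}V$. In the invariant branch $\alpha V=V$, making $V$ a module over $\fq(\alpha)=\F_{q^e}$; combining $\dim_{\fq}(V)=2$ with $\alpha\notin\fq$ forces $e=2$, hence $\alpha\in\F_{q^2}$, $2\mid n$, and $V=\beta\F_{q^2}$ for some~$\beta\in\fqn^*$. Passing to the orbit representative $\beta^{-1}U$, I may take $U=\F_{q^2}+\la\mu\ra_{\fq}$ with $\mu\in\fqn\setminus\F_{q^2}$: the case $\mu\notin\F_{q^4}$ is (III.3), while $\mu\in\F_{q^4}\setminus\F_{q^2}$ lands $U$ inside $\F_{q^4}$, and a change of generator within the orbit puts it in the form $\la 1,\lambda,\lambda^2\ra_{\fq}$ of (III.1). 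In the non-invariant branch $V\ne\alpha^{-1}V$, I get $U=V+\alpha^{-1}V$ and $\dim_{\fq}(V\cap\alpha^{-1}V)=1$; letting $w$ span this intersection, the relations $w\in V$ and $w\in\alpha^{-1}V$ ensure $w,\alpha w,\alpha^{-1}w\in U$. Normalising $w=1$ via the cyclic action and then shifting by~$\alpha$ gives $U=\la 1,\alpha,\alpha^2\ra_{\fq}$, which is (III.1) if $\alpha\in\F_{q^4}\setminus\F_{q^2}$ and (III.2) if $\alpha\notin\F_{q^4}$; the degenerate situation where $w,\alpha w,\alpha^{-1}w$ are $\fq$-linearly dependent forces $\alpha\in\F_{q^2}$ and, with a short further argument, collapses back into the invariant branch.

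The main difficulty is this last case. The forward direction, verifying that each of (III.1)--(III.3) does yield a quasi-optimal full-length orbit code, is a routine distance computation, but proving that these three families exhaust all quasi-optimal orbits and that no orbit is over- or under-counted requires careful tracking of the maximum field of linearity (\cref{D-MaxFieldLin}) and of whether the representative sits inside $\F_{q^4}$. An alternative organisation of the same analysis would use the $\fq$-linear set $L_{U\times U}\subset\PG(1,q^n)$ from Section~\ref{S-LinSets}: weight-$2$ points of $L_{U\times U}$ parametrise precisely the $\alpha$ with $\dim_{\fq}(U\cap\alpha U)=2$, and the mutual position of these points discriminates cleanly between the invariant and non-invariant branches above.
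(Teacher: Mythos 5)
The paper never proves this statement internally: it is imported verbatim from \cite{CPZ2024IEEE}, so there is no in-paper argument to compare with, and your sketch should be judged as a self-contained alternative. As such it is essentially sound. The split $d(\cC)\in\{6,4,2\}$, the spread argument for $d=6$, the duality argument giving $n\geq 6$ when $d=4$ (the paper's own \cref{boundquasioptimal}(a) yields the same bound more directly), and the analysis of a $2$-dimensional $V=U\cap\alpha U$ according to whether $\alpha V=V$, including the degenerate subcase where $1,\alpha,\alpha^{-1}$ are dependent (forcing $\alpha\in\F_{q^2}$ and $\F_{q^2}=\la 1,\alpha\ra_{\fq}\subseteq U$), are all correct. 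Two items still need to be written out. The only substantive one is your unproved claim that $\F_{q^2}+\la\mu\ra_{\fq}$ with $\mu\in\F_{q^4}\setminus\F_{q^2}$ is a cyclic shift of some $\la 1,\lambda,\lambda^2\ra_{\fq}$, $\lambda\in\F_{q^4}\setminus\F_{q^2}$: this holds because every $3$-dimensional $\fq$-subspace of $\F_{q^4}$ has stabilizer $\fq^*$ (it cannot be a shift of $\F_{q^3}$), so each orbit under $\F_{q^4}^*$ has size $(q^4-1)/(q-1)$, which equals the total number of such subspaces; hence the Singer group is transitive on them and any such $U$ is a shift of $\la 1,\lambda,\lambda^2\ra_{\fq}$, which is indeed $3$-dimensional for $\lambda\notin\F_{q^2}$. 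Secondly, the full-length assertions in cases (II) and (III) need their one-line justification (the stabilizer is $\F_{q^t}^*$ with $t\mid 3$, and $t=3$ forces $U=\beta\F_{q^3}$, hence $d=6$), and the parameter constraints should be checked: in (III.2) note $\lambda\notin\F_{q^3}$ (else $U=\F_{q^3}$ and $d=6$), so $[\fq(\lambda):\fq]\geq 5$ and $n\geq 5$; in (III.3), $2\mid n$ and $\mu\notin\F_{q^4}$ give $n\geq 6$. With these additions your plan amounts to a complete proof of the quoted classification.
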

%%%%%%%%%%%%%%%%%%%

In the next section we will see that the codes in Case~(II) have all the same distance distribution. 
On the other hand, by \cite[Thm.~4.4 and Cor.~4.11]{CPZ2024FullWeight} the 3 classes of codes in Case~(III) have different distance distribution. 
More precisely,  in combination with \cite[Thm.~4.1]{heideweight} (see also \cref{thm:thm4.1heide}) one derives 
\begin{equation}
\label{eq:weightdistribcase3dim}
    (\omega_2,\omega_4,\omega_6)=\left\{\begin{array}{ll}
      \big(q+q^2(q+1),\,0,\,\frac{q^n-q^4}{q-1}\big),&\text{in Case (III.1)},\\
      \big(q(q+1),\,q^3(q+1),\,\frac{q^n-q^5}{q-1}\big),&\text{in Case (III.2)}, \\
      \big(q,\,q^2(q+1)^2,\,\frac{q^n-1}{q-1}-q^2(q+1)^2-q-1\big),&\text{in Case (III.3)}.
      \end{array}\right.
\end{equation}

%%%%%%%%%%%%%%%%%%%
\section{Fundamental properties of the intersection distribution}

In \cite{heideweight} the intersection distribution of a quasi-optimal cyclic orbit code $\Orb(U)$ has been described. 
It has been shown that the distribution depends  (a) on the number of fractions, $f_U$, that can be formed by the elements in~$U$, 
and (b) on the fact whether or not~$U$ contains a cyclic shift of the subfield $\F_{q^2}$ (if~$n$ is even); see  Equation~\eqref{eq2:lmb2lmb1} and \cref{thm:thm4.1heide}.
Making use of the linear set $L_{U\times U}$, we will derive new bounds for~$f_U$, which in turn will narrow down the possible intersection distributions 
of quasi-optimal full-length orbit codes.

Recall the notation $\overline{\alpha}=\alpha\F_q$ for the points in $\P(\fqn)$.

%%%%%%%%%%%%%%%%%%%
\begin{definition}
\label{def:parametersfu}{\cite[Definition 3.1]{heideweight}}
Let $U \in\cG_q(k,n)$ and $d(\mathrm{Orb}(U))=2k-2\ell$. We define
\begin{enumerate}[leftmargin=1.5em,labelwidth=1.5em,labelsep=0em,align=left, topsep=-1.3ex]
    \item $\mathcal{L}=\mathcal{L}(U)=\bigcup_{i=1}^\ell \mathcal{L}_i$, where $\cL_i$ is as in \cref{D-WeightInterDistr};
    \item $\mathcal{S}=\mathcal{S}(U)=\{\overline{\alpha}\in \mathbb{P}(\fqn) \colon \alpha \in \mathrm{Stab}(U)\}$;
    \item $\mathcal{M}=\mathcal{M}(U)=\{ (\overline{u},\overline{v}) \colon u,v \in U\setminus\{0\}, \overline{u}\ne\overline{v} \}$;
    \item $\mathcal{F}=\mathcal{F}(U)=\{ \overline{uv^{-1}} \colon u,v \in U\setminus\{0\} \}$. 
             We call $\mathcal{F}$ the \textbf{set of fractions} of $U$ and set $f_U=|\cF|$.
\end{enumerate}
\end{definition}
%%%%%%%%%%%%%%%%%%%

The above parameters are related as follows. Recall the intersection distribution $(\lambda_0,\ldots,\lambda_\ell)$ from \cref{D-WeightInterDistr}.
Thus, $\lambda_i=|\cL_i|$.

%%%%%%%%%%%%%%%%%%%
\begin{proposition}{\cite[Corollary 3.3]{heideweight}} \label{prop:fUlambda}
Let $U \in\cG_q(k,n)$ be such that $\mathrm{Stab}(U)=\F_{q^t}^*$ and let $d(\mathrm{Orb}(U))=2k-2\ell$. 
Set  $s=|\mathcal{S}|$, $Q=|\mathcal{M}|$, thus $Q=(q^k-1)(q^k-q)/(q-1)^2$. 
Then
\[ 
    f_U=s +\sum_{i=1}^\ell \lambda_i\ \text{ and }\ Q=\sum_{i=1}^\ell \frac{q^i-1}{q-1}\lambda_i+\frac{q^k-1}{q-1}(s-1). 
\]
If $\mathrm{Orb}(U)$ is a full-length orbit, we have $s=1$ and
\[ 
    f_U=1+\sum_{i=1}^\ell \lambda_i \ \text{ and }\  Q=\sum_{i=1}^\ell \frac{q^i-1}{q-1}\lambda_i. \]
\end{proposition}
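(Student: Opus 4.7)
The plan is to first show that the set of fractions $\cF$ decomposes as the disjoint union $\cS \cup \bigcup_{i=1}^\ell \cL_i$, and then to count the pairs in $\cM$ by fibering them over this decomposition.

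For the first identity, I would start by characterizing when a point $\overline\alpha \in \mathbb{P}(\fqn)$ is a fraction. Since $\overline{uv^{-1}}=\overline\alpha$ means $u=c\alpha v$ for some $c\in\fq^*$, we have $\overline\alpha\in\cF$ if and only if there exists $v\in U\setminus\{0\}$ with $\alpha v\in U$, equivalently $U\cap\alpha^{-1}U\neq\{0\}$, equivalently $\dim_{\fq}(U\cap\alpha U)\geq 1$. Now the possible values of $\dim_{\fq}(U\cap\alpha U)$ for $\overline\alpha\in\cF$ are exactly $\{1,\dots,\ell\}\cup\{k\}$: any intermediate value is excluded by the definition of $\ell$ as the maximum intersection dimension and by the fact that $\alpha\in\mathrm{Stab}(U)$ is characterized by $\dim_{\fq}(U\cap\alpha U)=k$. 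Therefore $\cF$ splits as the disjoint union of $\cS$ (from the case $\dim_{\fq}(U\cap\alpha U)=k$) and the $\cL_i$ for $1\leq i\leq \ell$, yielding $f_U=s+\sum_{i=1}^\ell\lambda_i$.

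For the second identity, I would count $|\cM|$ by summing, over all $\overline\alpha\in\cF\setminus\{\overline 1\}$, the number of pairs $(\overline u,\overline v)\in\cM$ with $\overline{uv^{-1}}=\overline\alpha$. The key observation is that once $\overline\alpha$ is fixed, such pairs are in bijection with the projective points of $U\cap\alpha^{-1}U$ via $\overline v\mapsto(\overline{\alpha v},\overline v)$, giving $(q^i-1)/(q-1)$ pairs with $i=\dim_{\fq}(U\cap\alpha^{-1}U)=\dim_{\fq}(U\cap\alpha U)$. The condition $\overline u\neq\overline v$ is automatic because $\overline\alpha\neq\overline 1$ forces $\alpha\notin\fq^*$. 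Splitting the sum according to whether $\overline\alpha\in\cS\setminus\{\overline 1\}$ (contributing $(s-1)\cdot(q^k-1)/(q-1)$) or $\overline\alpha\in\cL_i$ (contributing $\lambda_i\cdot(q^i-1)/(q-1)$) yields the stated formula for $Q$. As a sanity check, the closed form $Q=(q^k-1)(q^k-q)/(q-1)^2$ matches the direct count of ordered pairs of distinct projective points in $\mathbb{P}(U)$.

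The full-length case follows by substituting $s=1$, which holds because $\mathrm{Stab}(U)=\fq^*$ forces $\cS=\{\overline 1\}$. I do not foresee a genuine obstacle; the whole argument is a careful bookkeeping exercise, and the only subtlety is keeping track of the $\fq^*$-ambiguity in the definition of a fraction and the exclusion of $\overline\alpha=\overline 1$ when passing to $\cM$.
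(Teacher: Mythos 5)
Your argument is correct and complete. Note, however, that the paper does not prove this statement at all: it is quoted verbatim from \cite[Corollary 3.3]{heideweight}, so there is no in-paper proof to compare against. Your double-counting proof is the natural one and almost certainly the intended argument in the cited source: the characterization $\overline{\alpha}\in\cF \Leftrightarrow \dim_{\fq}(U\cap\alpha U)\geq 1$, together with the dichotomy $\dim_{\fq}(U\cap\alpha U)=k \Leftrightarrow \alpha\in\mathrm{Stab}(U)$ (and $\leq\ell$ otherwise), gives the partition $\cF=\cS\,\dot\cup\,\cL_1\,\dot\cup\cdots\dot\cup\,\cL_\ell$ and hence the formula for $f_U$; and your fiber bijection $\overline{v}\mapsto(\overline{\alpha v},\overline{v})$ onto the projective points of $U\cap\alpha^{-1}U$, with the observation that $\overline{u}\neq\overline{v}$ is automatic once $\overline{\alpha}\neq\overline{1}$, correctly yields $Q=\sum_{i=1}^{\ell}\frac{q^i-1}{q-1}\lambda_i+\frac{q^k-1}{q-1}(s-1)$. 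The specialization $s=1$ in the full-length case is immediate, as you say.
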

%%%%%%%%%%%%%%%%%%%

The intersection distribution of cyclic orbit codes has also been studied in \cite{bhaintwal2024}. 
In particular, the following theorem has been proven.

%%%%%%%%%%%%%%%%%%%
\begin{theorem}\cite[Theorems 4 and 8]{bhaintwal2024}\label{T-Bhaintwal}
Let $U\in\cG_q(k,n)$ be a subspace with stabilizer $\fq^*$. 
\begin{enumerate}[label=(\arabic*),leftmargin=1.5em,labelwidth=1.5em,labelsep=0em,align=left, topsep=-1.3ex]
    \item If $n$ is an odd number, then $\lmb_i$ is a multiple of $q(q+1)$ for all $i=0,\dots, k-1$;
    \item If $n$ is even and $U$ contains $\frac{q^{2m}-1}{q^{2}-1}$ distinct shifts of $\F_{q^{2}}$ (where $m\geq 0$), then
    \[
                \lmb_{2m}=q+sq(q+1)
    \]
    for some non negative $s$. Moreover, $\lmb_i$ is multiple of $q(q+1)$ for $i=0,\ldots, k-1, i\neq 2m$.
\end{enumerate}
\end{theorem}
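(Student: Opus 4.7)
The plan is to exploit the natural action of $\mathrm{PGL}(2,\fq)$ on $\PG(1,q^n)$ induced by the inclusion $\GL(2,\fq)\hookrightarrow\GL(2,\fqn)$, together with the observation that this action preserves the $(U\times U)$-weight function on $\PG(1,q^n)$. In this way the numbers $\lambda_i$ decompose into contributions coming from $\mathrm{PGL}(2,\fq)$-orbits of controlled sizes, and the desired divisibilities will fall out of orbit counting.

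First I would translate everything into weight-data on $\PG(1,q^n)$. Using that $\mathrm{Stab}(U)=\fq^*$, the only weight-$k$ points of $\PG(1,q^n)$ are $\subspace{(1,0)}_{\fqn}$, $\subspace{(0,1)}_{\fqn}$ and $\subspace{(1,c)}_{\fqn}$ for $c\in\fq^*$, so $N_k=q+1$; and for each $i<k$, every $\overline{\alpha}\in\mathcal{L}_i$ lifts to the $q-1$ distinct projective points $\subspace{(1,c\alpha)}_{\fqn}$ of weight $i$, giving $N_i=(q-1)\lambda_i$. Since $U$ is $\fq$-linear, every $M\in\GL(2,\fq)$ satisfies $M(U\times U)=U\times U$, so the induced action of $\mathrm{PGL}(2,\fq)$ on $\PG(1,q^n)$ preserves weights; a standard stabilizer computation (using that $1,\alpha,\alpha^2$ are $\fq$-linearly independent as soon as $\deg_{\fq}(\alpha)\geq 3$) produces exactly three orbit types: the $\fq$-rational orbit $\PG(1,\fq)$ of size $q+1$, entirely of weight $k$; the ``quadratic'' orbit $\PG(1,\F_{q^2})\setminus\PG(1,\fq)$ of size $q(q-1)$, which exists only when $2\mid n$; and generic orbits of size $q(q-1)(q+1)=|\mathrm{PGL}(2,\fq)|$.

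For part (1), since $n$ is odd one has $\F_{q^2}\not\subseteq\fqn$, so every weight-$i$ point of $\PG(1,q^n)$ with $i<k$ lies in a generic orbit, whence $q(q-1)(q+1)\mid N_i$ and therefore $q(q+1)\mid\lambda_i$. For part (2), the extra ingredient is to identify the weight of the quadratic orbit. I would introduce $U_{\F_{q^2}}:=\{u\in U\colon u\F_{q^2}\subseteq U\}$, the maximal $\F_{q^2}$-subspace contained in $U$; the hypothesis on $\F_{q^2}$-shifts translates into $\dim_{\fq}(U_{\F_{q^2}})=2m$, since distinct $\F_{q^2}$-shifts in $U$ intersect trivially and their union (together with $0$) is exactly $U_{\F_{q^2}}$. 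The inclusion $U_{\F_{q^2}}\subseteq U\cap\alpha U$ for $\alpha\in\F_{q^2}^*$ is immediate, and the reverse for $\alpha\in\F_{q^2}\setminus\fq$ follows from the fact that $u,\alpha^{-1}u\in U$ forces $u\F_{q^2}=\fq u+\fq\alpha^{-1}u\subseteq U$, since $\{1,\alpha^{-1}\}$ is an $\fq$-basis of $\F_{q^2}$. Thus $\dim_{\fq}(U\cap\alpha U)=2m$ for every $\alpha\in\F_{q^2}\setminus\fq$, so the quadratic orbit contributes $q(q-1)$ to $N_{2m}$ and nothing to $N_i$ for $i\neq 2m$. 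Subtracting, $N_{2m}-q(q-1)$ is a disjoint union of generic orbits and hence divisible by $q(q-1)(q+1)$; dividing by $q-1$ yields $\lambda_{2m}=q+sq(q+1)$ for some $s\geq 0$. For $i<k$ with $i\neq 2m$, $N_i$ is itself a union of generic orbits, so $q(q+1)\mid\lambda_i$.

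The main obstacle I foresee is the identification $U\cap\alpha U=U_{\F_{q^2}}$ when $\alpha\in\F_{q^2}\setminus\fq$: one has to leverage the $\fq$-linearity of $U$ to upgrade the pointwise condition ``$u,\alpha^{-1}u\in U$'' into the $\F_{q^2}$-linearity condition ``$u\F_{q^2}\subseteq U$'', which is what pins the answer at exactly $2m$ rather than something smaller. Beyond this, the argument is a clean orbit-counting inside $\mathrm{PGL}(2,\fq)$ acting on $\PG(1,q^n)$.
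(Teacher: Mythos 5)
This statement is not proved in the paper at all: it is quoted from \cite{bhaintwal2024} (Theorems 4 and 8), so there is no internal proof to measure your argument against; I can only assess your proposal on its own terms, and it is correct and self-contained. The three pillars all hold: (i) since $M(U\times U)=U\times U$ for every $M\in\GL(2,\fq)$, the weight function $w_{U\times U}$ is constant on $\mathrm{PGL}(2,\fq)$-orbits of $\PG(1,q^n)$, and these orbits are exactly $\PG(1,\fq)$ of size $q+1$, the set $\PG(1,\F_{q^2})\setminus\PG(1,\fq)$ of size $q(q-1)$ (present only when $n$ is even), and regular orbits of size $q(q-1)(q+1)$, the triviality of the generic point stabilizer following from the $\fq$-independence of $1,\alpha,\alpha^2$; (ii) the bookkeeping $N_k=q+1$ (using $\mathrm{Stab}(U)=\fq^*$) and $N_i=(q-1)\lambda_i$ for $i\leq k-1$ is exactly the content of \cref{prop:weightandweightpoints}, so weight classes with $i\leq k-1$ avoid $\PG(1,\fq)$; (iii) your key lemma $U\cap\alpha U=U_{\F_{q^2}}$ for all $\alpha\in\F_{q^2}\setminus\fq$ is right -- the inclusion $\supseteq$ uses $\alpha^{-1}u\in u\F_{q^2}\subseteq U$, the inclusion $\subseteq$ uses that $u,\alpha^{-1}u\in U$ together with $\{1,\alpha^{-1}\}$ being an $\fq$-basis of $\F_{q^2}$ force $u\F_{q^2}\subseteq U$ -- and the hypothesis on the number of shifts pins $\dim_{\fq}(U_{\F_{q^2}})=2m$, while $\mathrm{Stab}(U)=\fq^*$ guarantees $U_{\F_{q^2}}\neq U$, i.e.\ $2m\leq k-1$, so the quadratic locus contributes exactly $q(q-1)$ to $N_{2m}$ and nothing elsewhere. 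Since the remaining points in any weight class $i\leq k-1$ lie in regular orbits, dividing by $q-1$ yields $\lambda_{2m}\equiv q$ and $\lambda_i\equiv 0$ modulo $q(q+1)$ as claimed (and, for $n$ odd, all $\lambda_i\equiv 0$), in agreement with the numerology appearing elsewhere in the paper, e.g.\ \eqref{eq:weightdistribcase3dim} and \cref{thm:thm4.1heide}.
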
 
%%%%%%%%%%%%%%%%%%%

Together with Proposition \ref{prop:fUlambda} we obtain the following consequence.

\newpage
%%%%%%%%%%%%%%%%%%%
\begin{corollary}
    \label{cor:fUconsequence}
Let $U\in\cG_q(k,n)$ be a subspace with stabilizer $\fq^*$. 
\begin{enumerate}[label=(\arabic*),leftmargin=1.5em,labelwidth=1.5em,labelsep=0em,align=left, topsep=-1.3ex]
    \item If $n$ is an odd number, then $f_U-1$ is a multiple of $q(q+1)$;
    \item If $n$ is even and $U$ contains $\frac{q^{2m}-1}{q^{2}-1}$ distinct shifts of $\F_{q^{2}}$, then $f_U-1-q$ is a multiple of $q(q+1)$.
\end{enumerate}
\end{corollary}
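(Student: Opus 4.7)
The plan is to read the corollary off directly from the two inputs the authors have just assembled: \cref{prop:fUlambda} for the arithmetic identity expressing $f_U$ in terms of the $\lambda_i$'s, and \cref{T-Bhaintwal} for the divisibility of each $\lambda_i$. The only preparatory step is to invoke the full-length version of the first formula: since $\mathrm{Stab}(U)=\fq^*$, we have $t=1$ and $s=|\cS|=1$, so \cref{prop:fUlambda} yields
\[
   f_U - 1 \;=\; \sum_{i=1}^{\ell}\lambda_i.
\]
With $\lambda_i=0$ for $\ell<i\leq k-1$, the right-hand sum can equally be taken over $i=1,\ldots,k-1$, which is the range in which \cref{T-Bhaintwal} controls the $\lambda_i$.

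For Part (1), with $n$ odd, I would simply apply \cref{T-Bhaintwal}(1): each $\lambda_i$ with $1\leq i\leq k-1$ is a multiple of $q(q+1)$, hence their sum is as well, and so $f_U-1$ is a multiple of $q(q+1)$.

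For Part (2), with $n$ even and $U$ containing $\tfrac{q^{2m}-1}{q^2-1}$ shifts of $\F_{q^2}$ (taking the interesting case $m\geq 1$), I would split off the $i=2m$ term from the sum. By \cref{T-Bhaintwal}(2), $\lambda_{2m}=q+s\,q(q+1)$ for some $s\geq 0$, while all other $\lambda_i$ with $1\leq i\leq k-1$ are multiples of $q(q+1)$. Therefore
\[
   f_U-1-q \;=\; s\,q(q+1)+\sum_{\substack{1\leq i\leq k-1\\ i\neq 2m}}\lambda_i,
\]
which is a multiple of $q(q+1)$.

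The only non-cosmetic point to check is that the index $2m$ actually appears in the summation range, i.e.\ that $2m\leq\ell$ (otherwise separating $\lambda_{2m}$ out of the sum is meaningless). This is automatic: the assumption that $U$ contains $\tfrac{q^{2m}-1}{q^2-1}$ shifts of $\F_{q^2}$ means $U$ contains an $\F_{q^2}$-subspace $W$ of $\fq$-dimension $2m$, and for any $\beta\in\F_{q^2}^*\setminus\fq^*$ we have $W=\beta W\subseteq U\cap\beta U$, forcing $\ell\geq\dim_{\fq}(U\cap\beta U)\geq 2m$. This is really the only obstacle; once it is noted, both parts reduce to one line of modular arithmetic.
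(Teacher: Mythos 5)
Your proposal is correct and follows exactly the route the paper intends: the paper states this corollary as an immediate consequence of \cref{prop:fUlambda} (with $t=1$, giving $f_U-1=\sum_{i=1}^{\ell}\lambda_i$) combined with the divisibility statements of \cref{T-Bhaintwal}, which is precisely your argument. Your extra verification that $2m\leq\ell$ (so that $\lambda_{2m}$ genuinely occurs in the sum) is a sensible detail the paper leaves implicit, not a different approach.
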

%%%%%%%%%%%%%%%%%%%

Moreover, we have the following lower and upper bound on $f_U$.

%%%%%%%%%%%%%%%%%%%
\begin{proposition}{\cite[Proposition 3.4]{heideweight}}
With the same notation as in \cref{prop:fUlambda} we have
\[\frac{q^k-1}{q-1}\leq f_U \leq Q+1.\]
Furthermore, 
\begin{enumerate}[label=(\arabic*), leftmargin=1.5em,labelwidth=1.5em,labelsep=0em,align=left, topsep=-1.3ex]
    \item  $f_U=(q^k-1)/(q-1)$ if and only if $U=a \F_{q^k}$, for some $a \in \fqn\setminus \F_{q^k}$.
    \item  $f_U=Q+1$ if and only if $U$ is a Sidon space.
\end{enumerate}
\end{proposition}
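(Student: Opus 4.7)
My plan is to treat both inequalities as counting arguments built from natural surjections, and then to analyze each equality case separately.

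For the lower bound I would fix any $v\in U\setminus\{0\}$ and consider the map $u\mapsto\overline{uv^{-1}}$ from $U\setminus\{0\}$ into $\cF$. Its image is the projective image of the $k$-dimensional $\fq$-subspace $v^{-1}U$, and therefore has cardinality $(q^k-1)/(q-1)$. Since the image is contained in $\cF$, this gives $f_U\geq(q^k-1)/(q-1)$ at once. For the equality case, equality forces this image to equal all of $\cF$ for every choice of $v$, and since the image is precisely the projective image of the $\fq$-subspace $v^{-1}U$, all these subspaces must coincide. Hence $vv'^{-1}\in\mathrm{Stab}(U)=\F_{q^t}^*$ for all $v,v'\in U\setminus\{0\}$. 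Fixing $v'$ yields $U\subseteq v'\F_{q^t}$, so $k\leq t$; conversely, $\F_{q^t}u\subseteq U$ for any $u\in U\setminus\{0\}$ forces $t\leq k$. We conclude $t=k$ and $U=v'\F_{q^k}$. The reverse implication is a direct check: if $U=a\F_{q^k}$ then $\cF=\overline{\F_{q^k}^*}$, which has size $(q^k-1)/(q-1)$.

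For the upper bound, my strategy is to isolate the trivial fraction $\overline{1}$ and use the surjection $\cM\to\cF\setminus\{\overline{1}\}$ given by $(\overline{u},\overline{v})\mapsto\overline{uv^{-1}}$, which yields $f_U\leq Q+1$. Equality holds precisely when this map is injective, and I would show that this injectivity is equivalent to~$U$ being a Sidon space. Assuming $f_U=Q+1$ and $ab=cd$ in $U\setminus\{0\}$, one either has $\overline{a}=\overline{c}$ (whence $\overline{b}=\overline{d}$ is immediate), or the pairs $(\overline{a},\overline{c}),(\overline{d},\overline{b})\in\cM$ both map to $\overline{a/c}=\overline{d/b}$ and injectivity gives $\{\overline{a},\overline{b}\}=\{\overline{c},\overline{d}\}$. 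Conversely, given the Sidon property and $\overline{u_1v_1^{-1}}=\overline{u_2v_2^{-1}}$ with $(\overline{u_i},\overline{v_i})\in\cM$, one has $u_1v_2=u_2(\lambda v_1)$ for some $\lambda\in\fq^*$, and the Sidon property forces $\{\overline{u_1},\overline{v_2}\}=\{\overline{u_2},\overline{v_1}\}$. The alternative $\overline{u_1}=\overline{v_1}$ is ruled out by membership in $\cM$, so we must have $\overline{u_1}=\overline{u_2}$ and $\overline{v_1}=\overline{v_2}$, yielding injectivity.

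I expect the main subtlety to lie in the equality analysis of the lower bound, where from a coincidence of cardinalities one must argue that the maximum field of linearity of~$U$ is as large as $\F_{q^k}$, thereby pinning~$U$ down as a cyclic shift of $\F_{q^k}$. The Sidon characterization is otherwise a careful but routine diagram-chasing exercise once the surjection $\cM\to\cF\setminus\{\overline{1}\}$ is identified.
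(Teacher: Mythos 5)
Your proof is correct; note that the paper itself does not prove this proposition but quotes it from \cite{heideweight}, so there is no in-paper argument to compare with, and your two counting arguments (the map $u\mapsto\overline{uv^{-1}}$ on $U\setminus\{0\}$ for the lower bound, the surjection $\cM\to\cF\setminus\{\overline{1}\}$ for the upper bound, together with the equality analyses via the stabilizer and via injectivity/Sidon) are exactly the natural route. One small caveat concerning the statement rather than your argument: your equality analysis for the lower bound delivers $U=v'\F_{q^k}$ with $v'\in U\setminus\{0\}$, so $U=\F_{q^k}$ itself also attains $f_U=(q^k-1)/(q-1)$, whereas the quoted formulation insists on $a\in\fqn\setminus\F_{q^k}$; this edge case is an imprecision of the cited phrasing, not a gap in your proof, and your handling of the Sidon equivalence (including ruling out the degenerate matching via membership in $\cM$) is sound.
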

%%%%%%%%%%%%%%%%%%%

In the case where~$U$ is a Sidon space, or equivalently $\mathrm{Orb}(U)$ is an optimal full-length code, the intersection distribution is completely known and only depends on the parameters $q,n,k$.

%%%%%%%%%%%%%%%%%%%
\begin{theorem}{\cite[Theorem 3.7]{heideweight}}
Let $U \in\cG_q(k,n)$ be a Sidon space. Then the optimal full-length orbit $\mathrm{Orb}(U)$ has intersection distribution $(\lambda_0,\lambda_1)$, where
\[
   \lambda_1=Q=\frac{q^k-1}{q-1}\frac{q^k-q}{q-1}=f_U-1\ \text{ and }\ 
   \lambda_0=\frac{q^n-1}{q-1}-\lambda_1-1.
\]
\end{theorem}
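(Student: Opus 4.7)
The plan is to assemble this result directly from the machinery already developed in the previous two propositions, so the argument is essentially a bookkeeping one: identify $\ell$, invoke \cref{prop:fUlambda}, and then count the projective line.

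First I would note that by \cref{lem:charSidon}, the hypothesis that $U$ is a Sidon space is equivalent to $\mathrm{Orb}(U)$ being an optimal full-length orbit code. In particular $d(\mathrm{Orb}(U))=2k-2$, so the maximum intersection dimension is $\ell=1$, and moreover full-length gives $\mathrm{Stab}(U)=\fq^*$, so the parameter $s$ in \cref{prop:fUlambda} equals $1$. Thus the intersection distribution consists of just two entries $(\lambda_0,\lambda_1)$, exactly as the theorem asserts.

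Next I would substitute $\ell=1$ and $s=1$ into the identities of \cref{prop:fUlambda}. The second identity collapses to $Q=\frac{q-1}{q-1}\lambda_1=\lambda_1$, so
\[
\lambda_1=Q=\frac{q^k-1}{q-1}\cdot\frac{q^k-q}{q-1},
\]
and the first identity then gives $f_U=1+\lambda_1=Q+1$, yielding the full chain of equalities $\lambda_1=Q=f_U-1$ in the statement.

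For $\lambda_0$, I would argue by counting the points of $\P(\fqn)$, which has cardinality $(q^n-1)/(q-1)$. Every point $\overline{\alpha}$ falls into exactly one of three classes according to $\dim_{\fq}(U\cap\alpha U)$: if $\overline{\alpha}\in\mathcal{S}$ then the dimension is $k$ (and, since $\mathrm{Stab}(U)=\fq^*$, this happens for the single point $\overline{1}$); if the dimension is $1$ we count $\lambda_1$ points; and the remaining points have trivial intersection and contribute $\lambda_0$. Since $\ell=1$ rules out any intermediate dimension $2,\dots,k-1$, these three cases are exhaustive, so $\lambda_0+\lambda_1+1=(q^n-1)/(q-1)$, giving the claimed formula for $\lambda_0$.

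There is no real obstacle here: the only thing to be careful about is that the weight-$k$ points contribute exactly $1$ to the total, which uses the full-length hypothesis $\mathrm{Stab}(U)=\fq^*$; once $\ell=1$ is established from the Sidon hypothesis, everything else is a direct substitution into \cref{prop:fUlambda} together with the projective line count.
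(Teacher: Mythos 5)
Your argument is correct and complete: the Sidon hypothesis together with \cref{lem:charSidon} gives $\mathrm{Stab}(U)=\fq^*$ and $d(\mathrm{Orb}(U))=2(k-1)$, hence $\ell=1$ and $s=1$, and the identities of \cref{prop:fUlambda} then collapse to $\lambda_1=Q=f_U-1$, while the count of $\P(\fqn)$ (one point of intersection dimension $k$, none of intermediate dimension since $\ell=1$) yields $\lambda_0$. Note that the paper itself does not prove this statement but imports it from the cited reference, so there is no internal proof to compare with; your derivation is exactly the natural bookkeeping one would expect and uses only results already stated in the paper.
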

%%%%%%%%%%%%%%%%%%%

One can check straightforwardly that this is compatible with the information given about $\lambda_0$ and $\lambda_1$ in \cref{T-Bhaintwal}.
For quasi-optimal codes the distance distribution does not just depend on $q,n,k$.

%%%%%%%%%%%%%%%%%%%
\begin{theorem}\cite[Theorem 4.1]{heideweight}
\label{thm:thm4.1heide}
Let $U\in\cG(k,n)$ generate a quasi-optimal full-length orbit. One of the following cases occurs:
\begin{enumerate}[label=(\arabic*),leftmargin=1.5em,labelwidth=1.5em,labelsep=0em,align=left, topsep=-1.3ex]
    \item $U$ contains a cyclic shift of $\F_{q^2}$ (hence $n$ is even). In this case $U$ has intersection distribution $(\lmb_0,\lmb_1,\lmb_2)$ where\[
        \begin{aligned}
            \lmb_2&=q+rq(q+1)\\
            \lmb_1&=Q-(q+1)\lmb_2=\frac{q^k-1}{q-1}\frac{q^k-q}{q-1}-(q+1)(q+rq(q+1)),\\
            \lmb_0&=|\mathbb{P}(\fqn)|-\lmb_1-\lmb_2-1=\frac{q^n-1}{q-1}+q^2(1+r(q+1))-Q-1.
        \end{aligned}
        \]
    for some $r\geq 0$.
    \item $U$ does not contain a cyclic shift of $\F_{q^2}$. In this case $U$ has intersection distribution $(\lmb_0',\lmb_1',\lmb_2')$ where
    \[
        \begin{aligned}
            \lmb_2'&=r'q(q+1)\\
            \lmb_1'&=Q-(q+1)\lmb_2'=\frac{q^k-1}{q-1}\frac{q^k-q}{q-1}-r'q(q+1)^2,\\
            \lmb_0'&=|\mathbb{P}(\fqn)|-\lmb_1'-\lmb_2'-1=\frac{q^n-1}{q-1}+r'q^2(q+1)-Q-1.
        \end{aligned}
    \]
    for some $r'\geq 1.$
\end{enumerate}
\end{theorem}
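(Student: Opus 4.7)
The plan is to derive the claim as a direct consequence of \cref{prop:fUlambda} and \cref{T-Bhaintwal}, with a case split driven by the role of $\F_{q^2}$-shifts in $U$. Quasi-optimality forces $d(\Orb(U))=2k-4$, so the maximum intersection dimension is $\ell=2$ and the intersection distribution has exactly three entries $(\lambda_0,\lambda_1,\lambda_2)$. Since $\Orb(U)$ is full-length, $\mathrm{Stab}(U)=\fq^*$, so $s=1$ in \cref{prop:fUlambda} and we obtain the single linear constraint
\[
   Q=\lambda_1+(q+1)\lambda_2.
\]
Combined with the partition identity $\lambda_0+\lambda_1+\lambda_2+1=|\mathbb{P}(\fqn)|=(q^n-1)/(q-1)$, the whole distribution is determined as soon as $\lambda_2$ is known, so the theorem reduces to identifying the admissible values of $\lambda_2$.

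Next I would apply \cref{T-Bhaintwal} to constrain $\lambda_2$. In Case (1), when $U$ contains a cyclic shift $\alpha\F_{q^2}$, necessarily $n$ is even since $\F_{q^2}\subseteq\fqn$. I would first show that $U$ contains exactly one such shift: if it contained two distinct shifts $\alpha\F_{q^2}$ and $\alpha'\F_{q^2}$, their $\fq$-sum would be a $4$-dimensional $\F_{q^2}$-subspace inside $U$, and picking any $\beta\in\F_{q^2}\setminus\fq$ would give $\dim_{\fq}(U\cap\beta U)\ge 4$, contradicting $\ell=2$. Hence we are in the $m=1$ regime of \cref{T-Bhaintwal}(2), which yields $\lambda_2=q+rq(q+1)$ for some $r\ge 0$. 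The constant $q$ has a transparent geometric meaning: the $q$ projective points $\overline\beta$ with $\beta\in\F_{q^2}\setminus\fq$ automatically lie in $\cL_2$, since $\beta(\alpha\F_{q^2})=\alpha\F_{q^2}\subseteq U$ forces $\dim_{\fq}(U\cap\beta U)=2$.

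In Case (2), when $U$ contains no cyclic shift of $\F_{q^2}$, either $n$ is odd (so \cref{T-Bhaintwal}(1) applies) or $n$ is even and we are in the $m=0$ regime of \cref{T-Bhaintwal}(2). In both subcases $\lambda_2$ is a multiple of $q(q+1)$, so I write $\lambda_2=r'q(q+1)$. Quasi-optimality requires the existence of at least one $\alpha\notin\mathrm{Stab}(U)$ with $\dim_{\fq}(U\cap\alpha U)=2$, i.e.\ $\lambda_2\ge 1$, forcing $r'\ge 1$. Substituting the two expressions for $\lambda_2$ into $\lambda_1=Q-(q+1)\lambda_2$ and $\lambda_0=(q^n-1)/(q-1)-\lambda_1-\lambda_2-1$ then yields the stated formulas by a direct simplification.

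The main obstacle I anticipate is the case split itself, specifically verifying in Case (1) that quasi-optimality blocks the higher-$m$ branches of \cref{T-Bhaintwal}, so that the ``$+q$'' contribution to $\lambda_2$ is produced precisely once and cannot be inflated by further hidden $\F_{q^2}$-structure in $U$. Once this is settled, the rest is linear bookkeeping with the relations from \cref{prop:fUlambda} and the cardinality of $\mathbb{P}(\fqn)$.
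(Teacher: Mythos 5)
Your derivation is correct, but note that the paper itself gives no proof of this statement: it is imported verbatim from \cite[Theorem~4.1]{heideweight}, so there is no internal argument to compare against. What you do is reconstruct it from two other quoted results: the linear relations $Q=\lmb_1+(q+1)\lmb_2$ and $\lmb_0+\lmb_1+\lmb_2+1=|\mathbb{P}(\fqn)|$ from \cref{prop:fUlambda} (valid here since full-length gives $\mathrm{Stab}(U)=\fq^*$ and quasi-optimality gives $\ell=2$), plus the divisibility/structure constraints of \cref{T-Bhaintwal}, with the case split settled by your clean observation that two distinct shifts $\alpha\F_{q^2},\alpha'\F_{q^2}\subseteq U$ would force $\dim_{\fq}(U\cap\beta U)\geq 4$ for $\beta\in\F_{q^2}\setminus\fq$, contradicting $\ell=2$; this pins down $m=1$ (resp.\ $m=0$ or $n$ odd) in \cref{T-Bhaintwal}, and the rest is the bookkeeping you describe, including $r'\geq1$ from $\lmb_2'>0$. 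The original proof in \cite{heideweight} does not have \cref{T-Bhaintwal} available (which appeared later, in \cite{bhaintwal2024}); it obtains the shapes $q+rq(q+1)$ and $r'q(q+1)$ directly by analyzing the two-dimensional intersections $U\cap\alpha U$ and partitioning $\cL_2$ into blocks of size $q(q+1)$ plus one exceptional block of size $q$ that occurs exactly when $U$ contains a shift of $\F_{q^2}$ --- essentially the combinatorial core that \cref{T-Bhaintwal} later generalized. So your route is shorter but outsources that core; it buys brevity at the cost of depending on the later, more general result, and as a standalone re-proof one should confirm that the proof of \cref{T-Bhaintwal} in \cite{bhaintwal2024} does not itself invoke \cite[Theorem~4.1]{heideweight}, to avoid circularity. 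Within the logical framework of the present paper, where both cited ingredients are taken as given, your argument is sound and complete.
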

%%%%%%%%%%%%%%%%%%%

%%%%%%%%%%%%%%%%%%%
\begin{remark}
Note that if $\mathrm{Orb}(U)$ and $\mathrm{Orb}(V)$ are quasi-optimal full-length codes where $U$ belongs to Family (1) of Theorem \ref{thm:thm4.1heide} and $V$ belongs to Family (2), then $\mathrm{Orb}(U)$ and $\mathrm{Orb}(V)$ have different distance distributions because $\lmb_2\neq \lmb_2'$ (regardless of~$r$ and~$r'$). In particular, the orbit codes are not isometric in the sense of \cref{D-Isometric}.
\end{remark}
%%%%%%%%%%%%%%%%%%%

%%%%%%%%%%%%%%%%%%%
\subsection{New bounds on $f_U$}
In this subsection, we point out a relation between the distance distribution of $\mathrm{Orb}(U)$ and the points of $L_{U\times U}$. This will allow us to prove better bounds on $f_U=|\cF|$; see \cref{def:parametersfu}.
Recall from Section~\ref{S-LinSets} the $(U\!\!\times\!\!U)$-weight distribution $(N_0,\ldots,N_{2k})$.

%%%%%%%%%%%%%%%%%%%
\begin{proposition}\label{prop:weightandweightpoints}
Let $U \in\cG_q(k,n)$, $\mathcal{O}=\mathrm{Orb}(U)$, and $\mathrm{Stab}(U)=\F_{q^t}^*$. Let $L_{U\times U}$ be the linear set defined by $U\times U$. 
Then $N_i=0$ for $i>k$,   $N_k=q^t+1$ and
\[
    \omega_{2i}=\left\{\begin{array}{cl} 1,&\text{if }i=0,\\  \frac{N_{k-i}}{q^t-1},&\text{if }i=1,\ldots,k\end{array}\right.
    \quad \text{ and }\quad
    \lambda_i=\left\{\begin{array}{cl}\frac{N_{i}}{q-1},&\text{if }i=0,\ldots,k-1,\\ \frac{q^t-1}{q-1},&\text{if }i=k.\end{array}\right.
\]
%For any $i \in \{1,\ldots,k\}$
%\[ \omega_{2i}=\frac{N_{k-i}}{|\mathrm{Stab}(U)|} \,\,\,\text{and}\,\,\, \lambda_{i}=\frac{N_{i}}{q-1}. \]
Moreover, 
\[ 
    f_U=\frac{|L_{U\times U}|-2}{q-1}. 
\]
\end{proposition}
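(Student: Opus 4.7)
The plan is to work through each claim in turn, leveraging the parametrization $\mathrm{PG}(1,q^n)=\{\langle(1,\alpha)\rangle_{\fqn}\colon\alpha\in\fqn\}\cup\{\langle(0,1)\rangle_{\fqn}\}$ together with the weight computations already carried out in \eqref{e-Weight01}--\eqref{e-lessk}. Those equations say that $\langle(1,0)\rangle_{\fqn}$ and $\langle(0,1)\rangle_{\fqn}$ each have weight $k$, while for $\alpha\in\fqn^*$ the weight of $\langle(1,\alpha)\rangle_{\fqn}$ equals $\dim_{\fq}(U\cap\alpha U)$, which is at most $k$. This immediately gives $N_i=0$ for $i>k$. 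For $N_k$, I will isolate the $\langle(1,\alpha)\rangle_{\fqn}$ with $\alpha\in\fqn^*$ realizing weight $k$: these are exactly the ones satisfying $\alpha U=U$, i.e.\ $\alpha\in\mathrm{Stab}(U)=\F_{q^t}^*$, which contributes $q^t-1$ points; together with the two "coordinate" points this yields $N_k=q^t+1$.

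Next, I will establish the formulas for $\lambda_i$. Since $\langle(1,\alpha)\rangle_{\fqn}=\langle(1,\beta)\rangle_{\fqn}$ iff $\alpha=\beta$, for each $j<k$ the set of weight-$j$ points of $L_{U\times U}$ is in bijection with $\{\alpha\in\fqn^*\colon \dim_{\fq}(U\cap\alpha U)=j\}$. Because $cU=U$ for every $c\in\fq^*$, the condition $\dim_{\fq}(U\cap\alpha U)=j$ depends only on the class $\overline{\alpha}\in\mathbb{P}(\fqn)$, and each such class lifts to exactly $q-1$ elements of $\fqn^*$, giving $N_j=(q-1)\lambda_j$. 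For $j=k$, the class $\overline{\alpha}$ lies in $\mathcal{L}_k$ iff $\alpha\in\F_{q^t}^*$, producing $\lambda_k=(q^t-1)/(q-1)$. The formulas for $\omega_{2i}$ then follow from the natural surjection $\overline{\alpha}\in\mathbb{P}(\fqn)\mapsto\alpha U\in\Orb(U)$, whose fibres have size $(q^t-1)/(q-1)$ because $\alpha U=\beta U$ iff $\alpha\beta^{-1}\in\F_{q^t}$. Hence $\lambda_{k-i}=\frac{q^t-1}{q-1}\,\omega_{2i}$ for $i\geq 1$, giving $\omega_{2i}=N_{k-i}/(q^t-1)$; the case $i=0$ is trivial since the only orbit element fixed by the distance condition is $U$ itself.

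Finally, for the formula $f_U=(|L_{U\times U}|-2)/(q-1)$, the key step is to identify $\mathcal{F}$ inside $\mathbb{P}(\fqn)$. For $\alpha\in\fqn^*$, the class $\overline{\alpha}$ lies in $\mathcal{F}$ iff $\alpha=uv^{-1}c$ for some $u,v\in U\setminus\{0\}$ and $c\in\fq^*$, equivalently $\alpha v\in U$ for some $v\in U\setminus\{0\}$, equivalently $U\cap\alpha^{-1}U\neq\{0\}$, i.e.\ $\dim_{\fq}(U\cap\alpha U)\geq 1$. Therefore $\mathcal{F}=\bigsqcup_{j=1}^{k}\mathcal{L}_j$ and
\[
f_U=\sum_{j=1}^{k}\lambda_j=\frac{1}{q-1}\Bigl(\sum_{j=1}^{k-1}N_j\Bigr)+\frac{q^t-1}{q-1}
=\frac{1}{q-1}\bigl(|L_{U\times U}|-N_k+q^t-1\bigr)=\frac{|L_{U\times U}|-2}{q-1},
\]
where the middle equality uses $|L_{U\times U}|=\sum_{j=1}^{k}N_j$ and the final one uses $N_k=q^t+1$.

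The argument is essentially pure bookkeeping; the only subtlety I anticipate is correctly segregating the "generic" points $\langle(1,\alpha)\rangle_{\fqn}$ from the two coordinate points when computing $N_k$, and keeping straight the two quotients involved (passing from $\fqn^*$ to $\mathbb{P}(\fqn)$ by $\fq^*$, and from $\mathbb{P}(\fqn)$ to $\Orb(U)$ by the larger factor $(q^t-1)/(q-1)$). Once those are handled cleanly, each of the asserted identities drops out.
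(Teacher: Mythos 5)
Your proposal is correct and follows essentially the same route as the paper: identify the weights of the points $\langle(1,\alpha)\rangle_{\fqn}$ with $\dim_{\fq}(U\cap\alpha U)$ via \eqref{e-Weight01}--\eqref{e-lessk}, count the weight-$k$ points to get $N_k=q^t+1$, and translate between $N_i$, $\lambda_i$, $\omega_{2i}$ through the $(q-1)$- and $(q^t-1)$-to-one quotients. The only cosmetic difference is the last step: you obtain $f_U$ by identifying $\cF$ with $\bigsqcup_{j\geq1}\cL_j$ and summing the $\lambda_j$, whereas the paper uses the explicit $(q-1)$-to-one map $\subspace{(u_1,u_2)}\mapsto\overline{u_1u_2^{-1}}$ from $L_{U\times U}$ minus the two coordinate points onto $\cF$; both amount to the same counting.
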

%%%%%%%%%%%%%%%%%%%

\begin{proof}
We know from \eqref{e-Weight01} and \eqref{e-lessk} that $\la(1,0)\ra_{\fqn}$ and $\la(0,1)\ra_{\fqn}$ have weight~$k$ and
$w_{U\times U}(\la (1,\alpha) \ra_{\fqn})$ 
$=\dim_{\fq}(U\cap\alpha U)$ for all $\alpha\in\fqn^*$.
This shows immediately that $N_k=|\mathrm{Stab}(U)|+2=q^t+1$.
Using that $\omega_{2i}=|\left\{ \alpha U \colon \dim_{\fq}(U\cap \alpha U)=k-i \right\}|$, we obtain 
$N_{k-i}=|\mathrm{Stab}(U)|\omega_{2i}$ for $i=1,\ldots,k$.
This proves the first part. 
The statement about $\lambda_i$ follows immediately from the relation between $\lambda_i$ and $\omega_{2(k-i)}$ as described in 
\cref{D-WeightInterDistr}.
Finally, note that we have a well-defined surjective map $L_{U\times U}\setminus\{\la(1,0)\ra_{\fqn},\la(0,1)\ra_{\fqn}\}\longrightarrow\cF$ given by 
$\subspace{(u_1,u_2)}\longmapsto \overline{u_1u_2^{-1}}$. Its fibers have size $q-1$ and thus 
 we arrive at $f_U=(|L_{U\times U}|-2)/(q-1)$, as stated.
 \end{proof}

%%%%%%%%%%%%%%%%%%%
\begin{corollary}
\label{cor:w2k-2neq0thenfU}
Let $U \in\cG_q(k,n)$ with $n\geq 2k-1$. 
If $\omega_{2(k-1)}\ne 0$, then
\[ f_U\geq \frac{q^{2k-1}-1}{q-1}. \]
\end{corollary}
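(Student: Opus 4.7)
The plan is to reduce the statement directly to \cref{th:sizelinset} via the dictionary provided by \cref{prop:weightandweightpoints}. First I would unpack the hypothesis $\omega_{2(k-1)}\ne 0$ in terms of the linear set $L_{U\times U}$. By \cref{prop:weightandweightpoints}, $\omega_{2(k-1)}=N_1/(q^t-1)$, so $\omega_{2(k-1)}\ne 0$ is equivalent to $N_1\ne 0$, i.e.\ $L_{U\times U}$ has at least one point of $(U\times U)$-weight one.

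Next I would set up to apply \cref{th:sizelinset}. The linear set $L_{U\times U}$ lives in $\PG(1,q^n)$ and has $(U\times U)$-rank equal to $2k$. The hypothesis $n\geq 2k-1$ gives $2k\leq n+1$, which is exactly the range in which \cref{th:sizelinset} applies; note that the borderline case $2k=n+1$ is covered by \cref{rmk:sizelinsetk=n+1}. The theorem then yields
\[
    |L_{U\times U}|\geq q^{2k-1}+1.
\]

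Finally, I would plug this into the identity $f_U=(|L_{U\times U}|-2)/(q-1)$ from \cref{prop:weightandweightpoints}, obtaining
\[
    f_U\geq \frac{q^{2k-1}+1-2}{q-1}=\frac{q^{2k-1}-1}{q-1},
\]
which is the desired bound. There is no real obstacle here: both ingredients (\cref{th:sizelinset} and \cref{prop:weightandweightpoints}) are already in place, and the only thing to check carefully is that the rank bound $2k\leq n+1$ needed for \cref{th:sizelinset} is precisely guaranteed by the assumption $n\geq 2k-1$.
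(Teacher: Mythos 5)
Your proposal is correct and follows essentially the same route as the paper's proof: translate $\omega_{2(k-1)}\neq 0$ into $N_1>0$ via \cref{prop:weightandweightpoints}, invoke \cref{th:sizelinset} (with \cref{rmk:sizelinsetk=n+1} for the borderline rank $2k=n+1$) to get $|L_{U\times U}|\geq q^{2k-1}+1$, and conclude with the identity $f_U=(|L_{U\times U}|-2)/(q-1)$. Your explicit check that $n\geq 2k-1$ is exactly the rank condition needed is a nice touch, but the argument is the same as in the paper.
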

%%%%%%%%%%%%%%%%%%%

\begin{proof}
Since $\omega_{2(k-1)}\ne 0$, \cref{prop:weightandweightpoints} implies $N_1>0$, i.e. $L_{U\times U}$ has at least one point of weight one. As a consequence, by Theorem \ref{th:sizelinset} and Remark \ref{rmk:sizelinsetk=n+1} the size of $L_{U\times U}$ is at least $q^{2k-1}+1$ and so the assertion follows from \cref{prop:weightandweightpoints} .
\end{proof}

%%%%%%%%%%%%%%%%%%%
\begin{remark}
Note that for $k=3$ the lower bound of Corollary \ref{cor:w2k-2neq0thenfU} is tight for any $n\geq 5$ and $q\geq 2$. Indeed, consider $U\in\cG_q(3,n)$ as in Theorem \ref{thm:classificationcase3dim} Case (III.2). From Equation \eqref{eq:weightdistribcase3dim} 
and \cref{prop:fUlambda} we obtain $f_U=\lmb_1+\lmb_2+1=\frac{q^5-1}{q-1}$.
\end{remark}
%%%%%%%%%%%%%%%%%%%

We can remove the assumption $\omega_{2(k-1)}\neq 0$ in Corollary \ref{cor:w2k-2neq0thenfU} to derive a new bound, which makes use of the following recent result.

%%%%%%%%%%%%%%%%%%%
\begin{theorem} {(see \cite[Theorem 2]{csajbok2023maximum})} \label{th:geometricfieldoflinearity}
Let $L_W$ be an $\F_q$-linear set of $W$-rank $h$ in $\PG(r-1,q^n)$ with $n\leq q$ and  $h\leq (r-1)n$.
Let $e=\min\{w_{W}(P) \colon P \in L_W \}$ and suppose $e>1$. Then there exists an integer $t \geq e$ such that $t \mid n$ and $L_W=L_V$ where $V=\langle W \rangle_{\F_{q^t}}$.
\end{theorem}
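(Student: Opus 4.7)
My plan is to take $\F_{q^t}$ to be the maximum field of linearity of $W$ inside $\F_{q^n}$; with this choice $V:=\langle W\rangle_{\F_{q^t}}$ coincides with $W$ itself, so the equality $L_V=L_W$ is automatic and the entire content of the theorem reduces to proving the inequality $t\geq e$.

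First I would verify that $M:=\{\alpha\in\F_{q^n}:\alpha W\subseteq W\}$ is an $\F_q$-subalgebra of $\F_{q^n}$: closure under sums and products is immediate from the $\F_q$-linearity of scalar action, and since $\F_{q^n}$ is a field, any finite $\F_q$-subalgebra is a subfield, giving $M=\F_{q^t}$ with $t\mid n$. Because $W$ is thereby an $\F_{q^t}$-subspace of the ambient $\F_{q^n}$-space, every intersection $W\cap\langle\mathbf{w}\rangle_{\F_{q^n}}$ is $\F_{q^t}$-linear, and its $\F_q$-dimension $w_W(\langle\mathbf{w}\rangle_{\F_{q^n}})$ is a multiple of $t$. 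In particular $t\mid e_P$ for every $P\in L_W$ and hence $t\mid e$, so the target $t\geq e$ is equivalent to $t=e$.

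I would establish $t\geq e$ by contradiction, assuming $t<e$. For each nonzero $\mathbf{w}\in W$ consider the $\F_{q^t}$-subspace $K_{\mathbf{w}}:=\{\alpha\in\F_{q^n}:\alpha\mathbf{w}\in W\}$ of $\F_{q^n}$, whose $\F_q$-dimension equals $w_W(\langle\mathbf{w}\rangle_{\F_{q^n}})\geq e>t$. The objective is to produce some $\alpha_0\in\F_{q^n}\setminus\F_{q^t}$ satisfying $\alpha_0 W\subseteq W$, which would contradict the maximality of $\F_{q^t}$. A natural approach is to pick $\mathbf{w}_1,\mathbf{w}_2\in W$ spanning an $\F_{q^n}$-plane $\Pi$ of the ambient space, restrict $L_W$ to the projective line $\PG(\Pi,\F_{q^n})$, and compare the structures of $K_{\mathbf{w}_1}$ and $K_{\mathbf{w}_2}$ using the weights of the resulting rank-$h'$ linear set on $\PG(1,q^n)$, whose weights are still all at least $e$.

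The hard part will be converting this local data into the global statement $\alpha_0 W\subseteq W$, and here the hypothesis $n\leq q$ must play the decisive role. Concretely, each $K_{\mathbf{w}}$ is the preimage of $W$ under the $\F_q$-linear map $\alpha\mapsto\alpha\mathbf{w}$, hence a root set in $\F_{q^n}$ of a suitable $q$-linearized polynomial of $q$-degree at most $n-1$; when $n\leq q$, degree bounds for such polynomials become tight enough that the dimension hypothesis $\dim_{\F_q}K_{\mathbf{w}}\geq e$ at \emph{every} $\mathbf{w}$ should propagate into multiplicative closure of one of the $K_{\mathbf{w}}$. Once a single $K_{\mathbf{w}}$ is known to be a subfield $\F_{q^{t'}}$ of $\F_{q^n}$ with $t'\geq e$, an averaging argument over the remaining $\mathbf{w}\in W$ would force the same subfield to stabilize all of $W$, producing the desired $\alpha_0$. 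The bulk of the technical work lies in making this propagation rigorous via a careful analysis of the linearized polynomials attached to the $K_{\mathbf{w}}$, likely mirroring the approach used in the proof of \cref{th:sizelinset}.
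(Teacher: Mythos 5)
Your opening reduction misreads what the theorem asserts, and this sinks the whole plan. The conclusion is about the \emph{geometric} field of linearity: the point set $L_W$ is also representable as an $\F_{q^t}$-linear set via the (generally strictly larger) subspace $V=\langle W\rangle_{\F_{q^t}}$. It does \emph{not} say that $W$ itself is closed under $\F_{q^t}$-scalars. By choosing $t$ so that $\F_{q^t}$ is the stabilizer (algebraic maximum field of linearity) of $W$, you replace the statement by the strictly stronger claim that this stabilizer has degree $t\geq e$ — and that claim is false under the stated hypotheses. Concretely, take $n=6$, $q\geq 7$, $r=2$, and $W=\F_{q^3}\times U\subseteq \F_{q^6}\times\F_{q^6}$ with $U$ a $2$-dimensional $\F_q$-subspace of $\F_{q^3}$. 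Then $h=5\leq n$, every point of $L_W$ has weight $2$ or $3$ (so $e=2>1$), and $L_W=L_V$ with $V=\langle W\rangle_{\F_{q^3}}=\F_{q^3}\times\F_{q^3}$, so the theorem holds with $t=3$; but the stabilizer of $W$ is exactly $\F_q^*$ (indeed $5$ is coprime to $2,3,6$, and the stabilizer of $U$ in $\F_{q^3}$ is $\F_q$). Hence the $\alpha_0\notin\F_{q^t}$ with $\alpha_0W\subseteq W$ that your contradiction argument needs simply does not exist, and your second-paragraph observation that $t$ divides every weight only confirms that your $t$ can be $1$ while $e>1$.

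Beyond this structural error, the third paragraph is a heuristic outline rather than a proof: the passage from ``$\dim_{\F_q}K_{\mathbf{w}}\geq e$ for all $\mathbf{w}$'' to multiplicative closure of some $K_{\mathbf{w}}$, and the subsequent ``averaging'' to stabilize all of $W$, are exactly the hard points and are left unargued (and, as the example shows, the final goal is unreachable anyway). Note also that this paper does not prove the statement at all — it quotes it from Csajb\'ok–Marino–Pepe \cite{csajbok2023maximum}, where the proof proceeds along genuinely different lines (analyzing when the same point set admits representations over different subfields, which is where the hypotheses $n\leq q$ and $h\leq (r-1)n$ enter), rather than by enlarging the stabilizer of the fixed subspace $W$. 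A correct proof must work with $V=\langle W\rangle_{\F_{q^t}}$ and show no new points arise, not with $W$ alone.
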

%%%%%%%%%%%%%%%%%%%

%%%%%%%%%%%%%%%%%%%
\begin{corollary} \label{cor:geometricfieldoflinearity}
Let $n\leq q$ and $L_W$ be an $\F_q$-linear set of $W$-rank $h$ in $\PG(1,q^n)=\PG(V,\fqn)$ where  $h\leq n$ and $|L_W|>1$.
Let $e=\min\{w_{W}(P) \colon P \in L_W \}$. Then there exists an integer $t\in\{e,\ldots,n-1\}$ such that $t\mid n$ and
\[
    |L_W| \geq q^{h-t}+1.
\]
Furthermore, if $e=1$ or $n$ is a prime number, then $t=1$.
\end{corollary}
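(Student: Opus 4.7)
The plan is to split on whether $e=1$ or $e\ge 2$ and, in each case, to reduce to \cref{th:sizelinset}, after possibly replacing~$W$ by a larger $\F_{q^t}$-subspace via \cref{th:geometricfieldoflinearity}.

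First, if $e=1$, then $L_W$ contains a point of $W$-weight one and $h\le n$, so \cref{th:sizelinset} directly yields $|L_W|\ge q^{h-1}+1$. Taking $t=1$ disposes of this case and of the ``furthermore'' clause when $e=1$.

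Assume now $e\ge 2$. Applying \cref{th:geometricfieldoflinearity} with $r=2$ (the rank hypothesis $h\le(r-1)n$ becomes exactly $h\le n$) produces an integer $t_0\ge e$ with $t_0\mid n$ such that $L_W=L_V$ for $V=\langle W\rangle_{\F_{q^{t_0}}}$. I choose $t_0$ maximal among divisors of~$n$ for which such a reduction exists. The key claim is that either $t_0=n$, or the minimum $V$-weight of $L_V$, viewed as an $\F_{q^{t_0}}$-linear set in $\PG(1,(q^{t_0})^{n/t_0})$, equals~$1$. Otherwise $t_0<n$ and the minimum $V$-weight is at least~$2$; using $n/t_0\le n\le q\le q^{t_0}$ one may apply \cref{th:geometricfieldoflinearity} to $L_V$ over $\F_{q^{t_0}}$ to obtain $t''\ge 2$ with $t''\mid n/t_0$ and $L_V=L_{V''}$ for $V''=\langle V\rangle_{\F_{q^{t_0t''}}}=\langle W\rangle_{\F_{q^{t_0t''}}}$, contradicting the maximality of~$t_0$ since $t_0t''\mid n$ and $t_0t''>t_0$.

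Now, if $t_0=n$ then $V=\langle W\rangle_{\fqn}$ is an $\fqn$-subspace of the ambient $2$-dimensional $\fqn$-space, and the hypothesis $|L_W|>1$ rules out $\dim_{\fqn}V\le 1$, so $V$ fills the whole ambient space; hence $L_W$ is the full projective line and $|L_W|=q^n+1\ge q^{h-1}+1$, so we take $t=1$. This subsumes the case $n$ prime, since then the only divisors of $n$ are $\{1,n\}$ and $t_0\ge e\ge 2$ forces $t_0=n$. If instead $t_0<n$, the minimum $V$-weight equals~$1$ by the claim, and \cref{th:sizelinset} (with \cref{rmk:sizelinsetk=n+1} in the borderline rank case) applied to $L_V$ over $\F_{q^{t_0}}$ gives $|L_V|\ge (q^{t_0})^{h'-1}+1$, where $h'=\dim_{\F_{q^{t_0}}}(V)$. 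Since $W\subseteq V$ yields $t_0h'=\dim_{\F_q}(V)\ge h$, we deduce $|L_W|=|L_V|\ge q^{h-t_0}+1$, and we take $t=t_0\in\{e,\dots,n-1\}$.

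The main obstacle is the maximality argument showing that we can upgrade $t_0$ until the minimum $V$-weight over $\F_{q^{t_0}}$ drops to~$1$. This iteration is precisely where the hypothesis $n\le q$ enters, ensuring $n/t_0\le q^{t_0}$ so that \cref{th:geometricfieldoflinearity} can be reapplied to $L_V$. The degenerate outcome $t_0=n$ must be separated out because there $V$ becomes $\fqn$-closed and $L_W$ collapses to the entire projective line, which is exactly what forces $t=1$ in the prime case.
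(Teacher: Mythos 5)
Your overall strategy---enlarge the field of linearity via \cref{th:geometricfieldoflinearity} and then invoke \cref{th:sizelinset}---is the same as the paper's, and your maximality argument is a clean substitute for the paper's explicit iteration $t_1,\dots,t_s$. The genuine problem is your treatment of the case $t_0=n$. There you accept the outcome $L_W=\PG(1,q^n)$ and ``take $t=1$''; but in that branch $e\geq 2$, so $t=1\notin\{e,\dots,n-1\}$ and the statement is not established (it explicitly demands $t\geq e$, and this lower bound is exactly what is used later, e.g.\ in \cref{th:bounds f_U}). The correct move, and the paper's, is to show that $t_0=n$ cannot occur at all: if $\dim_{\fqn}\langle W\rangle_{\fqn}=2$, then $L_W=\PG(1,q^n)$ has $q^n+1$ points, whereas a linear set of $W$-rank $h\leq n$ has at most $(q^h-1)/(q-1)\leq (q^n-1)/(q-1)<q^n+1$ points by \eqref{eq:card}, a contradiction (the case $\dim_{\fqn}\langle W\rangle_{\fqn}=1$ is excluded by $|L_W|>1$, as you note). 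This also repairs your reading of the ``furthermore'' clause: for $n$ prime, $e\geq 2$ would force $t_0=n$, which is impossible, so in fact $e=1$ and one is in the first case with $t=1$; it is not that the line-collapse case ``forces $t=1$''.

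A smaller omission: each time you reapply \cref{th:geometricfieldoflinearity} (and \cref{th:sizelinset}) to $L_V$ over $\F_{q^{t_0}}$, you verify $n/t_0\leq q^{t_0}$ but not the rank hypothesis $\dim_{\F_{q^{t_0}}}(V)\leq n/t_0$. This does hold, but by the same size argument as above (a rank exceeding $n/t_0$ would force $L_V=\PG(1,q^n)$, which has too many points), and it should be stated; it also renders your appeal to \cref{rmk:sizelinsetk=n+1} unnecessary in this step.
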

%%%%%%%%%%%%%%%%%%%

\begin{proof}
If $e=1$, Theorem \ref{th:sizelinset} implies the assertion for $t=1$. 
Thus, let $e>1$.
Then by Theorem \ref{th:geometricfieldoflinearity} there exists an integer $t_1\in\{e,\ldots,n\}$ such that  $t_1\mid n$ and $L_W=L_{V_1}$ where $V_1=\langle W \rangle_{\F_{q^{t_1}}}$.
\\
i) Suppose $t_1=n$. 
Then $V_1=\subspace{W}_{\fqn}$ and $\dim_{\fqn}(V_1)\in\{1,2\}$.
If $\dim_{\fqn}(V_1)=1$, then $L_W=L_{V_1}$ consists just of the point $V_1$, contradicting the assumption that $|L_W|>1$.
If $\dim_{\fqn}(V_1)=2$, then $L_W=L_{V_1}=\mathrm{PG}(1,q^n)$, a contradiction to $h\leq n$. 
\\
ii) Thus $t_1<n$. 
Note that $L_{V_1}$ is also an $\F_{q^{t_1}}$-linear set and 
\[
    h_1:=\dim_{\F_{q^{t_1}}}(V_1)\geq \frac{\dim_{\fq}(W)}{t_1}=\frac{h}{t_1}.
\]
For any $w\in W$ and $P=\langle w\rangle_{\fqn}\in L_W$ we have
\[			
    w_{W}(P)=\dim_{\fq}(\subspace{w}_{\fqn}\!\cap\! W)\leq \dim_{\fq}(\subspace{w}_{\fqn}\!\cap\! \langle W\rangle_{\F_{q^{t_1}}})
        = t_1\dim_{\F_{q^{t_1}}}(\subspace{w}_{\fqn}\!\cap\! V_1)=t_1w_{V_1}(P).
\]
Therefore $e\leq w_{W}(P)\leq t_1 w_{V_1}(P)$ for any $P=\langle w\rangle_{\fqn}\in L_W$.
Let $e_2=\min\{w_{V_1}(P) \colon P \in L_{V_1} \},$ where $L_{V_1}$ is considered as an $\F_{q^{t_1}}$-linear set in $\PG(1,(q^{t_1})^{\frac{n}{t_1}})$. If $e_2=1$, then by Theorem  \ref{th:sizelinset} we get
\[
    |L_W|=|L_{V_1}|\geq (q^{t_1})^{h_1-1}+1\geq q^{h-t_1}+1,
\]
and the assertion is proved with $t=t_1$. 
Let $e_2>1$.
Since $h_1\leq \frac{n}{t_1}$ and $\frac{n}{t_1}\leq q^{t_1}$, we may apply \cref{th:geometricfieldoflinearity} again and obtain an integer 
$t_2\in\{e_2,\ldots,\frac{n}{t_1}-1\}$ such that  $t_2\mid\frac{n}{t_1}$ and  $L_W=L_{V_1}=L_{V_2}$, where 
$V_2=\subspace{V_1}_{\F_{q^{t_1t_2}}}$.
Furthermore,
\[
    h_2=\dim_{\F_{q^{t_1t_2}}}(V_2)\geq \frac{\dim_{\F_{q^{t_1}}}(V_1)}{t_2}=\frac{h_1}{t_2} \geq \frac{h}{t_1t_2}.
\]
Proceeding in this fashion, we arrive after a finite number $s$ of steps at a sequence $t_1$,..., $t_s$ such that 
$t_1t_2\cdots t_s \mid  n$ and $L_W=L_{V_1}=....=L_{V_s}$,  where $V_i=\subspace{V_{i-1}}_{\F_{q^{t_1t_2\ldots t_i}}}$
and $e_s=\min\{w_{V_s}(P) \colon P \in L_{V_s} \}=1,$ and where $L_{V_s}$ is considered as an $\F_{q^{t_1t_2\cdots t_s}}$-linear set.
Moreover,
\[
   h_s=\dim_{\F_{q^{t_1t_2\cdots t_s}}}(V_s)\geq \frac{h_{s-1}}{t_s} \geq \frac{h}{t_1t_2\cdots t_s}.
\]
Now applying Theorem \ref{th:sizelinset} to $L_{V_s}$ we obtain
\[
    |L_W|=|L_{V_s}|\geq (q^{t_1t_2\cdots t_s})^{h_s-1}+1\geq q^{h-t}+1,
\]
where $t=t_1t_2\cdots t_s$.
This concludes the proof.
\end{proof}

Now we are ready for the following theorem, which provides a better upper bound on the parame\-ter~$f_U$ than \cite[Proposition 3.4]{heideweight}. 
This also answers an open problem posed in \cite[Section 6]{heideweight}.

%%%%%%%%%%%%%%%%%%%
\begin{theorem} \label{th:bounds f_U}
Let $k\leq n/2$ and $U \in\cG_q(k,n)$. 
Let $(\omega_2,\ldots,\omega_{2k})$ be the distance distribution of $\Orb(U)$ and set 
$l=\max\{i \in \{1,\ldots,k-1\}  \colon  \omega_{2i}\ne 0\}$. Then
\[
    f_U\leq \frac{(q^{2k}-1)-(|\text{Stab}(U)|+2)(q^k-q^{k-l})-2(q^{k-l}-1)}{(q^{k-l}-1)(q-1)}.
\]
Moreover, if $q\geq n$, then there exists a divisor~$s$ of~$n$ such that $k-l\leq s< n$ and
\begin{equation}\label{eq:boundf2} 
    f_U\geq \frac{q^{2k-s}-1}{q-1}.
\end{equation}
In particular,  if  $l=k-1$ or $n$ is prime, then $s=1$. 
\end{theorem}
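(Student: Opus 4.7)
The plan is to translate the hypothesis about the distance distribution of $\Orb(U)$ into information about the $(U\times U)$-weight distribution of $L_{U\times U}$ via \cref{prop:weightandweightpoints}, and then apply the linear-set estimates already recalled. Suppose $\mathrm{Stab}(U)=\F_{q^t}^*$. By \cref{prop:weightandweightpoints}, $\omega_{2i}\ne 0$ is equivalent to $N_{k-i}\ne 0$ for $i\in\{1,\dots,k\}$, $N_k=|\mathrm{Stab}(U)|+2=q^t+1$, and $N_i=0$ for $i>k$. Hence the definition of $l$ forces $N_j=0$ for $1\le j<k-l$ while $N_{k-l}\ne 0$, so the minimum weight of $L_{U\times U}$ is exactly $e=k-l$.

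For the upper bound, I would apply identity \eqref{eq:pesi} to the linear set $L_{U\times U}$, which has rank $2k$, and use the vanishing of $N_i$ for $i<k-l$ or $i>k$ together with the explicit value of $N_k$ to obtain
\[
\sum_{i=k-l}^{k-1}N_i(q^i-1)+(|\mathrm{Stab}(U)|+2)(q^k-1)=q^{2k}-1.
\]
Since $q^i-1\ge q^{k-l}-1$ for $i\in\{k-l,\dots,k-1\}$ and $\sum_{i=k-l}^{k-1}N_i=|L_{U\times U}|-(|\mathrm{Stab}(U)|+2)$, this yields
\[
(q^{k-l}-1)\bigl(|L_{U\times U}|-(|\mathrm{Stab}(U)|+2)\bigr)\le q^{2k}-1-(|\mathrm{Stab}(U)|+2)(q^k-1).
\]
Solving for $|L_{U\times U}|$ and substituting into $f_U=(|L_{U\times U}|-2)/(q-1)$ (from \cref{prop:weightandweightpoints}) produces the claimed upper bound after a short algebraic simplification; the $-2(q^{k-l}-1)$ summand in the numerator comes from the subtraction of $2$ before dividing by $q-1$.

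For the lower bound, I would apply \cref{cor:geometricfieldoflinearity} to $L_{U\times U}$. The hypothesis $k\le n/2$ gives $h=2k\le n$, one has $|L_{U\times U}|\ge 2$ trivially (via $\langle(1,0)\rangle_{\fqn}$ and $\langle(0,1)\rangle_{\fqn}$), and the assumption $q\ge n$ makes the corollary applicable. Since the minimum weight of $L_{U\times U}$ is $e=k-l$, the corollary yields a divisor $s$ of $n$ with $k-l\le s<n$ and $|L_{U\times U}|\ge q^{2k-s}+1$; converting once more through $f_U=(|L_{U\times U}|-2)/(q-1)$ gives \eqref{eq:boundf2}. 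The final assertions follow from the last clause of \cref{cor:geometricfieldoflinearity}: when $l=k-1$ the minimum weight is $e=1$, and when $n$ is prime the only admissible divisor strictly less than $n$ is $1$, so in both cases $s=1$.

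The only real difficulty is the bookkeeping around the weight distribution: correctly identifying the minimum weight of $L_{U\times U}$ from $l$, isolating the contribution of the top weight $N_k$ (which encodes $|\mathrm{Stab}(U)|$), and tracking the factor $q-1$ when passing between $|L_{U\times U}|$ and $f_U$. Once this dictionary is in place, both halves of the theorem reduce to invocations of \eqref{eq:pesi} and \cref{cor:geometricfieldoflinearity} respectively.
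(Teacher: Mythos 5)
Your proposal is correct and follows essentially the same route as the paper: translate the distance distribution into the weight distribution of $L_{U\times U}$ via \cref{prop:weightandweightpoints} (so the minimum weight is $k-l$ and $N_k=|\mathrm{Stab}(U)|+2$), bound $|L_{U\times U}|$ by the weight-counting identity, and get the lower bound from \cref{cor:geometricfieldoflinearity} exactly as the paper does. The only cosmetic difference is in the upper bound, where the paper passes to the stabilizer field $\F_{q^t}$ (using $t\mid k$ and $t\mid(k-l)$) so as to invoke \cref{upper boundlinearsetcompwe}, whereas you rederive the same inequality directly over $\F_q$ from \eqref{eq:pesi} using $N_k=q^t+1$; both yield the identical estimate.
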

%%%%%%%%%%%%%%%%%%%

\begin{proof}
Let $W=U\times U$ and consider the $\fq$-linear set $L_W$ of $W$-rank $2k$  in $\PG(1,q^n)$. 
The definition of $\omega_{2j}$ (see \cref{D-WeightInterDistr}) tells us that $l$ is the maximum number less than~$k$ such that there exists
$\alpha\in\fqn^*$ with $\dim_{\fq}(U\cap \alpha U)=2(k-l)$.
Thus \eqref{e-lessk} implies
\[
    k-l= \min\{w_{W}(P) \colon P \in L_W \}.
\]
Let $\mathrm{Stab}(U)=\F_{q^t}^*$.
Then $t\mid k$ and $t\mid(k-l)$.
Consider now $U$ as a $\F_{q^t}$-vector space. As such it has dimension $k/t$.
Moreover, the $\F_{q^t}$-linear set~$L_W$ has minimum $W$-weight $(k-l)/t$.
Applying Proposition \ref{upper boundlinearsetcompwe}  we obtain
\[
    |L_{W}|\leq \frac{(q^{2k}-1)-(q^t+1)(q^k-q^{k-l})}{(q^{k-l}-1)}
\]
and the first part of the assertion follows by Proposition \ref{prop:weightandweightpoints}. 
The second part follows by  Corollary \ref{cor:geometricfieldoflinearity} applied to the $\F_{q}$-linear set $L_W$ together with Proposition \ref{prop:weightandweightpoints}. 
\end{proof}

We now return to the intersection distribution $(\lambda_0,\lambda_1,\lambda_2)$ of quasi-optimal orbit codes as described in \cref{thm:thm4.1heide}.
We derive necessary conditions for the case $\lmb_1=0$. 
Furthermore, in the case where $\lmb_1\neq 0$ we will make use of \cref{cor:w2k-2neq0thenfU} in order to 
provide an upper bound on the parameter $\lmb_2$, which improves upon the one given in \cite{heideweight}.

%%%%%%%%%%%%%%%%%%%
\begin{corollary}
\label{cor:fUconsequencequasi-optimal}
Let $U\in\cG_q(k,n)$ generate a quasi-optimal full-length orbit code and let $(\lmb_0,\lmb_1,\lmb_2)$ be its intersection distribution. Moreover, let $\epsilon=1$ if $U$ contains a cyclic shift of $\F_{q^2}$ and $\epsilon=0$ otherwise.
We have the following cases.
\begin{enumerate}[label=(\arabic*),leftmargin=1.5em,labelwidth=1.5em,labelsep=0em,align=left, topsep=-1.3ex]
    \item[(a)] $\lmb_1=0$. In this case
        \begin{itemize}
            \item[(a.1)]\  if $k=3$, then $\epsilon=1$ and $\lmb_2=q+q^2(q+1)$.
            \vspace{0.1cm}
            \item[(a.2)]\  if $k>3$, then $(q+1)\mid\big(\lfloor\frac{k}{2}\rfloor-\epsilon\big)$.   
        \end{itemize}
            \vspace{0.2cm}
    \item[(b)] $\lmb_1\neq 0$. In this case $\lmb_2=\epsilon q+rq(q+1)$ for some
         integer $r$ such that
    \[
                  1-\epsilon\leq r\leq \frac{(q^{k-1}-1)(q^{k-2}-1)}{(q+1)(q-1)^2}-\epsilon.
    \]
    In particular, if $k=3$, then $\lmb_2=q$ if $U$ contains a cyclic shift of $\F_{q^2}$ and $\lambda_2=q(q+1)$ otherwise.
\end{enumerate}
\end{corollary}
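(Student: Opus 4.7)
The plan is to combine three ingredients: (i) the parametrization $\lambda_2=\epsilon q+rq(q+1)$ with $r\geq 1-\epsilon$ supplied by \cref{thm:thm4.1heide}; (ii) the identity $f_U=1+\lambda_1+\lambda_2=1+Q-q\lambda_2$ from \cref{prop:fUlambda} together with $\lambda_1=Q-(q+1)\lambda_2$; and (iii) the explicit value $Q=q(q^k-1)(q^{k-1}-1)/(q-1)^2$. Throughout I abbreviate $[j]_q=(q^j-1)/(q-1)$, so that $Q=q[k]_q[k-1]_q$. The analysis splits according to whether $\lambda_1$ vanishes.

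\emph{Case (a): $\lambda_1=0$.} Here $\lambda_2=Q/(q+1)$, and the requirement that $\lambda_2-\epsilon q$ be a nonnegative integer multiple of $q(q+1)$ is equivalent, after clearing denominators, to the congruence
\[
   [k]_q\,[k-1]_q\equiv \epsilon(q+1)\pmod{(q+1)^2}.
\]
To evaluate the left-hand side I would write $q=-1+(q+1)$, expand $q^j=\sum_{i=0}^{j}\binom{j}{i}(-1)^{j-i}(q+1)^i$, discard all terms with $i\geq 2$, and sum over $j=0,\ldots,k-1$. This yields $[k]_q\equiv \tfrac{k}{2}(q+1)\pmod{(q+1)^2}$ when $k$ is even and $[k]_q\equiv 1-\tfrac{k-1}{2}(q+1)\pmod{(q+1)^2}$ when $k$ is odd. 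Multiplying by the analogous formula for $[k-1]_q$ (with the parities swapped; the $(q+1)^2$ cross-terms vanish) gives the uniform identity
\[
    [k]_q[k-1]_q\equiv \lfloor k/2\rfloor(q+1)\pmod{(q+1)^2}.
\]
Combined with the previous congruence this forces $(q+1)\mid(\lfloor k/2\rfloor-\epsilon)$, proving (a.2). For $k=3$ we have $\lfloor k/2\rfloor=1$, so $(q+1)\mid(1-\epsilon)$ forces $\epsilon=1$ (since $q+1\geq 3$); then $\lambda_2=Q/(q+1)=q(q^2+q+1)=q+q^2(q+1)$, proving (a.1).

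\emph{Case (b): $\lambda_1\neq 0$.} As the orbit is full-length, $\lambda_i=\omega_{2(k-i)}$, so $\omega_{2(k-1)}=\lambda_1\neq 0$, and \cref{cor:w2k-2neq0thenfU} gives $f_U\geq[2k-1]_q$. Using $f_U=1+Q-q\lambda_2$ together with the algebraic identity
\[
   1+Q-[2k-1]_q=q^2[k-1]_q[k-2]_q
\]
(a short, direct manipulation), I obtain $\lambda_2\leq q[k-1]_q[k-2]_q$. For $\epsilon=0$, writing $\lambda_2=r'q(q+1)$ immediately gives $r'\leq[k-1]_q[k-2]_q/(q+1)$; this is an integer because one of the indices $k-1,k-2$ is even and $[j]_q\equiv 0\pmod{q+1}$ whenever $j$ is even. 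For $\epsilon=1$, writing $\lambda_2=q+rq(q+1)$ gives $r(q+1)\leq[k-1]_q[k-2]_q-1$; but both $r(q+1)$ and $[k-1]_q[k-2]_q$ are multiples of $q+1$, so the sharpest admissible bound is $r(q+1)\leq[k-1]_q[k-2]_q-(q+1)$, i.e.\ $r\leq[k-1]_q[k-2]_q/(q+1)-1$, matching the claim. For $k=3$ the upper bound reduces to $r\leq 1-\epsilon$, and combined with $r\geq 1-\epsilon$ this pins $r=1-\epsilon$; hence $\lambda_2=q$ if $\epsilon=1$ and $\lambda_2=q(q+1)$ if $\epsilon=0$.

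The main technical obstacle is the modular computation in Case (a): one must verify $[k]_q[k-1]_q\equiv\lfloor k/2\rfloor(q+1)\pmod{(q+1)^2}$ by a careful parity argument, making sure the quadratic $(q+1)^2$-contributions in the product cancel. The divisibility tightening in Case~(b) with $\epsilon=1$ is also subtle but straightforward once one notes that $[k-1]_q[k-2]_q$ is itself divisible by $q+1$.
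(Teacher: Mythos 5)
Your proposal is correct and follows essentially the same route as the paper: it combines \cref{thm:thm4.1heide}, the relations $q\lambda_2=Q-(f_U-1)$, $q\lambda_1=(q+1)(f_U-1)-Q$, and \cref{cor:w2k-2neq0thenfU}, reducing (a) to a congruence for $\frac{(q^k-1)(q^{k-1}-1)}{(q-1)^2}$ and (b) to the same bound $\lambda_2\leq q\,\frac{(q^{k-1}-1)(q^{k-2}-1)}{(q-1)^2}$ followed by an integrality tightening. The only differences are cosmetic: you carry out the modular computation modulo $(q+1)^2$ via the expansion $q=(q+1)-1$, while the paper divides by $q+1$ first and works modulo $q+1$ with an explicit summation identity, and in (b) with $\epsilon=1$ you tighten using divisibility by $q+1$ on both sides where the paper uses a floor argument — both verifications are sound.
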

%%%%%%%%%%%%%%%%%%%

\begin{proof}
By \cref{thm:thm4.1heide} we have that
\begin{equation}
\label{eq2:lmb2}
     \lmb_2=\epsilon q+ rq(q+1)\ \text{ for some }\ r\geq1-\epsilon.
\end{equation}
Furthermore, from \cite[Equation (4.3) and Corollary 4.2]{heideweight} we know that 
\begin{equation}
\label{eq2:lmb2lmb1}
       q\lmb_2=Q-(f_U-1)\ \text{ and }\ q\lmb_1=(q+1)(f_U-1)-Q,\ \text{ where }\ Q=\frac{(q^k-1)(q^k-q)}{(q-1)^2}.
\end{equation}
(a) Let $\lmb_1=0$. Then $f_U-1=\frac{Q}{q+1}$ and \eqref{eq2:lmb2} and  \eqref{eq2:lmb2lmb1} imply
\begin{equation}\label{e-lmb2}
    \lambda_2=\epsilon q+rq(q+1)=\frac{1}{q}(Q-(f_U-1))=\frac{Q}{q+1},
\end{equation}
which in turn leads to
\begin{equation}\label{e-lmb22}
    \frac{(q^{k}-1)(q^{k-1}-1)}{(q^2-1)(q-1)} \equiv \epsilon \, (\mmod{q+1}). 
\end{equation}
If $k=3$,  \eqref{e-lmb2} implies $\lmb_2=\frac{Q}{q+1}=q+q^2(q+1)$ and therefore $\epsilon=1$.
Thus suppose $k>3$.
Set
\[
   t=\big\lfloor\frac{k}{2}\big\rfloor\ \text{ and }\delta=
   \left\{\begin{array}{cl}1,&\text{if $k$ is odd},\\ -1,&\text{if $k$ is even}.\end{array}\right.
\]
In the next computation we make use of the identity $\sum_{i=0}^{t-1}q^{2i}=\sum_{i=0}^{t-2}(t-i-1)q^{2i}(q^2-1)+t$, which can be easily verified.
Using twice that $2t+\delta-1$ is even, we derive
\begin{align*}
    \frac{(q^{k}-1)(q^{k-1}-1)}{(q^2-1)(q-1)}&= \frac{q^{2t}-1}{q^2-1} \frac{q^{2t+\delta}-1}{q-1}   
      =\Big(\sum_{i=0}^{t-1}q^{2i}\Big)\Big(\sum_{i=0}^{2t+\delta-1}q^i\Big)\\
      &=\Big(\sum_{i=0}^{t-2}(t-i-1)q^{2i}(q^2-1)+t\Big)\Big(q^{2t+\delta-1}+(q+1)\sum_{i=0}^{2t+\delta-3}q^i\Big)\\
      &\equiv t q^{2t+\delta-1}\, (\mmod{q+1})\equiv t\,  (\mmod{q+1}).
\end{align*}
Now \eqref{e-lmb22} implies $\epsilon\equiv t \  (\mmod{q+1})$, which proves the statement.
\\
(b) Let $\lmb_1=\omega_{2(k-1)}\neq 0$. Then \cref{cor:w2k-2neq0thenfU} implies $f_U\geq (q^{2k-1}-1)/(q-1)$. 
Hence \eqref{eq2:lmb2} and  \eqref{eq2:lmb2lmb1} lead to
\begin{align*}
   \epsilon q^2+rq^2(q+1)&=Q-f_U+1\\
     &\leq Q-\frac{q^{2k-1}-1}{q-1}+1\\
     &=\frac{(q^k-1)(q^k-q)-(q^{2k-1}-1)(q-1)+(q-1)^2}{(q-1)^2} =\frac{q^2(q^{k-1}-1)(q^{k-2}-1)}{(q-1)^2},
\end{align*}
from which we get
\[
r\leq \left\lfloor\frac{(q^{k-1}-1)(q^{k-2}-1)}{(q-1)^2(q+1)}-\frac{\epsilon}{(q+1)}\right\rfloor.
\]
Observing that $\frac{(q^{k-1}-1)(q^{k-2}-1)}{(q^2-1)(q-1)}$ is an integer, we obtain the desired upper bound.
\end{proof}

%%%%%%%%%%%%%%%%%%%
\begin{remark}\label{R-k3Case}
The previous corollary implies that if $\mathrm{Orb}(U)$ is a quasi-optimal full-length orbit code with $k=3$, then $\lambda_2=q,\,q(q+1)$, or $q+q^2(q+1)$, and the latter case implies $\lambda_1=0$. 
\cite[Table~2]{heideweight} shows that all cases do indeed occur.
Furthermore, for $k=4$ and $\epsilon=1$ we have $r\leq q^2+q$. 
Thus, if $q=2$, the maximum value for $\lambda_2$ is $38$ (if $\lambda_1\neq0$), and this does indeed occur; see again \cite[Table~2]{heideweight}
(note  that in this case $\lmb_1=96\neq 0$). 
\end{remark}
%%%%%%%%%%%%%%%%%%%

%%%%%%%%%%%%%%%%%%%
\begin{remark}
Let us briefly compare the upper bound on $\lmb_2$ obtained in \cref{cor:fUconsequencequasi-optimal} with that in \cite[Corollary 4.2]{heideweight}.
The latter tell us that
\[
      \lmb_2\leq \frac{Q}{q+1}=\frac{(q^k-q)(q^k-1)}{(q-1)^2(q+1)}
\]
(with equality if and only if $\lambda_1=0$).
On the other hand, if $\lambda_1\neq0$, \cref{cor:fUconsequencequasi-optimal} implies 
\[
\lmb_2\leq \frac{(q^k-q)(q^{k-2}-1)}{(q-1)^2}-\epsilon q^2,
\]
One easily verifies that $q^{k-2}-1\leq (q^k-1)/(q+1)$ for all $q\geq2$, and therefore the second bound is better than the former one.
\end{remark}
%%%%%%%%%%%%%%%%%%%

%%%%%%%%%%%%%%%%%%%
\section{Constructions of quasi-optimal codes}
In this section we construct families of quasi-optimal full-length orbit codes in $\cG_q(k,2k)$. 
As a consequence we prove that when $n$ is even, quasi-optimal full-length orbit codes in $\cG_q(k,n)$ exist for any $q\geq 2$ 
and any $3\leq k\leq \frac{n}{2}$.
More precisely, we will study orbits generated by subspaces of the form $U_{s,\gamma}=\lbrace u+u^{q^s}\gamma\colon u\in\fqk\rbrace\in\cG_q(k,n)$, where~$k$ is a divisor of~$n$,\, $s\in\{1,\ldots,k-1\}$ such that $\gcd(s,k)=1$, and $\gamma\in\fqn$ such that  
$\fqn=\fqk(\gamma)$.

We will need, for any divisor $k$ of~$n$, the field norm 
$\N_{q^n/q^k}: \F_{q^n}^*\longrightarrow\F_{q^k}^*,\ a\longmapsto a^{(q^n-1)/(q^k-1)}$.

%%%%%%%%%%%%%%%%%%%
\begin{remark}
\label{rmk:UsgammaoptimaliffNormneq1}
In \cite{Roth}, the authors study the subspaces $U_{s,\gamma}$.  
In particular, they investigate when these subspaces define optimal full-length orbit codes, i.e., when $U_{s,\gamma}$ is a Sidon space in $\fqn$;
see Theorem \ref{lem:charSidon}. 
In \cite[Theorem 12]{Roth} it is shown that if $k$ is any divisor of~$n$ less than $n/2$ and $s=1$, then 
$\cO_{s,\gamma}=\Orb(U_{s,\gamma})$ is an optimal full-length orbit code in 
$\cG_q(k,n)$. 
Furthermore, in \cite[Theorem 16]{Roth} it is proven that if $q>2,\,n=2k,\,k>2$ and $\gamma\in\F_{q^{2k}}\setminus\F_{q^k}$ is a root of $x^2+bx+c=0$ for some $b,c\in\fqk$ such that $\N_{q^k/q}(c)\neq 1$, then $\cO_{s,\gamma}$ is an optimal full-length orbit code in $\cG_q(k,2k)$. 
Subsequently, in \cite[Theorem 4.5]{CPSZSidon} the authors prove that for any prime power $q$, any $k>2$ and $n=2k$ the code $\cO_{s,\gamma}$ is an optimal full-length orbit code if and only if $\N_{q^{2k}/q}(\gamma)\neq 1$.
\end{remark}
%%%%%%%%%%%%%%%%%%%

The previous remark tells us that if $\cO_{s,\gamma}$ is not optimal then $n=2k,\,\N_{q^n/q}(\gamma)=1$, and  $k>2$ (see Remark \ref{rmk:k=2nooptimal}).
We will show in \cref{P-OptQuasiOpt} that in these cases $\cO_{s,\gamma}$ is in fact quasi-optimal. 
In addition, in this section we will investigate when the subspace $U_{s,\gamma}$ contains a cyclic shift of $\F_{q^2}$.
As we saw in \cref{thm:thm4.1heide}, this impacts the distance distribution of $\cO_{s,\gamma}$.

Throughout the rest of this section we fix the following notation.

%%%%%%%%%%%%
\begin{definition}\label{D-Usgamma}
Let~$k>2,\,n=2k$, and $s\in\{1,\ldots,k-1\}$ be such that $\gcd(s,k)=1$. Furthermore, let~$\gamma\in\fqn$ be such that  
$\fqn=\fqk(\gamma)$.
Define
\[
U_{s,\gamma}=\lbrace u+u^{q^s}\gamma\colon u\in\fqk\rbrace\in\cG_q(k,n)
\]
and set $\cO_{s,\gamma}=\mathrm{Orb}(U_{s,\gamma})$.
\end{definition}
%%%%%%%%%%%%%%%%

In order to study the properties of the codes $\cO_{s,\gamma}$ we need some preparation.
The last two parts of the following lemma will be used frequently in the remainder of this section. 
The first part will be needed later in Section~\ref{S-AutGroups}.

%%%%%%%%%%%%%%%
\begin{lemma}\label{L-theta}
Let $p$ be a prime and $q=p^h$. Let $t\leq hn$ and define $\theta_{p^t}:\Fqn\rightarrow\Fqn,\ a\mapsto a^{p^t-1}$.
\begin{enumerate}[label=(\alph*),leftmargin=1.5em,labelwidth=1.5em,labelsep=0em,align=left, topsep=-1.3ex]
\item Let~$t$ be a divisor of $hn$. Then $\theta_{p^t}^{-1}(\theta_{p^t}(b))=b\F_{p^t}$ for all $b\in\Fqn$.
\item Let~$k$ be a divisor of~$n$. Then $\theta_q(\F_{q^k})=\theta_{q^s}(\F_{q^k})$ for all $s\in[k]$ such that $\gcd(s,k)=1$.
\item Let~$k$ be a divisor of~$n$. Then $\ker\N_{q^n/q^k}=\theta_{q^k}(\F_{q^n}^*)$. Moreover, $\N_{q^n/q^k}(-1)=(-1)^{n/k}$.
\end{enumerate}
\end{lemma}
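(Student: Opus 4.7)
The plan is to exploit systematically that $\fqn^*$ is cyclic of order $q^n-1$, together with the classical identity $\gcd(q^a-1,\,q^b-1)=q^{\gcd(a,b)}-1$. Each of the three parts then reduces to comparing orders of subgroups of a cyclic group, where the fact that such a group has a unique subgroup of every divisor order forces equality.

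For part (a), I would observe that $\theta_{p^t}$ restricts to a group endomorphism of $\fqn^*$ with kernel $\{x\in\fqn^*\colon x^{p^t-1}=1\}$. Since $t\mid hn$, the subfield $\F_{p^t}$ sits inside $\fqn$, and its nonzero elements are precisely the roots of $x^{p^t-1}=1$ in $\fqn$. Hence $\ker\theta_{p^t}=\F_{p^t}^*$, and the fiber over $\theta_{p^t}(b)$ for any $b\in\fqn^*$ is the coset $b\F_{p^t}^*$, which together with the trivial case $b=0$ yields the statement.

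For part (b), both $\theta_q(\fqk^*)$ and $\theta_{q^s}(\fqk^*)$ are subgroups of the cyclic group $\fqk^*$ of respective orders $(q^k-1)/\gcd(q^k-1,\,q-1)$ and $(q^k-1)/\gcd(q^k-1,\,q^s-1)$. Using $\gcd(q^s-1,\,q^k-1)=q^{\gcd(s,k)}-1=q-1$ (because $\gcd(s,k)=1$), both subgroups have order $(q^k-1)/(q-1)$, so they coincide; adding the common image of $0$ concludes the proof.

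For part (c), the inclusion $\theta_{q^k}(\fqn^*)\subseteq\ker\N_{q^n/q^k}$ is immediate since $\N_{q^n/q^k}(a^{q^k-1})=a^{q^n-1}=1$ for every $a\in\fqn^*$. Both sides are subgroups of $\fqn^*$ of order $(q^n-1)/(q^k-1)$: the left-hand side by the $\gcd$ identity applied to $\gcd(q^n-1,\,q^k-1)=q^k-1$, and the right-hand side because $\N_{q^n/q^k}$ is a surjective homomorphism onto $\fqk^*$ (a generator $g$ of $\fqn^*$ is mapped to $g^{(q^n-1)/(q^k-1)}$, which has order $q^k-1$); hence equality. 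For the final claim, if $q$ is even then $-1=1$ and both sides are $1$, while if $q$ is odd, the exponent $(q^n-1)/(q^k-1)=1+q^k+q^{2k}+\cdots+q^{(n/k-1)k}$ is a sum of $n/k$ odd terms, so it has the parity of $n/k$, giving $(-1)^{(q^n-1)/(q^k-1)}=(-1)^{n/k}$. No step here presents a genuine obstacle; everything is routine bookkeeping about orders in a cyclic group and about the finite-field norm.
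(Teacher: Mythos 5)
Your proposal is correct and follows essentially the same route as the paper: cyclic-group order counting with the identity $\gcd(q^a-1,q^b-1)=q^{\gcd(a,b)}-1$ for (a) and (b), and the same parity argument for $\N_{q^n/q^k}(-1)$ in (c). The only difference is cosmetic: in (b) you invoke uniqueness of subgroups of a given order where the paper uses the containment $\theta_{q^s}(\F_{q^k}^*)\subseteq\theta_q(\F_{q^k}^*)$, and in (c) you spell out (via surjectivity of the norm) the kernel description that the paper simply cites as well known.
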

%%%%%%%%%%%%%%%%

\begin{proof}
(a) ``$\supseteq$'' is clear. For the converse let $a\in\theta_{p^t}^{-1}(\theta_{p^t}(b))$. Thus $a^{p^t-1}=b^{p^t-1}$, and this implies
$(a/b)^{p^t}=a/b$. Hence $a/b\in\F_{p^t}$ and $a\in b\F_{p^t}$.
\\
(b) 
Let $\F_{q^k}^*=\subspace{\alpha}$. Then $\theta_{q}(\F_{q^k}^*)=\subspace{\alpha^{q-1}}$ and
$\theta_{q^s}(\F_{q^k}^*)=\subspace{\alpha^{q^s-1}}$. 
Since $\gcd(s,k)=1$, we have $\text{ord}(\alpha^{q-1})=(q^k-1)/(q-1)=\text{ord}(\alpha^{q^s-1})$
and the result follows from the obvious containment $\theta_{q^s}(\F_{q^k}^*)\subseteq\theta_{q}(\F_{q^k}^*)$. 
\\
(c) The statement about the kernel is well known. The identity $\N_{q^n/q^k}(-1)=(-1)^{n/k}$ only needs verification for~$q$ odd.
In that case let $n=\ell k$. 
Then $(q^n-1)/(q^k-1)=1+q^k+\ldots+q^{(\ell-1)k}$, and this is even if and only if $\ell$ is even.
\end{proof}

We also need $\cL_{q^s}(\F_{q^k})=\{f_0X+f_1X^{q^s}+\ldots+f_{t}X^{q^{st}}\colon t\in\NN_0,\,f_i\in\F_{q^k}\}$, that is, $\cL_{q^s}(\F_{q^k})$ is the space of $q^s$-polynomials with coefficients in $\F_{q^k}$.
For $f\in\cL_{q^s}(\F_{q^k})$ we denote by $\hat{f}$ the associated $\F_q$-linear map $\F_{q^k}\longrightarrow\F_{q^k}, x\longmapsto \hat{f}(x):=f(x)$.

%%%%%%%%%%%%%%%%%%%
\begin{lemma}\cite[Theorem 5]{rootspolynomials},\cite[Theorem 10]{rootspolynomialsnorm} (see also \cite[Lemma 3]{sheekey2016new})
\label{lem:qspolynomialsrootsandnorm}
Let $\gcd(s,k)=1$ and $f=f_0X+f_1X^{q^s}+\ldots+f_{t}X^{q^{st}}\in\cL_{q^s}(\fqk)$ be of $q^s$-degree $t\geq0$. Then
\[
    \dim_{\F_{q}}\ker\hat{f}\leq t.
\]
Moreover, if $\dim_{\F_{q}}\ker\hat{f}=t$ then $\N_{q^k/q}(f_0)=(-1)^{stk}\N_{q^k/q}(f_t)$.
\end{lemma}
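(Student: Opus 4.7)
My plan is to prove both parts by analyzing the zero set of $f$ inside $\F_{q^{sk}}$, the central structural input being that $\gcd(s,k)=1$ forces $\F_{q^s}$ and $\F_{q^k}$ to be linearly disjoint over $\F_q$ in $\F_{q^{sk}}$.

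For the dimension bound I would first note that $\hat f$ is $\F_q$-linear on $\F_{q^k}$: since $c^{q^{si}}=c$ for $c \in \F_q$, we have $\hat f(cx)=c\hat f(x)$. Set $K := \ker\hat f$, $d := \dim_{\F_q} K$, and pick an $\F_q$-basis $v_1,\dots,v_d \in \F_{q^k}$ of $K$. The zero set of $f$ in $\overline{\F_q}$ is an $\F_{q^s}$-subspace, because $f$ is a $q^s$-polynomial, and it has at most $\deg f = q^{st}$ elements. By linear disjointness the $v_i$ remain $\F_{q^s}$-linearly independent in $\F_{q^{sk}}$, so their $\F_{q^s}$-span $W \subseteq \F_{q^{sk}}$ is an $\F_{q^s}$-subspace of dimension $d$ and lies entirely in the zero set of $f$. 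Counting gives $q^{sd} = |W| \leq q^{st}$, i.e., $d \leq t$.

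For the norm identity assume $d=t$. Then $W$ equals the full root set of $f$, producing the factorization $f(X) = f_t \prod_{w \in W}(X-w) \in \F_{q^{sk}}[X]$. Comparing the coefficient of $X$ on both sides and using $(-1)^{q^{st}-1}=1$ in every characteristic yields
\[
   f_0 = f_t \prod_{w \in W\setminus\{0\}} w,
\]
so in particular $\prod_{w\in W\setminus\{0\}} w = f_0/f_t$ lies in $\F_{q^k}$ and $\N_{q^k/q}$ can be applied. To evaluate $\N_{q^k/q}$ of this product, I would decompose $W\setminus\{0\}$ into its $(q^{st}-1)/(q^s-1)$ one-dimensional $\F_{q^s}$-subspaces; each line $\F_{q^s} v_L$ contributes $\prod_{c\in \F_{q^s}^*}(cv_L) = -v_L^{q^s-1}$. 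Although individual $v_L$ may not lie in $\F_{q^k}$, the global product does and is fixed by $\Gal(\F_{q^{sk}}/\F_{q^k})$; grouping the lines into $\langle\Frob_q\rangle$-orbits inside $\Gal(\F_{q^{sk}}/\F_q)$ and using $\N_{q^k/q}(v_L)^{q^s-1}=1$ whenever $v_L \in \F_{q^k}^*$ (because then $\N_{q^k/q}(v_L)\in\F_q^*$) reduces the computation to a parity count that collapses to $(-1)^{stk}$.

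The dimension bound is essentially a linear-disjointness argument once the $\F_q$-linearity of $\hat f$ is noticed. The technical obstacle sits in the norm step: the line decomposition of $W\setminus\{0\}$ is not $\Frob_q$-equivariant in an obvious way, so one has to identify carefully which lines are swapped by Galois and which are self-conjugate, and to combine this with the parity of $(q^{st}-1)/(q^s-1)$ in odd characteristic. In characteristic $2$ every sign is trivial and the identity reduces to the factorization above; in odd characteristic the bookkeeping has to be organized so as to deliver precisely the exponent $stk$.
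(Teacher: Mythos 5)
The paper itself does not prove this lemma (it is quoted from Gow--Quinlan and from Sheekey), so your attempt has to stand on its own. Your first part does: the linear-disjointness argument (an $\F_q$-basis of $\ker\hat{f}\subseteq\F_{q^k}$ stays $\F_{q^s}$-independent because $\gcd(s,k)=1$, so its $\F_{q^s}$-span $W$ sits inside the root set of $f$, which has at most $q^{st}$ elements) is exactly the standard proof that $\dim_{\F_q}\ker\hat{f}\le t$. The gap is in the norm identity. After the correct reduction $f_0=f_t\prod_{w\in W\setminus\{0\}}w$, the whole content of the statement is the evaluation of $\N_{q^k/q}$ of that product, and this is precisely the step you leave as a ``parity count'' to be organized later. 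The line decomposition you propose does not obviously deliver it: for $s>1$ and $t\ge2$ a line $\F_{q^s}v_L$ meets $\F_{q^k}$ in an $\F_q$-space of dimension at most one (the ratio of two nonzero elements of the line lies in $\F_{q^s}\cap\F_{q^k}=\F_q$), so only $(q^t-1)/(q-1)$ of the $(q^{st}-1)/(q^s-1)$ lines contain any element of $\F_{q^k}$ at all; your device $\N_{q^k/q}(v_L)^{q^s-1}=1$ applies to none of the others, and no mechanism is offered for their contribution. A way to finish in the same spirit: take an $\F_q$-basis $v_1,\dots,v_t$ of $\ker\hat{f}$ and the $q^s$-Moore determinant $\Delta=\det\bigl(v_i^{q^{s(j-1)}}\bigr)_{i,j=1}^{t}$, which lies in $\F_{q^k}^*$ and is nonzero by your independence argument. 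The $(t+1)\times(t+1)$ Moore determinant in $v_1,\dots,v_t,X$ is a $q^s$-polynomial in $X$ of degree $q^{st}$ with leading coefficient $\Delta$ vanishing on $W$, hence equals $\Delta\prod_{w\in W}(X-w)$; comparing coefficients of $X$ gives $f_0/f_t=(-1)^t\Delta^{q^s-1}$, and applying $\N_{q^k/q}$ yields $\N_{q^k/q}(f_0/f_t)=(-1)^{tk}\,\N_{q^k/q}(\Delta)^{q^s-1}=(-1)^{tk}$, because $\N_{q^k/q}(\Delta)\in\F_q^*$ and $(q-1)\mid(q^s-1)$.

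Note what this produces: the exponent is $tk$, not $stk$, so your claim that the bookkeeping ``collapses to $(-1)^{stk}$'' cannot be made good when $s$ is even and $t,k$ are odd. A concrete check: $q=3$, $k=3$, $s=2$, $t=1$, $f(X)=-v^{8}X+X^{9}$ with $v$ a generator of $\F_{27}^*$. Then $\ker\hat{f}=\F_3v$ has dimension $1=t$, and $\N_{27/3}(f_0)=(-1)^{13}v^{104}=-1=-\N_{27/3}(f_1)$, whereas $(-1)^{stk}=+1$. So the sign a correct argument gives is $(-1)^{tk}$; this agrees with $(-1)^{stk}$ whenever $s$ is odd, matches the cited $s=1$ statements, and is immaterial for the way the paper applies the lemma (there $t=2$, so both signs are $+1$), but your proof plan should target $(-1)^{tk}$ rather than the printed exponent, and in any case the Galois/parity step you defer is the part that still needs to be written down.
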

%%%%%%%%%%%%%%%%%%%

%%%%%%%%%%%%%%%%%%%
\begin{lemma}\label{L-UalphaU}
Let $U:=U_{s,\gamma}$ be as in \cref{D-Usgamma}.
Fix $\alpha\in\F_{q^{2k}}^*$ and write $\alpha=\alpha_0+\alpha_1\gamma$ with $\alpha_0,\alpha_1\in\fqk$.
Define
\begin{equation}\label{e-falpha}
   f_{\alpha}=-\alpha_1X+(\alpha_0^{q^s}-\alpha_0-\alpha_1A)X^{q^s}+\alpha_1^{q^s}B^{q^s}X^{q^{2s}}\in\cL_{q^s}(\F_{q^k})
\end{equation}
where $A=\Tr_{q^{2k}/q^k}(\gamma)$ and $B=-\N_{q^{2k}/q^k}(\gamma)$.
Then for any $x\in\F_{q^k}$
\[
    \alpha (x+x^{q^s}\gamma)\in U\cap\alpha U \Longleftrightarrow x\in\ker\hat{f}_{\alpha}.
\]
As a consequence, $\dim_{\fq}(U\cap\alpha U)=\dim_{\fq}(\ker\hat{f}_{\alpha})$
and $\hat{f}_\alpha\equiv0$ if and only if $\alpha\in\fq$.
\end{lemma}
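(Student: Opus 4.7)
The proof is essentially a direct computation exploiting that $\{1,\gamma\}$ is an $\fqk$-basis of $\F_{q^{2k}}$ together with the quadratic relation for $\gamma$.

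First I would record the key algebraic identity. Since $\gamma\in\F_{q^{2k}}\setminus\fqk$ and $[\F_{q^{2k}}:\fqk]=2$, its minimal polynomial over $\fqk$ is $X^2-\Tr_{q^{2k}/q^k}(\gamma)X+\N_{q^{2k}/q^k}(\gamma)$, so
\[
\gamma^2=A\gamma+B,\quad\text{where }A=\Tr_{q^{2k}/q^k}(\gamma),\ B=-\N_{q^{2k}/q^k}(\gamma).
\]

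Next, I would expand $\alpha(x+x^{q^s}\gamma)=(\alpha_0+\alpha_1\gamma)(x+x^{q^s}\gamma)$ using this relation, collect the result in the $\fqk$-basis $\{1,\gamma\}$, and obtain
\[
\alpha(x+x^{q^s}\gamma)=\bigl(\alpha_0 x+\alpha_1Bx^{q^s}\bigr)+\bigl(\alpha_1 x+(\alpha_0+\alpha_1A)x^{q^s}\bigr)\gamma.
\]
Membership in $U$ means this equals $u+u^{q^s}\gamma$ for some $u\in\fqk$. Uniqueness of coordinates in the basis $\{1,\gamma\}$ forces $u=\alpha_0x+\alpha_1Bx^{q^s}$ and $u^{q^s}=\alpha_1 x+(\alpha_0+\alpha_1A)x^{q^s}$. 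Raising the first equation to the $q^s$-th power and subtracting the second eliminates $u$, and the remaining condition is precisely
\[
-\alpha_1 x+(\alpha_0^{q^s}-\alpha_0-\alpha_1A)x^{q^s}+\alpha_1^{q^s}B^{q^s}x^{q^{2s}}=0,
\]
i.e.\ $x\in\ker\hat{f}_\alpha$. Since $\alpha(x+x^{q^s}\gamma)\in U\cap\alpha U$ is equivalent to $\alpha(x+x^{q^s}\gamma)\in U$, this proves the equivalence.

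For the dimension statement, the map $\varphi:\fqk\to \alpha U$, $x\mapsto \alpha(x+x^{q^s}\gamma)$, is $\fq$-linear and injective (if $x+x^{q^s}\gamma=0$ then $x=x^{q^s}=0$ by $\fqk$-linear independence of $1,\gamma$), hence an isomorphism; the equivalence just proved then gives $U\cap\alpha U=\varphi(\ker\hat{f}_\alpha)$ and so $\dim_{\fq}(U\cap\alpha U)=\dim_{\fq}\ker\hat{f}_\alpha$. Finally, $\hat{f}_\alpha\equiv 0$ forces in turn $\alpha_1=0$ and then $\alpha_0^{q^s}=\alpha_0$, i.e.\ $\alpha_0\in\F_{q^s}\cap\fqk=\fq$ (using $\gcd(s,k)=1$); the converse is immediate. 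Thus $\hat{f}_\alpha\equiv 0\iff\alpha\in\fq$.

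No serious obstacle is anticipated: the only subtle point is correctly unwinding the equation $u^{q^s}=(\text{r.h.s.\ of the first line})^{q^s}$ to produce $f_\alpha$, which is purely mechanical given the quadratic relation for $\gamma$.
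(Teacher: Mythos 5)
Your computation is exactly the paper's: expand $\alpha(x+x^{q^s}\gamma)$ in the basis $\{1,\gamma\}$ using $\gamma^2=A\gamma+B$, force the coordinates to be $u$ and $u^{q^s}$, eliminate $u$ to get $f_\alpha(x)=0$, and identify $U\cap\alpha U$ with the image of $\ker\hat{f}_\alpha$ under the injective $\fq$-linear map $x\mapsto\alpha(x+x^{q^s}\gamma)$. The only soft spot is the final step: you assert that $\hat{f}_\alpha\equiv0$ \emph{forces} $\alpha_1=0$ and $\alpha_0^{q^s}=\alpha_0$, but vanishing of the map on $\F_{q^k}$ does not immediately give vanishing of the coefficients, since $X^{q^{2s}}$ can have degree $\geq q^k$. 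To close this you need either the paper's route (by \cref{lem:qspolynomialsrootsandnorm}, a nonzero $q^s$-polynomial of $q^s$-degree at most $2$ has kernel of dimension at most $2<k$, so $\hat{f}_\alpha\equiv0$ forces $f_\alpha=0$ as a polynomial) or, equivalently, the observation that on $\F_{q^k}$ the three maps $x\mapsto x$, $x\mapsto x^{q^s}$, $x\mapsto x^{q^{2s\bmod k}}$ are pairwise distinct Frobenius powers (here $k>2$ and $\gcd(s,k)=1$ are used) and hence linearly independent over $\F_{q^k}$. With that one sentence added, your proof is complete and coincides with the paper's.
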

%%%%%%%%%%%%%%%%%%%

\begin{proof}
By assumption on $\gamma$ in \cref{D-Usgamma} the set $\{1, \gamma\}$ is an $\fqk$-basis of $\F_{q^{2k}}$. 
Thus there exists $A,B\in\fqk$ such that $\gamma^2=A\gamma+B$ and in fact $A=\Tr_{q^{2k}/q^k}(\gamma)$ and $B=-\N_{q^{2k}/q^k}(\gamma)$.
Fix   $x\in\F_{q^k}$. Then 
\begin{align*}
   \alpha(x+\gamma x^{q^s})\in U\cap\alpha U
    &\Longleftrightarrow\alpha(x+\gamma x^{q^s})=y+\gamma y^{q^s}\text{ for some }y\in\fqk\\
    &\Longleftrightarrow (\alpha_0+\alpha_1\gamma)(x+\gamma x^{q^s})=y+\gamma y^{q^s}\text{ for some }y\in\fqk\\
    &\Longleftrightarrow \left\{\begin{array}{ccl}
        y&\!\!=\!\!&\alpha_0x+\alpha_1Bx^{q^s},\\
        y^{q^s}&\!\!=\!\!&\alpha_0 x^{q^s}+\alpha_1 x+\alpha_1 A x^{q^s}\end{array}\right\}\text{ for some }y\in\fqk\\
    &\Longleftrightarrow -\alpha_1 x+(\alpha_0^{q^s}-\alpha_0-\alpha_1A) x^{q^s}+\alpha_1^{q^s}B^{q^s}x^{q^{2s}}=0\\
    &\Longleftrightarrow x\in\ker\hat{f}_\alpha.
\end{align*}
As a consequence, the map $\phi: \ker\hat{f}_{\alpha}\longrightarrow U\cap \alpha U,\ x\mapsto\alpha(x+x^{q^s}\gamma)$, 
is an $\fq$-isomorphism and therefore 
$\dim_{\fq}(U\cap\alpha U)=\dim_{\fq}\ker\hat{f}_{\alpha}$.
Finally, using \cref{lem:qspolynomialsrootsandnorm} and the fact that $k>2$, we arrive at
\[
   \hat{f}_{\alpha}\equiv0\Longleftrightarrow\dim_{\F_q}\ker\hat{f}_\alpha=k\Longleftrightarrow f_\alpha=0
   \Longleftrightarrow[\alpha_1=0\text{ and }\alpha_0\in\F_{q^s}\cap\F_{q^k}=\fq]\Longleftrightarrow\alpha\in\fq.
   \qedhere
\] 

\end{proof}

Now we are ready to determine the distance distribution of the codes $\cO_{s,\gamma}\subseteq \cG_q(k,2k), k>2$.
The following result generalizes  \cite[Theorem~16]{Roth} and \cite[Theorem 4.5]{CPSZSidon}; see also
\cref{rmk:UsgammaoptimaliffNormneq1}. 

%%%%%%%%%%%%%%%%%%%
\begin{proposition}\label{P-OptQuasiOpt}
Let $U_{s,\gamma}$ and $\cO_{s,\gamma}$ be as in \cref{D-Usgamma}. 
Then
\begin{enumerate}[label=(\arabic*),leftmargin=1.5em,labelwidth=1.5em,labelsep=0em,align=left, topsep=-1.3ex]
\item $\cO_{s,\gamma}$ is a full-length orbit code.
\item $d(\cO_{s,\gamma})\in\{2k-4,2k-2\}$, i.e.  $\cO_{s,\gamma}\subseteq \cG_q(k,2k)$ is  quasi-optimal or optimal. 
\item $\cO_{s,\gamma}$ is quasi-optimal $\Longleftrightarrow\N_{q^{2k}/q}(\gamma)=1$.
\item $\cO_{s,\gamma}$ is optimal $\Longleftrightarrow\N_{q^{2k}/q}(\gamma)\neq1$.
\end{enumerate}
As a consequence, for $q=2$, the orbit code $\cO_{s,\gamma}$ is quasi-optimal.
\end{proposition}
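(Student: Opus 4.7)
My plan hinges on \cref{L-UalphaU}, which translates the intersection dimension $\dim_{\fq}(U_{s,\gamma}\cap\alpha U_{s,\gamma})$ into the kernel dimension of the $q^s$-polynomial $\hat{f}_\alpha$ of $q^s$-degree at most $2$, together with \cref{lem:qspolynomialsrootsandnorm}, which bounds the kernel dimension of a $q^s$-polynomial by its $q^s$-degree.

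For parts (1) and (2), the key observation is that for every $\alpha\in\fqn^*\setminus\fq^*$ the map $\hat{f}_\alpha$ is nonzero by the consequence in \cref{L-UalphaU}. Hence \cref{lem:qspolynomialsrootsandnorm} yields $\dim_{\fq}\ker\hat{f}_\alpha\leq 2<k$, and so $\dim_{\fq}(U_{s,\gamma}\cap\alpha U_{s,\gamma})\leq 2<k$. This forces $\alpha U_{s,\gamma}\neq U_{s,\gamma}$, proving $\mathrm{Stab}(U_{s,\gamma})=\fq^*$ and hence (1). The same bound gives $d(U_{s,\gamma},\alpha U_{s,\gamma})\geq 2k-4$, and combined with the general inequality $d(\cO_{s,\gamma})\leq 2k-2$ for full-length orbit codes, this yields (2).

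For parts (3) and (4), I would invoke the cited result \cite[Theorem 4.5]{CPSZSidon} recalled in \cref{rmk:UsgammaoptimaliffNormneq1}: for $k>2$ and $n=2k$ the code $\cO_{s,\gamma}$ is an optimal full-length orbit code if and only if $\N_{q^{2k}/q}(\gamma)\neq 1$. Since (1) ensures full-length unconditionally, this immediately gives (4); then (3) follows from (2) and (4), for if $\N_{q^{2k}/q}(\gamma)=1$ then $\cO_{s,\gamma}$ is full-length but not optimal, and by (2) it is therefore quasi-optimal. The consequence for $q=2$ is immediate, since in that case $\N_{q^{2k}/q}(\gamma)\in\fq^*=\{1\}$.

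The necessary direction of (3) can in fact be checked directly by a norm computation, without the CPSZSidon citation: if $\alpha\notin\fq$ satisfies $\dim_{\fq}\ker\hat{f}_\alpha=2$, then necessarily $\alpha_1\neq 0$ (otherwise $f_\alpha$ has $q^s$-degree at most $1$), and applying \cref{lem:qspolynomialsrootsandnorm} to $f_0=-\alpha_1$ and $f_2=\alpha_1^{q^s}B^{q^s}$ with $B=-\N_{q^{2k}/q^k}(\gamma)$, together with the $\fq$-valuedness of $\N_{q^k/q}$, collapses to $\N_{q^k/q}(B)=(-1)^k$, which after using $\N_{q^k/q}\circ\N_{q^{2k}/q^k}=\N_{q^{2k}/q}$ becomes $\N_{q^{2k}/q}(\gamma)=1$. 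The genuinely hard part of a fully self-contained proof would be the converse direction, namely producing an explicit $\alpha\notin\fq$ with $\dim_{\fq}\ker\hat{f}_\alpha=2$ whenever $\N_{q^{2k}/q}(\gamma)=1$; this sufficiency of the norm condition (beyond what \cref{lem:qspolynomialsrootsandnorm} provides) is precisely the nontrivial content imported from \cite{CPSZSidon}.
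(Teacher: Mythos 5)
Your proposal is correct and follows essentially the same route as the paper: parts (1) and (2) via \cref{L-UalphaU} and \cref{lem:qspolynomialsrootsandnorm}, and the norm characterization in (3)--(4) via \cite[Theorem 4.5]{CPSZSidon} combined with (1) and (2). The only cosmetic difference is that the paper proves the direction ``quasi-optimal $\Rightarrow \N_{q^{2k}/q}(\gamma)=1$'' by exactly the explicit norm computation you sketch at the end, citing \cite{CPSZSidon} (together with \cref{lem:charSidon}) only for the converse, whereas you defer to the citation for both directions; either way the argument is sound.
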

%%%%%%%%%%%%%%%%%%%

\begin{proof}
By \cref{L-UalphaU} and \cref{lem:qspolynomialsrootsandnorm} we have  for all $\alpha\in\F_{q^n}\setminus \fq$ that
$f_\alpha$ as in \eqref{e-falpha} is nonzero and 
\begin{equation}
\label{eq:diminters=dimker}
    \dim_{\fq}(U_{s,\gamma}\cap\alpha U_{s,\gamma})=\dim_{\fq}(\ker\hat{f}_{\alpha})\leq q^s\text{-degree of } f_{\alpha}.
\end{equation}
(1) Suppose there exists $\alpha\in\mathrm{Stab}(U_{s,\gamma})\setminus\fq$. 
Thus $U_{s,\gamma}=\alpha U_{s,\gamma}$.
Hence \eqref{eq:diminters=dimker} implies that the $q^s$-degree of $f_{\alpha}$ is at least~$k$.
Since $k>2$, this contradicts the definition of $f_{\alpha}$.
Thus $\mathrm{Stab}(U_{s,\gamma})=\fq^*$ and $\cO_{s,\gamma}$ is a full-length orbit code. 
\\
(2) Let $\alpha\in\fqn\setminus\fq$. Again, \eqref{eq:diminters=dimker} implies $\dim_{\fq}(U_{s,\gamma}\cap\alpha U_{s,\gamma})\leq 2$. 
Therefore, $d(\cO_{s,\gamma})$ is $2k-2$ or $2k-4$, 
which means that $\cO_{s,\gamma}$ is either quasi-optimal or optimal.
\\
(3) 
Suppose that $\cO_{s,\gamma}$ is quasi-optimal and let $\alpha\in\fqn\setminus\fq$ be such that 
$\dim_{\fq}(U_{s,\gamma}\cap \alpha U_{s,\gamma})=2$. 
Then we have equality in \eqref{eq:diminters=dimker} and 
\cref{lem:qspolynomialsrootsandnorm} implies that $\N_{q^{k}/q}(-\alpha_1)=(-1)^{2sk}\N_{q^{k}/q}(\alpha_1^{q^s}B^{q^s})$.
Noting that $\N_{q^{k}/q}(-\alpha_1)=(-1)^k\N_{q^{k}/q}(\alpha_1)$ we obtain
\[
  (-1)^k=\N_{q^{k}/q}(\alpha_1^{q^s-1})\N_{q^k/q}(B^{q^s})=\N_{q^k/q}(B)=\N_{q^{k}/q}(-\N_{q^{2k}/q^k}(\gamma))=(-1)^k\N_{q^{2k}/q}(\gamma).
\]
Hence $\N_{q^{2k}/q}(\gamma)=1$. 
Conversely, if $\N_{q^{2k}/q^k}(\gamma)=1$, then by \cite[Theorem 4.5]{CPSZSidon}, $U_{s,\gamma}$ is not a Sidon space and 
Theorem \ref{lem:charSidon} implies that $\cO_{s,\gamma}$ is not optimal.
Hence by~(2) $\cO_{s,\gamma}$ is quasi-optimal. 
\\ 
(4) follows from~(3) and~(2).
\end{proof}

Later in \cref{P-CountOpt} we will determine the number of quasi-optimal codes and of optimal codes of the form $\cO_{s,\gamma}$ for any $n=2k$ and~$q$.

As a consequence of \cref{P-OptQuasiOpt}, we prove the existence of quasi-optimal full-length orbit codes in $\cG_q(k,n)$ for any $n$ even and $3\leq k\leq \frac{n}{2}$.

%%%%%%%%%%%%%%%%%%%
\begin{theorem} \label{thm:existencequasi-optimal}
If $n$ is even, then there exist quasi-optimal full-length orbit codes in 
$\cG_q(k,n)$ for every $3\leq k \leq \frac{n}{2}$ and for every $q\geq 2$.
\end{theorem}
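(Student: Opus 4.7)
The plan is to derive the theorem as a direct consequence of Proposition~\ref{P-OptQuasiOpt}: that proposition produces a quasi-optimal full-length orbit code $\cO_{s,\gamma}\subseteq\cG_q(k,2k)$ whenever we can exhibit $\gamma\in\F_{q^{2k}}\setminus\fqk$ with $\N_{q^{2k}/q}(\gamma)=1$. I would first settle the base case $n=2k$ by producing such a $\gamma$, then extend to larger $n$ by a subfield embedding argument.

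For the base case, observe that the kernel $K$ of the surjective norm map $\N_{q^{2k}/q}\colon\F_{q^{2k}}^*\to\fq^*$ has cardinality $(q^{2k}-1)/(q-1)$, while $K\cap\fqk^*$ is a subgroup of $\fqk^*$ and so has cardinality at most $q^k-1$. The inequality $(q^{2k}-1)/(q-1)>q^k-1$ is routine for $q\geq 2$ and $k\geq 3$, so the set $K\setminus\fqk$ is nonempty; picking any such $\gamma$ and any $s\in\{1,\ldots,k-1\}$ coprime to $k$ (for example $s=1$), Proposition~\ref{P-OptQuasiOpt} supplies the required quasi-optimal full-length orbit code in $\cG_q(k,2k)$.

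For $n$ a multiple of $2k$, I would view the base subspace $U_{s,\gamma}$ as a subspace of $\fqn$ via the inclusion $\F_{q^{2k}}\subseteq\fqn$. The key observation is that for any $\alpha\in\fqn^*$ and any $u\in U_{s,\gamma}\setminus\{0\}\subseteq\F_{q^{2k}}^*$, one has $\alpha u\in\F_{q^{2k}}$ precisely when $\alpha\in\F_{q^{2k}}$. Consequently, for $\alpha\in\fqn^*\setminus\F_{q^{2k}}$ the intersection $U_{s,\gamma}\cap\alpha U_{s,\gamma}$ is trivial, which both prevents $\alpha$ from stabilizing $U_{s,\gamma}$ and keeps the intersection dimension harmless; for $\alpha\in\F_{q^{2k}}^*\setminus\fq^*$, both the full-length property and the bound $\dim_{\fq}(U_{s,\gamma}\cap\alpha U_{s,\gamma})\leq 2$ (attained for some $\alpha$) are inherited from the base case. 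Hence $\Orb_{\fqn}(U_{s,\gamma})$ is a quasi-optimal full-length orbit code in $\cG_q(k,n)$ whenever $2k\mid n$.

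The step I expect to be the main obstacle is the remaining case of $n$ even with $2k\nmid n$, since $U_{s,\gamma}$ lives inside $\F_{q^{2k}}$, which is not a subfield of $\fqn$ in this regime; moreover the alternative $U_{s,\gamma}$-construction from Remark~\ref{rmk:UsgammaoptimaliffNormneq1} (valid for $k\mid n$, $k<n/2$, $s=1$) produces \emph{optimal} rather than quasi-optimal orbits, so it cannot be repurposed here. For $k=3$ these leftover parameters are already dispatched by Cases~(III.2) and (III.3) of Theorem~\ref{T-k3}, which exhibit explicit quasi-optimal generators in $\cG_q(3,n)$ for every $n\geq 5$. For $k\geq 4$ with $2k\nmid n$ I would attempt a hybrid construction: replace $\gamma$ by an element of degree two over a suitable intermediate subfield $\F_{q^r}\supseteq\fqk$ of $\fqn$ and adapt the $q^s$-polynomial root-count of Lemma~\ref{L-UalphaU}, with quasi-optimality controlled by an analog of the norm condition forcing the associated linearized polynomial to have a two-dimensional kernel.
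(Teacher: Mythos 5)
Your base case is fine: the counting argument that the norm-one subgroup $K$ (of size $(q^{2k}-1)/(q-1)$) cannot be contained in $\F_{q^k}^*$ is correct and matches what the paper implicitly uses, and your embedding argument for $2k\mid n$ (trivial intersection $U_{s,\gamma}\cap\alpha U_{s,\gamma}$ for $\alpha\notin\F_{q^{2k}}$, inherited behaviour for $\alpha\in\F_{q^{2k}}^*$) is also sound. The $k=3$ case can indeed be settled separately (the paper does so directly with $U=\langle 1,\lambda,\lambda^2\rangle_{\fq}$, $\lambda\notin\F_{q^2}\cup\F_{q^3}$, rather than by citing Theorem~\ref{T-k3}, but that is a minor point).

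The genuine gap is exactly the case you flag and then leave as a sketch: $k\geq 4$, $n$ even, $2k\nmid n$ (e.g.\ $k=4$, $n=10$). Your proposed ``hybrid construction'' over an intermediate field $\F_{q^r}$ is not carried out and it is not clear it works, so the theorem is not proved in this range. The paper closes this case with a different and quite simple idea that your plan misses: a \emph{subspace-descent} argument. Since $n$ is even, take the construction at full half-dimension, $U:=U_{1,\gamma}\in\cG_q(n/2,n)$ with $\N_{q^n/q}(\gamma)=1$, which is quasi-optimal by Proposition~\ref{P-OptQuasiOpt} regardless of $k$. Pick $\bar\alpha\in\fqn\setminus\fq$ with $S:=U\cap\bar\alpha U$ of dimension $2$, write $S=\bar\alpha S'$ with $S'\subseteq U$, and choose any $k$-dimensional subspace $\hat U\subseteq U$ containing $S+S'$ (possible since $\dim_{\fq}(S+S')\leq 4\leq k$). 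Then $\dim_{\fq}(\hat U\cap\bar\alpha\hat U)=2$, while for every $\alpha\in\fqn\setminus\fq$ one has $\dim_{\fq}(\hat U\cap\alpha\hat U)\leq\dim_{\fq}(U\cap\alpha U)\leq 2<k$, so $\Orb(\hat U)$ is a full-length quasi-optimal code in $\cG_q(k,n)$. This handles all $4\leq k\leq n/2$ uniformly, with no divisibility hypothesis on $n$, and is the step you would need to add (or replace your case analysis with) to complete the proof.
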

%%%%%%%%%%%%%%%%%%%

\begin{proof}
Suppose $k=3$ and $n\geq 6$, where $n$ is even. Consider
\[
U=\langle 1, \lmb, \lmb^2\rangle_{\fq}\subseteq \fqn
\]
where $\lmb\in\fqn\setminus (\F_{q^2}\cup \F_{q^3})$. Thus $\dim_{\fq}(U)=3$.
Since $\lambda \notin \F_{q^3}$, we have $\dim_{\fq}(U\cap \alpha U) \in \{0,1,2\}$ for all $\alpha \in\fqn\setminus \F_{q}$.
Hence $\text{Stab}(U)=\fq^*$.
Moreover,  $U\cap\lmb U= \langle \lambda, \lmb^2\rangle_{\fq}$, %and $U\cap\lmb^2 U= \langle \lmb^2\rangle_{\fq}$ 
and so $\text{Orb}(U)$ is a quasi-optimal full-length orbit code in $\cG_q(3,n)$.\\
Suppose $4\leq k\leq \frac{n}{2}$. Choose $\gamma\in\fqn\setminus\F_{q^{n/2}}$ such that $\N_{q^n/q}(\gamma)=1$ (which is indeed possible; see also \cref{P-CountOpt}).
Consider the subspace $U:=U_{1,\gamma}\in\cG_q(\frac{n}{2},n)$  as in \cref{D-Usgamma}.
Then its orbit $\cO_{1,\gamma}$ is a quasi-optimal full-length orbit code thanks to \cref{P-OptQuasiOpt}.
This means that there exists $\bar{\alpha}\in\fqn \setminus \fq$ such that $\dim_{\fq}(U\cap\bar{\alpha} U)=2$. 
Define $S=U\cap\bar{\alpha} U$. Then $S= \bar{\alpha} S'$ for some $S'\subseteq U$ and $\dim_{\fq}(S')=\dim_{\fq}(S)=2$.
There exists a subspace $\hat{U}\subseteq U$ such that $k=\dim_{\fq}\hat{U}$ and
$S+S'\subseteq \hat{U}$.
Thus $S= \hat{U}\cap\bar{\alpha}\hat{U}$
and therefore $\dim_{\fq}(\hat{U}\cap \bar{\alpha}\hat{U})=2$.
Moreover, $\dim_{\fq}(\hat{U}\cap\alpha\hat{U})\leq \dim_{\fq}(U\cap\alpha U)\leq 2$ for any $ \alpha \in\fqn\setminus \F_{q}$, and therefore 
$\mathrm{Orb}(\hat{U})$ is a quasi-optimal full-length orbit code in $\cG_q(k,n)$.
\end{proof}

In light of Theorem \ref{thm:thm4.1heide} we want to characterize which spaces $U_{s,\gamma}$ contain a cyclic shift of $\F_{q^2}$. 

%%%%%%%%%%%%%%%%%%%
\begin{theorem}\label{T-ContainCycShift}
Let $U_{s,\gamma}$ be as in \cref{D-Usgamma}. 
The following are equivalent.
\begin{enumerate}[label=(\roman*),leftmargin=1.8em,labelwidth=1.8em,labelsep=0em,align=left, topsep=-1.3ex]
\item $U_{s,\gamma}$ contains a cyclic shift of $\F_{q^2}$.
\item There exist two elements of $U_{s,\gamma}$ such that their ratio is in $\F_{q^2}\setminus\fq$.
\item There exist $\eta\in\F_{q^2}\setminus\fq$ and $u,\,v\in\F_{q^k}^*$ such that $u/v\not\in\fq$ and 
      \[
           \gamma=\frac{u-\eta v}{-u^{q^s}+\eta v^{q^s}}.
      \]
\item $k$ is odd and $\N_{q^n/q^2}(\gamma)=-1$.
\end{enumerate}
\end{theorem}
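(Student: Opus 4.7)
The plan is to prove the cycle of equivalences (i) $\Leftrightarrow$ (ii) $\Leftrightarrow$ (iii) $\Leftrightarrow$ (iv).

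The equivalences (i) $\Leftrightarrow$ (ii) and (ii) $\Leftrightarrow$ (iii) are essentially formal. For (i) $\Leftrightarrow$ (ii), note that for any $\eta \in \F_{q^2}\setminus\fq$ we have $\F_{q^2} = \langle 1,\eta\rangle_{\fq}$, so if $\alpha \F_{q^2} \subseteq U_{s,\gamma}$ then $\alpha,\alpha\eta\in U_{s,\gamma}$ have ratio $\eta$; conversely, if $a,b \in U_{s,\gamma}\setminus\{0\}$ with $a/b=\eta$, then $b\F_{q^2} = \langle b, a\rangle_{\fq}\subseteq U_{s,\gamma}$. For (ii) $\Leftrightarrow$ (iii), every nonzero element of $U_{s,\gamma}$ is uniquely $u+u^{q^s}\gamma$ with $u\in\F_{q^k}^*$. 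Setting the ratio of two such elements equal to $\eta$ and using that $\{1,\gamma\}$ is an $\F_{q^k}$-basis of $\F_{q^n}$ yields the identity $u-\eta v=(\eta v^{q^s}-u^{q^s})\gamma$, which rearranges to the formula in (iii); the side condition $u/v\notin\fq$ is precisely what forces the ratio $\eta$ to lie outside $\fq$.

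For (iii) $\Rightarrow$ (iv), I first argue $k$ must be odd: if $k$ were even, $\F_{q^2}\subseteq\F_{q^k}$, so both $u-\eta v$ and $\eta v^{q^s}-u^{q^s}$ would lie in $\F_{q^k}$, forcing $\gamma\in\F_{q^k}$ and contradicting $\F_{q^n}=\F_{q^k}(\gamma)$. Then, using $\eta^{q^2}=\eta$ and $u,v\in\F_{q^k}$, I compute
\[
\N_{q^n/q^2}(\gamma) = \prod_{i=0}^{k-1}\frac{u^{q^{2i}}-\eta v^{q^{2i}}}{\eta v^{q^{s+2i}}-u^{q^{s+2i}}}.
\]
Because $k$ is odd and $\gcd(s,k)=1$, the maps $i\mapsto 2i$ and $i\mapsto s+2i$ induce permutations of $\Z/k\Z$. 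Combined with $u^{q^k}=u$ and $v^{q^k}=v$, both the numerator and the denominator collapse to $\prod_{j=0}^{k-1}(u^{q^j}-\eta v^{q^j})$, up to an overall sign of $(-1)^k$ in the denominator. This yields $\N_{q^n/q^2}(\gamma)=(-1)^{-k}=-1$.

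The implication (iv) $\Rightarrow$ (iii) is the main obstacle, and I plan to settle it by a counting argument. Let $S$ denote the set of $\gamma\in\F_{q^n}\setminus\F_{q^k}$ satisfying (iv) and $T$ the set satisfying (iii). By the previous step $T\subseteq S$; the task is to show $|T|=|S|$. Using that $\N_{q^n/q^2}$ is surjective with kernel of size $(q^n-1)/(q^2-1)$, and that for $k$ odd the restriction $\N_{q^n/q^2}|_{\F_{q^k}^*}$ coincides with $\N_{q^k/q}$, a direct computation gives $|S|=(q^k-1)(q^k-q)/((q-1)(q+1))$. For $|T|$, I reparametrize the triples $(u,v,\eta)$ as $(w,\mu,\eta)$ via $w=u/v$ and $\mu=v^{q^s-1}$, where $w\in\F_{q^k}\setminus\fq$, $\mu\in\theta_q(\F_{q^k}^*)$ (invoking \cref{L-theta}(b)), and $\eta\in\F_{q^2}\setminus\fq$. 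The map $\Phi(w,\mu,\eta)=\mu^{-1}(w-\eta)/(\eta-w^{q^s})$ sends this domain onto $T$. Given $\gamma$ and $\eta$, the variable $\mu$ is determined by $w$ via $\mu=(w-\eta)/(\gamma(\eta-w^{q^s}))$; imposing $\mu\in\F_{q^k}$ and then applying $\text{Frob}_{q^k}$ (using $\eta^{q^k}=\eta^q$ since $k$ is odd) yields the identity
\[
\gamma^{q^k-1} = \frac{(w-\eta^q)(\eta-w^{q^s})}{(w-\eta)(\eta^q-w^{q^s})},
\]
which together with the $(q-1)$-power condition on $\mu$ should cut out exactly $q+1$ admissible $w$ per $\eta$. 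Totaled over $\eta\in\F_{q^2}\setminus\fq$ this gives fibers of size $q(q^2-1)$, and the arithmetic then forces $|T|=|S|$. The delicate step is to verify the fiber count is exactly $q+1$ per $\eta$; an alternative approach I would explore in parallel is to apply Hilbert 90 to write $-\gamma=y^{q^2-1}$ for some $y\in\F_{q^n}^*$ (possible because $\N_{q^n/q^2}(-\gamma)=(-1)^{k+1}=1$ when $k$ is odd) and exploit the decomposition $y=a+\eta b$ with $a,b\in\F_{q^k}$ to construct $u,v$ directly.
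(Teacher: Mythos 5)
Your chain (i)$\Leftrightarrow$(ii)$\Leftrightarrow$(iii) is fine, and your proof of (iii)$\Rightarrow$(iv) is correct and a pleasant variant of the paper's: instead of writing $\gamma=-(u-\eta v)^{1-q^s}$ (resp.\ $-(u-\eta v)^{1-q^{s+k}}$) and using the kernel description of the norm, you expand $\N_{q^n/q^2}(\gamma)=\prod_{i=0}^{k-1}\gamma^{q^{2i}}$ and telescope via the permutation $i\mapsto 2i$ of $\Z/k\Z$ (note only $\gcd(2,k)=1$ is needed there, not $\gcd(s,k)=1$, and you should remark that each factor $u^{q^j}-\eta v^{q^j}$ is nonzero since $\eta\notin\F_{q^k}$).

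The genuine gap is (iv)$\Rightarrow$(iii), which is the substantive direction, and it is not proved in your proposal. The counting scheme is sound in outline: $|S|=(q^k-1)(q^k-q)/((q-1)(q+1))$ is computed correctly, and if every fiber of $\Phi$ over its image had size $q(q^2-1)$ and the image were exactly $T$, then $|T|=|S|$ and $T\subseteq S$ would force $T=S$. But the entire weight of the argument rests on the ``delicate step'' you explicitly leave open, namely that for each $\gamma\in T$ and each $\eta\in\F_{q^2}\setminus\fq$ there are exactly $q+1$ admissible $w$; your proposed route (analyzing the identity for $\gamma^{q^k-1}$ together with the $(q-1)$-power condition on $\mu$) is not carried out and is far from routine. (It can be repaired structurally: a fiber element over $(\gamma,\eta)$ corresponds to a nonzero vector of $U_{s,\gamma}\cap\eta^{-1}U_{s,\gamma}$ up to $\F_q^*$-scaling, and once one representation exists, i.e.\ some $a\F_{q^2}\subseteq U_{s,\gamma}$, that intersection contains $a\F_{q^2}$ and has dimension exactly $2$ by \cref{L-UalphaU} and \cref{lem:qspolynomialsrootsandnorm}, giving the count $q+1$; one must also check that $\Phi(w,\mu,\eta)\notin\F_{q^k}$, so that the image really is $T$. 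None of this is in your text.) Your alternative Hilbert~90 sketch is essentially the paper's route, but as stated it does not work: writing $-\gamma=y^{q^2-1}$ and $y=a+\eta b$ gives $\gamma=-y^{q^2}/y$ with $y^{q^2}=a^{q^2}+\eta b^{q^2}$, which is not of the shape in (iii), since (iii) requires the $q^s$-power of $u,v$, not the $q^2$-power. The paper's key observation is that, $k$ being odd, $\gcd(s,n)=2$ if $s$ is even and $\gcd(s+k,n)=2$ if $s$ is odd, so $\ker\N_{q^n/q^2}$ coincides with the set of $(q^s-1)$-th (resp.\ $(q^{s+k}-1)$-th) powers (\cref{L-theta}); writing $\gamma=-\xi^{1-q^s}$ (resp.\ $-\xi^{1-q^{s+k}}$) and decomposing $\xi=u-\eta v$ then yields (iii) because $\eta^{q^s}=\eta$ (resp.\ $\eta^{q^{s+k}}=\eta$) and $u^{q^{s+k}}=u^{q^s}$. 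Without this change of exponent and the parity case distinction, the decomposition of $y$ does not produce $u,v$ with the correct twist, so the key implication remains unproven as written.
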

%%%%%%%%%%%%%%%%%%%

\begin{proof} 
(i)~$\Rightarrow$~(ii) 
Let $a\in\fqn^*$ be such that $a\F_{q^2}\subseteq U_{s,\gamma}$ and let $\F_{q^2}=\langle 1, \eta\rangle_{\fq}$. Then there exist $u,v\in\fqk^*$ such that $a=v+v^{q^s}\gamma$ and $a\eta=u+u^{q^s}\gamma$.
Hence
\begin{equation}\label{e-eta}
    \frac{u+u^{q^s}\gamma}{v+v^{q^s}\gamma}=\eta\in\F_{q^2}\setminus\fq.
\end{equation}
(ii)~$\Rightarrow$~(iii) 
By assumption there exist $u,v\in\fqk^*$ such that \eqref{e-eta} is true. 
But this immediately implies $\gamma=(u-\eta v)(-u^{q^s}+\eta v^{q^s})^{-1}$, as desired.
One easily verifies that if $u/v\in\F_q$, then $\gamma$ is in $\F_{q^k}$, a contradiction.
\\
(iii)~$\Rightarrow$~(iv) First of all, (iii) implies that~$k$ is odd because otherwise $\eta$, and thus $\gamma$, is in $\F_{q^k}$, a contradiction.
Next, if $s$ is even then $\eta=\eta^{q^s}$ and thus $\gamma=-(u-\eta v)^{1-q^s}$.
Now \cref{L-theta}(c) implies $\N_{q^n/q^2}(\gamma)=\N_{q^n/q^2}(-1)\N_{q^n/q^2}((u-\eta v)^{1-q^s})=(-1)^k=-1$ since $k$ is odd.
If~$s$ is odd,  then $\gamma=-(u-\eta v)^{1-q^{s+k}}$ and again $\N_{q^n/q^2}(\gamma)=-1$.
\\
(iv)~$\Rightarrow$~(i) 
Let $\eta\in\F_{q^2}\setminus\F_q$ and note that $\F_{q^n}=\F_{q^k}(\eta)$.
\\
a) Let~$s$ be even. 
Since $\N_{q^n/q^2}(\gamma)= -1$,  \cref{L-theta}(b),(c) imply the existence of  $\xi \in \F_{q^{2k}}\setminus \F_{q^{k}}$ such that 
$\gamma =- \xi^{(1-q^s)}$. Moreover, there exist $u, v \in \F_{q^{k}}$ such that $\xi= u-\eta v$. This means that
\[
     \gamma=\frac{u-\eta v}{-u^{q^s}+\eta v^{q^s}}
\]
This implies \eqref{e-eta}. Set $a=v+v^{q^s}\gamma$, then $a\eta=u+u^{q^s}\gamma$.
But this means that $a\F_{q^2}=\subspace{1,\eta}\subseteq U_{s,\gamma}$, as desired.
\\
b) Let~$s$ be odd. Then $k+s$ is even and since $\N_{q^n/q^2}(\gamma)= -1$,  \cref{L-theta}(b),(c) imply the existence of  
$\xi \in \F_{q^{2k}}\setminus \F_{q^{k}}$ such that $\gamma =- \xi^{(1-q^{s+k})}$. 
As in the previous part we arrive at $\gamma=\frac{u-\eta v}{-u^{q^s}+\eta v^{q^s}}$ for some $u, v \in \F_{q^{k}}$, which in turn leads to 
$a\F_{q^2}\subseteq U_{s,\gamma}$ for  $a=v+v^{q^s}\gamma$.
\end{proof}

%%%%%%%%%%%%%%%%%%%
\section{Semilinear Isometries and Frobenius Isometries} 
In this short section we introduce and discuss the notions of  semilinear and Frobenius isometries.
Throughout we consider $\F_{q^n}$, where $q=p^h$ with~$p$ prime. 
Note that if $t$ is a divisor of~$hn$, then $\F_{q^n}$ is an $\F_{p^t}$-vector space of dimension $hn/t$.
If $t$ is a divisor of~$h$, then $\F_{p^t}$ is a subfield of~$\F_q$, and a $k$-dimensional $\F_q$-subspace of $\Fqn$ is an 
$kh/t$-dimensional $\F_{p^t}$-subspace of~$\Fqn$. 
Thus, for the Grassmannians we have the containment
\[
    \cG_q(k,n)\subseteq\cG_{p^t}(kh/t,nh/t).
\]

%%%%%%%%%%%%%%%%
\begin{definition}
Let $t$ be a divisor of~$hn$. We define the groups
\[
   \GL_{hn/t}(p^t)=\{\phi:\F_{q^n}\longrightarrow\F_{q^n}\colon \phi\text{ is an $\F_{p^t}$-linear isomorphism}\}.
\]
\end{definition}
%%%%%%%%%%%%%%%%%

Note that $\GL_{hn/t}(p^t)\leq \GL_{hn}(p)$. Moreover, $\GL_n(q)$ is the group of $\F_q$-isomorphisms on $\F_{q^n}$, and $\F_{q^n}^*$ is isomorphic to 
the Singer group $S:=\GL_1(q^n)$ via $a\longmapsto m_a$, where $m_a:\Fqn\longrightarrow\Fqn,\ c\longmapsto ac$.
Furthermore, we define
\begin{equation}\label{e-Galpt}
     G_{p^t}=\Gal(\F_{q^n}\mid\F_{p^t})=\subspace{\sigma_{p^t}}, 
     \text{ where }\sigma_{p^t}:\Fqn\longrightarrow\Fqn,\ c\longmapsto c^{p^t}.
\end{equation}

%%%%%%%%%%%%%%%%%%
\begin{remark}\label{R-GSSG}
Let $\F_{q^n}^*=\subspace{\omega}$. One easily checks that 
\[
    \sigma_{p^t}^\ell m_{\omega^s}=m_{\omega^{sp^{t\ell}}}\sigma_{p^t}^\ell\ \text{ and }\
    m_{\omega^s}\sigma_{p^t}^\ell=\sigma_{p^t}^\ell m_{\omega^{sp^{rn-t\ell}}}
     \ \text{ for all $\ell\in[hn/t]$ and $s\in[q^n-1]$.}
\]
Thus $G_{p^t}S=SG_{p^t}$, and this is a group of order $|G_{p^t}||S|=hn/t(q^n-1)$. In fact, it equals the normalizer $N_{\GL_{hn/t}(p^t)}(S)$ (see \cite[bottom of p.~98]{CoRe04} or \cite[top of p.~497]{Hes70}).
\end{remark}
%%%%%%%%%%%%%%%%%%

Clearly,  the multiplicative group $\fqn^*$ does not depend on our choice of the ground field ($\F_q$ versus $\F_{p^t}$), and therefore 
the definition of the cyclic orbit code does not depend on this choice either.

%%%%%%%%%%%%%%%%%%%%
\begin{definition}\label{D-Isometric}
Let $U,\,U'\in\cG_q(k,n)$ and $\cC=\Orb(U)$ and $\cC'=\Orb(U')$ be their cyclic orbit codes.
Let $t$ be a divisor of~$h$. 
\begin{enumerate}[label=(\alph*),leftmargin=1.5em,labelwidth=1.5em,labelsep=0em,align=left, topsep=-1.3ex]
\item $\cC$ and $\cC'$ are called {\bf $\F_{p^t}$-linearly isometric} if there exists $\phi\in\GL_{hn/t}(p^t)$ such that $\phi(\cC)=\cC'$.
If $t=1$, we call the codes {\bf semilinearly isometric}, and for $t=h$  they are called {\bf linearly isometric}.
\item $\cC$ and~$\cC'$ are called {\bf $\F_{p^t}$-Frobenius-isometric} if there exists $\phi\in G_{p^t}S$ such that  $\phi(\cC)=\cC'$.
\end{enumerate}
\end{definition}
%%%%%%%%%%%%%%%%%%%

%%%%%%%%%%%%%%
\begin{remark}\label{R-Gpt}
Let~$\cC,\,\cC'$ be as in \cref{D-Isometric} and suppose $\cC$ and~$\cC'$ are $\F_{p^t}$-Frobenius-isometric.
Hence there exist $\ell\in[hn/t]$ and $\alpha\in\F_{q^n}^*$ such that  $\phi=\sigma_{p^t}^\ell m_{\alpha}$ satisfies $\phi(\cC)=\cC'$.
Since $m_{\alpha}(\cC)=\cC$, this implies $\sigma_{p^t}^\ell(\cC)=\cC'$.
Thus, we may replace the group $G_{p^t}S$ in \cref{D-Isometric}(b) by the Galois group~$G_{p^t}$.
\end{remark}
%%%%%%%%%%%%%%%

Note that the following definition is stronger than the one in  \cite[Def.~4.3]{Heideequiv}.

%%%%%%%%%%%%%%%%
\begin{definition}\label{D-Generic}
A subspace $U\in\cG_q(k,n)$ is called \textbf{generic} if no cyclic shift of~$U$ is contained in a proper subfield of~$\F_{q^n}$.
\end{definition}
%%%%%%%%%%%%%%%%%

We have the following result.

%%%%%%%%%%%%%%%%
\begin{theorem}\label{T-LinFrobIso}
Let $U,\,U'\in\cG_q(k,n)$ be generic subspaces and let $t$ be a divisor of~$h$.
Then $\Orb(U)$ and $\Orb(U')$ are $\F_{p^t}$-linearly isometric if and only if they are  $\F_{p^t}$-Frobenius-isometric.
\end{theorem}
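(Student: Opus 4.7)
The plan is to treat the two implications separately. The backward direction is essentially immediate: suppose $\Orb(U)$ and $\Orb(U')$ are $\F_{p^t}$-Frobenius isometric. By \cref{R-Gpt} there is a Galois element $\sigma_{p^t}^{\ell}\in G_{p^t}$ with $\sigma_{p^t}^{\ell}(\Orb(U))=\Orb(U')$. Since $t\mid h$, we have $\F_{p^t}\subseteq\F_q$, and $\sigma_{p^t}^{\ell}$ is in particular $\F_{p^t}$-linear on $\F_{q^n}$, hence belongs to $\GL_{hn/t}(p^t)$. This yields the desired $\F_{p^t}$-linear isometry.

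For the forward direction, let $\phi\in\GL_{hn/t}(p^t)$ be such that $\phi(\Orb(U))=\Orb(U')$. The Singer group $S=\{m_{\alpha}\colon\alpha\in\F_{q^n}^{*}\}$ stabilizes every cyclic orbit code setwise, so the conjugate $\phi S\phi^{-1}$ stabilizes $\Orb(U')$. The key step is to invoke the result from \cite{Heideequiv} referenced in the introduction: under the genericity assumption of \cref{D-Generic}, any element of $\GL_{hn/t}(p^t)$ that carries one cyclic orbit generated by a generic subspace to another must lie in the normalizer $N_{\GL_{hn/t}(p^t)}(S)$. By \cref{R-GSSG} this normalizer equals $G_{p^t}S$, so we may write $\phi=\sigma_{p^t}^{\ell}m_{\alpha}$ for some $\ell\in[hn/t]$ and some $\alpha\in\F_{q^n}^{*}$. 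Since $m_{\alpha}(\Orb(U))=\Orb(U)$, we obtain $\sigma_{p^t}^{\ell}(\Orb(U))=\phi(\Orb(U))=\Orb(U')$, which establishes the $\F_{p^t}$-Frobenius isometry.

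The main obstacle will be the precise formulation and invocation of the result from \cite{Heideequiv}: presumably it is stated in the $\F_p$-semilinear setting (i.e.\ $t=1$), so one needs to check that the proof carries over verbatim to every divisor $t$ of $h$, and that the genericity of both $U$ and $U'$ (no cyclic shift lies in a proper subfield) indeed rules out any $\phi$ that fails to normalize the Singer group. Once this structural constraint on $\phi$ is in place, the argument reduces to the purely group-theoretic identification of $N_{\GL_{hn/t}(p^t)}(S)$ with $G_{p^t}S$ already recorded in \cref{R-GSSG}, together with the trivial fact that $S$ acts as the identity on the set $\Orb(U)$.
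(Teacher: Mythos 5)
Your proposal is correct and takes essentially the same route as the paper: the content of the forward direction is exactly the result of \cite{Heideequiv} that, for orbits of generic subspaces, any linear isometry must lie in the normalizer of the Singer group, which by \cref{R-GSSG} is $G_{p^t}S$ (the backward direction being trivial since Galois maps are $\F_{p^t}$-linear). The only obstacle you flag — that the cited result might be stated only over the prime field — is resolved in the paper not by re-proving anything but simply by regarding $U,U'$ as elements of $\cG_{p^t}(kh/t,nh/t)$ and applying \cite[Thm.~6.2(a), Def.~3.5]{Heideequiv} with the base field taken to be $\F_{p^t}$, noting that genericity in the sense of \cref{D-Generic} does not depend on the choice of ground field.
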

%%%%%%%%%%%%%

\begin{proof}
Consider $U,\,U'$ as elements of the Grassmannian $\cG_{p^t}(kh/t,nh/t)$.
Then \cite[Thm.~6.2(a), Def.~3.5]{Heideequiv} applied to $q=p^t$ together with \cref{R-GSSG} tells us that being $\F_{p^t}$-linearly isometric is equivalent to 
being $\F_{p^t}$-Frobenius-isometric.
\end{proof}

The following example shows that semilinearly isometric codes are not necessarily linearly isometric.

%%%%%%%%%%%%%%
\begin{example}
Let $q=4, \,n=4$, thus $\F_q=\F_4$ and $\F_{q^n}=\F_{4^4}$.
Choose a primitive element $\omega\in\F_{4^4}$. 
Since $(4^4-1)/(4-1)=85$, the element $a:=\omega^{85}$ is a primitive element of $\F_4$.
If we choose the element~$\omega$ with minimal polynomial $x^8+x^4+x^3+x^2+1$ over~$\F_2$, then its minimal polynomial over~$\F_4$ turns out to be $x^4+x^3+ax^2+ax+a$.
\\
In $\F_{4^4}=\F_{2^8}$ we consider the $\F_4$-subspaces 
\[
    \cU=\subspace{1,\omega}_{\F_4}=\subspace{1,\omega,a,a\omega}_{\F_2}\ \text{ and }\
    \cU'=\subspace{1,\omega^2}_{\F_4}=\subspace{1,\omega^2,a,a\omega^2}_{\F_2}.
\]
Note that both subspaces are generic and contain~$1$. 
Moreover, one can check that $\F_4$ is the maximal field of linearity for~$\cU$ and $\cU'$. 
Thus, $|\Orb(\cU)|=|\Orb(\cU')|=85$ for the cyclic orbit codes.
\\
Consider now the $2$-Frobenius $\sigma_2:\F_{4^4}\longrightarrow\F_{4^4},\ c\longmapsto c^2$, which 
is a semilinear automorphism on $\F_{4^4}$.
In other words, $\sigma_2\in\GL_8(2)$.  
Using that $a^2=1+a$, we observe that $\sigma_2(\cU)=\cU'$ and even more,
$\sigma_2(\omega^i\cU)=\omega^{2i}\cU'$ for all $i=0,\ldots,4^4-1$.
Hence $\sigma_2(\Orb(\cU))=\Orb(\cU')$, which means that these two orbits are semilinearly isometric.
By construction the codes are $\F_2$-Frobenius-isometric, which confirms the above theorem.
\\
Suppose $\Orb(\cU)$ and $\Orb(\cU')$ are linearly isometric (that is, via some $\psi\in\GL_4(4)$). 
Then \cref{T-LinFrobIso} implies that they are $\F_4$-Frobenius-isometric and \cref{R-Gpt} implies that there exists $\psi\in\Gal(\F_{4^4}\mid\F_4)$ such that $\psi(\Orb(\cU))=\Orb(\cU')$.
However, using for instance SageMath, one easily checks that $\psi(\cU)\not\in\Orb(\cU')$ for each $\psi\in\Gal(\F_{4^4}\mid\F_4)$.
Hence  semilinearly isometric codes are not necessarily linearly isometric.
\end{example}
%%%%%%%%%%%%%%%%%%%%%%

%%%%%%%%%%%%%%%%%%%
\section{Automorphisms group of $\mathrm{Orb}(U_{s,\gamma})$}\label{S-AutGroups}

As in the previous section let $q=p^h$. 
We return to the orbit codes $\cO_{s,\gamma}=\Orb(U_{s,\gamma})$ introduced in \cref{D-Usgamma} and determine their
$\F_{p^t}$-automorphism groups for any divisor~$t$ of~$h$ as well as their $\F_p$-Frobenius automorphism groups.
Furthermore, we determine the number of distinct optimal (resp.\ quasi-optimal) full-length orbit codes $\cO_{s,\gamma}$ and the number of those
orbits where $U_{s,\gamma}$ contains a cyclic shift of $\F_{q^2}$.

We start with the following result from \cite{CPSZSidon}.
Recall the Galois groups $G_{p^t}$ from \eqref{e-Galpt} and the maps $m_\lambda:\Fqn\longrightarrow\Fqn,\ c\longmapsto \lambda c$.

%%%%%%%%%%%%%%
\begin{theorem}[\mbox{\cite[Thm.~6.2]{CPSZSidon}}]\label{T-CPSZ23Theo}
Let $k\geq 2$ be a divisor of~$n$ and $U_1,\,U_2$ be $m$-dimensional $\F_q$-subspaces of $\F_{q^k}\times\F_{q^k}$.
Let $\gamma_i\in\Fqn\setminus\F_{q^k}$ and set $V_{U_i,\gamma_i}=\{u+\gamma_i u'\colon (u,u')\in U_i\}$ for $i=1,2$.
Suppose $V_{U_1,\gamma_1},V_{U_2,\gamma_2}$ are not contained in any cyclic shift $\alpha\F_{q^k}$, where $\alpha\in\Fqn$.
Let $\lambda\in\F_{q^n}^*$ and $\sigma\in G_{p}$.
The following are equivalent.
\begin{enumerate}[label=(\roman*),leftmargin=1.5em,labelwidth=1.5em,labelsep=0em,align=left, topsep=-1.3ex]
\item $m_{\lambda}\circ\sigma(V_{U_1,\gamma_1})=V_{U_2,\gamma_2}$.
\item There exists $A=\Smallfourmat{c}{d}{a}{b}\in\GL_2(\F_{q^k})$ such that 
        \[
             \gamma_2=\frac{a+b\sigma(\gamma_1)}{c+d\sigma(\gamma_1)},\quad \lambda=\frac{1}{c+d\sigma(\gamma_1)},\quad \text{and }
             \sigma(U_1)=U_2A,
        \]
        where $\sigma(U_1)=\{(\sigma(u),\sigma(u'))\colon (u,u')\in U_1\}$ and $U_2A=\{(u,u')A\colon (u,u')\in U_2\}$.
\end{enumerate} 
\end{theorem}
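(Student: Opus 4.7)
The plan is to establish both implications directly. The direction $(\mathrm{ii})\Rightarrow(\mathrm{i})$ is a computation: given $A=\Smallfourmat{c}{d}{a}{b}\in\GL_2(\fqk)$ and the stated formulas, I would take an arbitrary $(u,u')\in U_2$, set $(u_1,u_1')=(u,u')A\in\sigma(U_1)$ and expand
\[
\lambda\bigl(u_1+\sigma(\gamma_1)u_1'\bigr)=\lambda u\bigl(c+d\sigma(\gamma_1)\bigr)+\lambda u'\bigl(a+b\sigma(\gamma_1)\bigr)=u+\gamma_2 u',
\]
where the prescribed values of $\lambda$ and $\gamma_2$ make the rightmost equality hold identically. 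Invertibility of $A$ then gives $m_\lambda\sigma(V_{U_1,\gamma_1})=V_{U_2,\gamma_2}$.

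For the converse, I would first exploit the hypothesis that no $V_{U_i,\gamma_i}$ is contained in a cyclic shift of $\fqk$: if $\subspace{U_i}_{\fqk}$ were $1$-dimensional, say $\fqk\cdot(e_1,e_2)$, then $V_{U_i,\gamma_i}\subseteq(e_1+\gamma_i e_2)\fqk$, a contradiction. Hence $\subspace{U_i}_{\fqk}=\fqk\times\fqk$ for $i=1,2$, and also $\subspace{\sigma(U_1)}_{\fqk}=\fqk\times\fqk$ since $\sigma\in G_p$ fixes the subfield $\fqk$. Moreover, because $\gamma_2\notin\fqk$ and $\sigma(\gamma_1)\notin\fqk$, the pairs $\{1,\gamma_2\}$ and $\{1,\sigma(\gamma_1)\}$ are $\fqk$-linearly independent in $\fqn$, so the representations $u+\gamma_2 u'$ and $u_1+\sigma(\gamma_1)u_1'$ with coordinates in $\fqk$ are unique.

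This uniqueness lets me introduce the two $\fqk$-linear injections
\[
\psi\colon\fqk\!\times\!\fqk\to\fqn,\ (u_1,u_1')\mapsto\lambda\bigl(u_1+\sigma(\gamma_1)u_1'\bigr),\quad \tau\colon\fqk\!\times\!\fqk\to\fqn,\ (u_2,u_2')\mapsto u_2+\gamma_2 u_2'.
\]
Assumption (i) reads $\psi(\sigma(U_1))=\tau(U_2)$. Taking $\fqk$-spans on both sides and using the spanning property above yields $\psi(\fqk\!\times\!\fqk)=\tau(\fqk\!\times\!\fqk)$, so there is a unique $\fqk$-linear isomorphism $\Phi\colon\fqk\!\times\!\fqk\to\fqk\!\times\!\fqk$ with $\tau\circ\Phi=\psi$; by construction $\Phi$ extends the $\fq$-linear bijection $\sigma(U_1)\to U_2$ induced by (i).

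Writing $A=\Phi^{-1}$ as a right-acting matrix $\Smallfourmat{c}{d}{a}{b}\in\GL_2(\fqk)$, the identity $\sigma(U_1)=\Phi^{-1}(U_2)=U_2 A$ is immediate. Evaluating $\tau\circ\Phi=\psi$ on the basis vectors $(1,0)$ and $(0,1)$ gives $\lambda(c+d\sigma(\gamma_1))=1$ and $\lambda(a+b\sigma(\gamma_1))=\gamma_2$, which are precisely the asserted formulas. The main obstacle is the $\fqk$-linear extension step: one must verify that $\psi$ and $\tau$ have the same image, which is exactly where the "no cyclic shift" hypothesis enters, combined with the injectivity coming from $\gamma_2,\sigma(\gamma_1)\notin\fqk$. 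Everything that precedes this is bookkeeping, and everything that follows it is linear algebra on $\fqk^{\,2}$.
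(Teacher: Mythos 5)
This theorem is quoted by the paper from \cite[Thm.~6.2]{CPSZSidon} and is not proved there, so there is no internal proof to compare your argument against; judged on its own, your proof is correct and self-contained. The computation for (ii)~$\Rightarrow$~(i) is exactly right (note that $c+d\sigma(\gamma_1)\neq0$ is automatic from $\sigma(\gamma_1)\notin\F_{q^k}$ and the invertibility of $A$). In the converse direction all the key steps hold: the no-cyclic-shift hypothesis forces $\langle U_i\rangle_{\F_{q^k}}=\F_{q^k}\times\F_{q^k}$ (you should also dispose of the trivial possibility $U_i=\{0\}$, which the same hypothesis excludes); $\sigma\in G_p$ stabilizes $\F_{q^k}$ setwise, so $\langle\sigma(U_1)\rangle_{\F_{q^k}}=\F_{q^k}\times\F_{q^k}$ as well; and since $\gamma_2,\sigma(\gamma_1)\notin\F_{q^k}$, the maps $\psi,\tau$ are injective $\F_{q^k}$-linear maps whose images coincide by your span argument, so $\Phi=\tau^{-1}\circ\psi$ is a well-defined $\F_{q^k}$-isomorphism with $\Phi(\sigma(U_1))=U_2$, giving $\sigma(U_1)=U_2A$ for $A=\Phi^{-1}$ acting on the right. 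One small wording fix: the two scalar identities you state come from evaluating the equivalent relation $\tau=\psi\circ\Phi^{-1}$ at $(1,0)$ and $(0,1)$, namely $1=\lambda\bigl(c+d\sigma(\gamma_1)\bigr)$ and $\gamma_2=\lambda\bigl(a+b\sigma(\gamma_1)\bigr)$ with $A=\Smallfourmat{c}{d}{a}{b}$, rather than literally from evaluating $\tau\circ\Phi=\psi$ at the basis vectors; with that rephrasing the labels $c,d,a,b$ land exactly where condition (ii) requires them, and the argument is complete.
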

%%%%%%%%%%%%%%%%%%%

The previous result applies to every $\sigma\in\Gal(\Fqn\mid \F_p)$, thus to every $\sigma\in\Gal(\Fqn\mid \F_{p^t})$. 
By applying \cref{T-CPSZ23Theo} to the subspaces $U_{s,\gamma}$ we get the following result.
Recall the notation~$\theta_q$ from \cref{L-theta}.

%%%%%%%%%%%%%%%%%%%
\begin{proposition}
\label{prop:7.27.3}
Let $k>2$ and $n=2k$. For $i=1,2$ let $s_i\in[k]$ be such that $\gcd(s_i,k)=1$. 
Moreover, let $\gamma_i\in\Fqn\setminus\F_{q^k}$ for $i=1,2$ and set
$U_{s_i,\gamma_i}=\{a+\gamma_ia^{q^{s_i}}\colon a\in\F_{q^k}\}$.
Let $\sigma\in G_p$.
Then the following are equivalent.
\begin{enumerate}[label=(\roman*),leftmargin=1.5em,labelwidth=1.5em,labelsep=0em,align=left, topsep=-1.3ex]
\item $\sigma(\Orb(U_{s_1,\gamma_1}))=\Orb(U_{s_2,\gamma_2})$.
\item Either 
      \[
           s_1\neq s_2\ \text{ and }\ s_1+s_2\equiv0\ (\mmod k)\ \text{ and }\ \sigma(\gamma_1)\gamma_2\in\theta_{q}(\F_{q^k})
      \]
      or
      \[
            s_1=s_2\ \text{ and }\ s_1+s_2\not\equiv0\ (\mmod k)\ \text{ and }\ \gamma_2/\sigma(\gamma_1)\in\theta_{q}(\F_{q^k}).
      \]
\end{enumerate}
\end{proposition}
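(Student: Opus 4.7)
My plan is to apply \cref{T-CPSZ23Theo} to our subspaces, which already have the form $V_{U_s,\gamma}$ for $U_s=\{(u,u^{q^{s}})\colon u\in\F_{q^k}\}\subseteq\F_{q^k}\times\F_{q^k}$. First I would note that $\sigma(\Orb(U_{s_1,\gamma_1}))=\Orb(\sigma(U_{s_1,\gamma_1}))$, so (i) is equivalent to the existence of $\lambda\in\fqn^*$ with $m_\lambda\circ\sigma(U_{s_1,\gamma_1})=U_{s_2,\gamma_2}$. I would verify the hypothesis of \cref{T-CPSZ23Theo} that $U_{s_i,\gamma_i}$ is contained in no cyclic shift of $\F_{q^k}$: any such shift has stabilizer containing $\F_{q^k}^*$, whereas by \cref{P-OptQuasiOpt}(1) the stabilizer of $U_{s_i,\gamma_i}$ is $\fq^*$, and these two groups differ since $k>1$. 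Because $\sigma$ preserves $\F_{q^k}$ setwise and commutes with $q^{s_1}$-th powers, one has $\sigma(U_{s_1})=U_{s_1}$, and \cref{T-CPSZ23Theo} then converts (i) into the existence of $A=\Smallfourmat{c}{d}{a}{b}\in\GL_2(\F_{q^k})$ with $\gamma_2=(a+b\sigma(\gamma_1))/(c+d\sigma(\gamma_1))$ and $U_{s_1}=U_{s_2}A$.

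The second equation, obtained by equating the second coordinate of a typical element of $U_{s_2}A$ with the $q^{s_1}$-power of its first coordinate, becomes the $q$-polynomial identity
\[
    c^{q^{s_1}}u^{q^{s_1}}+a^{q^{s_1}}u^{q^{s_1+s_2}}=du+bu^{q^{s_2}}\quad\text{for all }u\in\F_{q^k}.
\]
Since the maps $u\mapsto u^{q^j}$, $j=0,\dots,k-1$, are $\F_{q^k}$-linearly independent, I would reduce the exponents modulo $k$ and match coefficients. If $s_1=s_2$ (which automatically forces $s_1+s_2\not\equiv 0\ (\mmod k)$, since $\gcd(s_1,k)=1$ and $k>2$ rule out $s_1=k/2$), the exponents $0,\,s_1,\,2s_1\bmod k$ are pairwise distinct, so matching coefficients yields $d=a=0$ and $b=c^{q^{s_1}}$; hence $A$ is diagonal and $\gamma_2/\sigma(\gamma_1)=c^{q^{s_1}-1}$, which lies in $\theta_q(\F_{q^k})$ by \cref{L-theta}(b). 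If $s_1\neq s_2$ and $s_1+s_2\equiv 0\ (\mmod k)$, the exponents $0,\,s_1,\,s_2$ are pairwise distinct, forcing $c=b=0$ and $d=a^{q^{s_1}}$, so that $A$ is antidiagonal and $\sigma(\gamma_1)\gamma_2=a^{1-q^{s_1}}\in\theta_q(\F_{q^k})$ again by \cref{L-theta}(b). In the remaining case $s_1\neq s_2$ and $s_1+s_2\not\equiv 0\ (\mmod k)$, a short check using $s_i\in\{1,\dots,k-1\}$ shows the four integers $0,\,s_1,\,s_2,\,(s_1+s_2)\bmod k$ are pairwise distinct, so all four entries of $A$ must vanish, contradicting invertibility. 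The converse directions reverse these arguments: given membership in $\theta_q(\F_{q^k})$, \cref{L-theta}(b) produces the required $c$ or $a$ in $\F_{q^k}^*$, and the explicit diagonal or antidiagonal matrix $A$ then satisfies the two conditions of \cref{T-CPSZ23Theo}.

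The delicate part will be the pairwise disjointness analysis of the exponent set in the third case, which must simultaneously use $\gcd(s_i,k)=1$, $k>2$, and $s_1+s_2\not\equiv 0\ (\mmod k)$ to exclude every possible coincidence among $0,\,s_1,\,s_2,\,(s_1+s_2)\bmod k$. Once that bookkeeping is handled, the identifications via \cref{L-theta}(b) between $\theta_{q^{s_1}}(\F_{q^k})$ and $\theta_q(\F_{q^k})$ produce the symmetric conditions stated in~(ii).
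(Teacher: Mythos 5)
Your proposal matches the paper's proof essentially step for step: both invoke \cref{T-CPSZ23Theo} after observing $\sigma(U_{s_1})=U_{s_1}$, translate $U_{s_1}=U_{s_2}A$ into a linearized-polynomial identity over $\F_{q^k}$, and run the same exponent case analysis ($s_1+s_2\equiv 0$ vs.\ not, $s_1=s_2$ vs.\ not), with \cref{L-theta}(b) giving the $\theta_q(\F_{q^k})$ identifications and the explicit antidiagonal/diagonal matrices handling the converse. The only deviation is your check that $U_{s_i,\gamma_i}$ lies in no cyclic shift of $\F_{q^k}$ via the stabilizer from \cref{P-OptQuasiOpt}(1) (just note that containment forces equality since $\dim_{\F_q}U_{s_i,\gamma_i}=k$, so the stabilizer would contain $\F_{q^k}^*$), where the paper instead makes a short direct computation; both are fine.
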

%%%%%%%%%%%%%%%%%%%%

\begin{proof}
For $i=1,2$ set $U_{s_i}=\{(x,x^{q^{s_i}})\colon x\in\F_{q^k}\}$.  Then $U_{s_i,\gamma_i}=V_{U_{s_i},\gamma_i}$ as defined in \cref{T-CPSZ23Theo}.
Moreover,
\begin{align}
  \sigma(\Orb(U_{s_1,\gamma_1}))=\Orb(U_{s_2,\gamma_2})&\Longleftrightarrow 
   m_{\lambda}\circ\sigma(U_{s_1,\gamma_1})=U_{s_2,\gamma_2}\text{ for some }{\lambda}\in\F_{q^n}^* \nonumber\\
   &\Longleftrightarrow  m_{\lambda}\circ\sigma(V_{U_{s_1},\gamma_1})=V_{U_{s_2},\gamma_2}\text{ for some }\lambda\in\F_{q^n}^*.\label{e-lamsig}
\end{align}
In order to apply \cref{T-CPSZ23Theo}, we have to show that the spaces $U_{s_i,\gamma_i}$ are not contained in any cyclic shift of $\F_{q^k}$. 
Assume to the contrary that
$U_{s_i,\gamma_i}= \alpha\F_{q^k}$ for some $\alpha\in\fqn$ (note that $\dim_{\F_q}(U_{s_i,\gamma_i})=k$). 
Then $1+\gamma_i=\alpha \bar{a}$ for some $\bar{a}\in\F_{q^k}^*$.
Moreover, for any $a\in\fqk$ there exists $b_a\in\fqk$ such that $a+\gamma_i a^{q^{s_i}}=\alpha b_a$.
This leads to $\bar{a}a+\gamma_i\bar{a}a^{q^{s_i}}=b_a+b_a\gamma_i$. 
Thus $a\bar{a}=\bar{a}a^{q^{s_i}}$ for any $a\in\fqk$. This implies $\bar{a}=0$, a contradiction.
Now we may apply \cref{T-CPSZ23Theo} to \eqref{e-lamsig} and arrive at 
\begin{equation}\label{e-EquivOrbits}
    \sigma(\Orb(U_{s_1,\gamma_1}))=\Orb(U_{s_2,\gamma_2})\Longleftrightarrow 
    \left\{\begin{array}{l}
        \text{there exists }A=\Smallfourmat{c}{d}{a}{b}\in\GL_2(\F_{q^k})\ \text{ such that}\\[.7ex]
        \gamma_2=\frac{a+b\sigma(\gamma_1)}{c+d\sigma(\gamma_1)}\ \text{ and }\sigma(U_{s_1})=U_{s_2}A.\end{array}\right.
\end{equation}
Since~$\sigma$ induces a ring isomorphism on $\F_{q^k}$, we have
$\sigma(U_{s_1})=\{(\sigma(y),\sigma(y^{q^{s_1}}))\colon y\in\F_{q^k}\}=\{(y,y^{q^{s_1}})\colon y\in\F_{q^k}\}$.
Using $|U_1|=q^k=|U_2|=|U_2A|$, we obtain the chain of equivalences
\begin{align*}
   \sigma(U_{s_1})=U_{s_2}A&\Longleftrightarrow 
     \{(y,y^{q^{s_1}})\colon y\in\F_{q^k}\}=\{(cx+ax^{q^{s_2}},dx+bx^{q^{s_2}})\colon x\in\F_{q^k}\}\\
     &\Longleftrightarrow\left\{\begin{array}{l}\text{for all $x\in\F_{q^k}$ there exists $y\in\F_{q^k}$ such that}\\
            cx+ax^{q^{s_2}}=y\ \text{ and }\  dx+bx^{q^{s_2}}=y^{q^{s_1}}\end{array}\right.\\
     &\Longleftrightarrow\left\{\begin{array}{l}\text{the polynomial }\hat{P}=dX-c^{q^{s_1}}X^{q^{s_1}}+bX^{q^{s_2}}-a^{q^{s_1}}X^{q^{s_1+s_2}}\in\F_{q^k}[X]\\
         \text{has every $x\in\F_{q^k}$ as a root} \end{array}\right.\\ 
     &\Longleftrightarrow\left\{\begin{array}{l}\text{the polynomial }P=dX-c^{q^{s_1}}X^{q^{s_1}}+bX^{q^{s_2}}-a^{q^{s_1}}X^{q^{(s_1+s_2)(\text{mod } k)}}\\
         \text{is the zero polynomial,}\end{array}\right.
\end{align*}
where the last step follows from the fact that $P$ has degree less than $q^k$ and at least $q^k$ roots.
In order to identify the coefficients of~$P$, we need to distinguish several cases. Set $\ell=(s_1+s_2)\ (\mmod k)$. 
\\
\underline{Case 1:} $\ell=0$ and $s_1\neq s_2$. Then $P=(d-a^{q^{s_1}})X-c^{q^{s_1}}X^{q^{s_1}}+bX^{q^{s_2}}$. 
Hence $P=0$ if and only if $c=b=0$ and $d=a^{q^{s_1}}$. 
Using \eqref{e-EquivOrbits} this implies $\gamma_2=a/(a^{q^{s_1}}\sigma(\gamma_1))$, hence $\sigma(\gamma_1)\gamma_2=a^{1-q^{s_1}}$.
Thus $\sigma(\gamma_1)\gamma_2\in\theta_q(\F_{q^k})$ thanks to \cref{L-theta}(b).
This proves (i)~$\Rightarrow$~(ii) for this case. For the converse, let $\sigma(\gamma_1)\gamma_2\in\theta_q(\F_{q^k})$.
Again by \cref{L-theta} we have $\sigma(\gamma_1)\gamma_2=a^{1-q^{s_1}}$ for some $a\in\F_{q^k}^*$. 
Then \eqref{e-EquivOrbits} is satisfied with the matrix $A=\Smallfourmat{0}{a^{q^{s_1}}}{a}{0}$.
\\
\underline{Case 2:} $\ell=0$ and $s_1=s_2$. In this case $2s_1\equiv0\ (\mmod k)$, contradicting the assumptions $k>2$ and $\gcd(s_1,k)=1$.
Hence this case does not arise.
\\
\underline{Case 3:} $\ell\neq0$ and $s_1=s_2$. Then $P=dX+(b-c^{q^{s_1}})X^{q^{s_1}}-a^{q^{s_1}}X^\ell$ and
$P=0$ if and only if $a=d=0$ and $b=c^{q^{s_1}}$. 
Using \eqref{e-EquivOrbits} this implies $\gamma_2=c^{q^{s_1}}\sigma(\gamma_1)/c$, hence $\gamma_2/\sigma(\gamma_1)=c^{q^{s_1}-1}$.
Thus $\gamma_2/\sigma(\gamma_1)\in\theta_q(\F_{q^k})$ thanks to \cref{L-theta}(b).
Now we may complete the reasoning as in Case~1.
\\
\underline{Case 4:} $\ell\neq0$ and $s_1\neq s_2$. 
In this case, all four terms of~$P$ involve different powers of~$X$ and thus 
$P=0$ if and only if $a=b=c=d=0$, which contradicts that $\det(A)\neq0$.
Hence this case does not arise.
\end{proof}

%%%%%%%%%%%%%%%%
\begin{remark}
Let $k>2$. On the set  $\cA=\{(s,\gamma)\colon s\in[k],\,\gcd(s,k)=1,\, \gamma\in\Fqn\setminus\F_{q^k}\}$
define the relation
\[
   (s_1,\gamma_1)\sim(s_2,\gamma_2)\Longleftrightarrow 
   \left\{\begin{array}{l}
    s_1\neq s_2,\ \, s_1+s_2\equiv0\ (\mmod k) \text{ and } \sigma(\gamma_1)\gamma_2\in\theta_q(\F_{q^k})^*\text{ for some }\sigma\in G_p,\\
    \text{or}\\
    s_1=s_2,\ \, s_1+s_2\not\equiv0\ (\mmod k) \text{ and } \gamma_2/\sigma(\gamma_1)\in\theta_q(\F_{q^k})^*\text{ for some }\sigma\in G_p.
    \end{array}\right.
\]
Using that~$\theta_q(\F_{q^k})^*$ is a group and invariant under $G_p$, 
one easily verifies that~$\sim$ is an equivalence relation on~$\cA$.
The equivalence classes are thus in bijection with the $\F_p$-Frobenius-isometry classes of the orbit codes  $\cO_{s,\gamma}$.
Moreover, if we restrict ourselves to~$\sigma=\id_{\F_{q^n}}$, then we obtain an equivalence relation~$\approx$, whose equivalence classes are in bijection to
the  orbit codes  $\cO_{s,\gamma}$.
\end{remark}
%%%%%%%%%%%%%%%%%

Now we obtain the following description of the automorphism groups of the orbits codes $\cO_{s,\gamma}$.
Recall the notation from \eqref{e-Galpt} and the fact that $\GL_{hn/t}(p^t)\leq \GL_{hn}(p)$.

%%%%%%%%%%%%%%%%%%%
\begin{proposition}\label{P-AutGroup}
Let $\cO_{s,\gamma}$ be as in \cref{D-Usgamma}. 
For any divisor $t$ of~$h$ denote its automorphism group  in $\GL_{hn/t}(p^t)$ by $\Aut_{p^t}(\cO_{s,\gamma})$. Then
\[
     \Aut_{p^t}(\cO_{s,\gamma})=\Big\{\sigma_{p^t}^im_a\colon a\in\F_{q^n}^*,\,i\in[hn/t]\text{ such that }\gamma^{p^{it}-1}\in\theta_q(\F_{q^k})\Big\},
\]
where $\theta_{q}$ is as in \cref{L-theta}.
\end{proposition}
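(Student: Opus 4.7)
The plan is to identify $\Aut_{p^t}(\cO_{s,\gamma})$ with a subset of the normalizer $G_{p^t}S$ by using the rigidity of generic cyclic orbits, and then to read off the precise condition from \cref{prop:7.27.3}. First I would verify that $U_{s,\gamma}$ is generic in the sense of \cref{D-Generic}. Since $n=2k$, the largest proper subfield of $\F_{q^n}$ containing $\F_q$ has $\F_q$-dimension equal to the largest proper divisor of $2k$, namely $k$, and hence equals $\F_{q^k}$; any smaller subfield cannot contain a cyclic shift of $U_{s,\gamma}$ on dimensional grounds. The argument already carried out in the proof of \cref{prop:7.27.3}, showing that $U_{s,\gamma}\ne \alpha\F_{q^k}$ for any $\alpha\in\F_{q^n}^*$, extends verbatim to $\beta U_{s,\gamma}\ne \alpha \F_{q^k}$ for any $\beta\in\F_{q^n}^*$, and this completes the verification.

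Next, applying \cref{T-LinFrobIso} with $U_1=U_2=U_{s,\gamma}$, together with \cref{R-GSSG} which identifies the normalizer of the Singer group $S$ inside $\GL_{hn/t}(p^t)$ with $G_{p^t}S$, I would conclude that every $\phi\in\Aut_{p^t}(\cO_{s,\gamma})$ lies in $G_{p^t}S$ and hence can be written as $\phi=\sigma_{p^t}^i m_a$ for some $i\in[hn/t]$ and $a\in\F_{q^n}^*$. Since $m_a(\cO_{s,\gamma})=\cO_{s,\gamma}$ for every $a\in\F_{q^n}^*$, the condition $\phi\in\Aut_{p^t}(\cO_{s,\gamma})$ reduces to $\sigma_{p^t}^i(\cO_{s,\gamma})=\cO_{s,\gamma}$, and conversely, for every such $i$ and every $a\in\F_{q^n}^*$ the map $\sigma_{p^t}^i m_a$ belongs to $\Aut_{p^t}(\cO_{s,\gamma})$.

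Finally I would apply \cref{prop:7.27.3} with $s_1=s_2=s$, $\gamma_1=\gamma_2=\gamma$, and $\sigma=\sigma_{p^t}^i$. The hypotheses $\gcd(s,k)=1$ and $k>2$ rule out $2s\equiv 0 \pmod{k}$ (otherwise $k\mid 2s$ would force $k\in\{1,2\}$), so only the second alternative of \cref{prop:7.27.3} applies, giving the equivalent condition $\gamma/\sigma_{p^t}^i(\gamma)\in\theta_q(\F_{q^k})$. Using $\sigma_{p^t}^i(\gamma)=\gamma^{p^{it}}$ together with the fact that $\theta_q(\F_{q^k}^*)$ is a multiplicative subgroup of $\F_{q^n}^*$ (hence closed under inversion), this rewrites as $\gamma^{p^{it}-1}\in\theta_q(\F_{q^k})$, which is the stated condition.

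The main obstacle I anticipate is the rigidity step in the second paragraph, namely deducing from \cref{T-LinFrobIso} that an arbitrary $\F_{p^t}$-linear automorphism of $\cO_{s,\gamma}$ must already lie in the normalizer $G_{p^t}S$ rather than in the much larger group $\GL_{hn/t}(p^t)$; once this confinement is available, the remainder is a direct specialization of \cref{prop:7.27.3} to the diagonal case $U_1=U_2=U_{s,\gamma}$ and a routine manipulation of the group $\theta_q(\F_{q^k}^*)$.
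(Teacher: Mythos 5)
Your genericity check, the reduction to the question of whether $\sigma_{p^t}^i$ stabilizes the orbit, and the specialization of \cref{prop:7.27.3} to $s_1=s_2=s$, $\gamma_1=\gamma_2=\gamma$, $\sigma=\sigma_{p^t}^i$ (including the passage from $\gamma/\gamma^{p^{it}}\in\theta_q(\F_{q^k})$ to $\gamma^{p^{it}-1}\in\theta_q(\F_{q^k})$, legitimate because $\theta_q(\F_{q^k}^*)$ is a subgroup of $\F_{q^n}^*$) all match the paper's proof. The genuine gap is exactly the step you flag as the "main obstacle": \cref{T-LinFrobIso} cannot deliver the confinement of $\Aut_{p^t}(\cO_{s,\gamma})$ to $G_{p^t}S$. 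That theorem asserts, for generic generators, the equivalence of the \emph{existence} of an $\F_{p^t}$-linear isometry between two orbit codes and the existence of an $\F_{p^t}$-Frobenius isometry; applied with $U_1=U_2=U_{s,\gamma}$ both existence statements hold trivially (take the identity map), so the theorem gives no information about an arbitrary $\phi\in\GL_{hn/t}(p^t)$ satisfying $\phi(\cO_{s,\gamma})=\cO_{s,\gamma}$, and in particular it does not force such a $\phi$ to lie in $N_{\GL_{hn/t}(p^t)}(S)$. \cref{R-GSSG} only identifies that normalizer with $G_{p^t}S$; it does not place automorphisms inside it.

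The paper closes this gap with a different ingredient: it invokes \cite[Thm.~4.4]{Heideequiv}, a rigidity theorem stating that for an orbit code generated by a generic subspace (here $U_{s,\gamma}$, regarded as an element of $\cG_{p^t}(kh/t,nh/t)$) the full automorphism group inside $\GL_{hn/t}(p^t)$ is contained in $N_{\GL_{hn/t}(p^t)}(S)=G_{p^t}S$. This is a statement about individual automorphisms, not about isometry classes, and it is not a consequence of \cref{T-LinFrobIso} (which is itself derived from a different result of that reference). If you replace your appeal to \cref{T-LinFrobIso} by this result, or prove such a rigidity statement directly, the rest of your argument goes through and coincides with the paper's proof.
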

%%%%%%%%%%%%%%%%%%%

\begin{proof}
As shown in the proof of  \cref{prop:7.27.3}, $U_{s,\gamma}$ does not coincide with a cyclic shift of $\F_{q^k}$. 
Since $\dim_{\F_q} U_{s,\gamma}=k=n/2$, we conclude that $U_{s,\gamma}$ is generic in the sense of Definition \ref{D-Generic}. 
\\
Now, we turn to the stated identity.
Let $\tau\in\Aut_{p^t}(\cO_{s,\gamma})$. 
Consider~$U_{s,\gamma}$ as a subspace in the Grassmannian $\cG_{p^t}(nh/t,kh/t)$. 
Then \cite[Thm.~4.4]{Heideequiv} implies that $\Aut_{p^t}(\cO_{s,\gamma})\leq N_{GL_{hn/t}(p^t)}(S)=G_{p^t}S$.
Thus, $\tau=\sigma_{p^t}^i m_a$ for some $i\in[hn/t]$ and $a\in\F_{q^n}^*$. 
Clearly, $m_a\in\Aut_{p^t}(\cO_{s,\gamma})$ and thus $\sigma_{p^t}^i\in\Aut_{p^t}(\cO_{s,\gamma})$.
Hence $\sigma_{p^t}^i(\Orb(U_{s,\gamma}))=\Orb(U_{s,\gamma})$ and \cref{prop:7.27.3} implies 
$\gamma^{p^{it}-1}=\gamma^{p^{it}}/\gamma\in\theta_{q}(\F_{q^k})$.
The converse containment follows in the same way.
\end{proof}

We now engage in some counting and start with determining the number of distinct orbits $\cO_{s,\gamma}$.
For the rest of this section we fix the following.

%%%%%%%%%%%%%%%
\begin{notation}\label{Nota}
Let  $k>2$ and $n=2k$. 
Set $\cO=\{\cO_{s,\gamma}\colon s\in[k],\,\gcd(s,k)=1,\,\gamma\in\Fqn\setminus\F_{q^k}\}$, which
 is the set of distinct orbits of the spaces~$U_{s,\gamma}$.
Furthermore, fix a primitive element~$\omega$ of~$\Fqn$. 
\end{notation}
%%%%%%%%%%%%%%

%%%%%%%%%%%%%%%%%%
\begin{proposition}\label{P-DistinctOrbits}
Set 
\begin{align*}
   \cL&=\{\ell\colon \ell=1,\ldots,(q^k+1)(q-1),\,(q^k+1)\nmid\ell\},\\[.5ex]
   \bar{\cA}&=\{(s,\ell)\colon 1\leq s\leq\lfloor k/2\rfloor,\,\gcd(s,k)=1,\,\ell\in\cL\}.
\end{align*}
Then  $\alpha:\bar{\cA}\longrightarrow\cO,\ (s,\ell)\longmapsto\cO_{s,\omega^\ell}$
is a bijection and thus $|\cO|=|\bar{\cA}|=\phi(k)q^k(q-1)/2$, where $\phi$ is the Euler-$\phi$-function.
\end{proposition}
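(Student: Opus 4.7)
The plan is to use \cref{prop:7.27.3} (specialised to $\sigma=\id_{\fqn}$) to characterise when $\cO_{s_1,\gamma_1}=\cO_{s_2,\gamma_2}$, and then to verify that $\bar{\cA}$ is a complete, non-redundant system of representatives. First I would do an algebraic warm-up on the subgroup $H:=\theta_q(\F_{q^k})^*$ of $\fqn^*$. By \cref{L-theta} it has order $(q^k-1)/(q-1)$, so $|\fqn^*/H|=(q^k+1)(q-1)$. Writing $\F_{q^k}^*=\langle\omega^{q^k+1}\rangle$, one checks by order and containment that $H=\langle\omega^{(q^k+1)(q-1)}\rangle$. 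Hence the $(q^k+1)(q-1)$ powers $\omega^1,\ldots,\omega^{(q^k+1)(q-1)}$ are a complete system of coset representatives for $\fqn^*/H$, and $\omega^\ell\in\F_{q^k}^*$ iff $(q^k+1)\mid\ell$. This identifies $\cL$ bijectively with the cosets of $H$ that are disjoint from $\F_{q^k}^*$, and incidentally yields $|\cL|=(q-1)q^k$. Well-definedness of $\alpha$ is now immediate: for $\ell\in\cL$ we have $\omega^\ell\notin\F_{q^k}$, so $\cO_{s,\omega^\ell}\in\cO$.

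Next I would prove surjectivity. Given $\cO_{s,\gamma}\in\cO$, my first reduction is to $s\in[1,\lfloor k/2\rfloor]$: if $s>\lfloor k/2\rfloor$, I swap to the pair $(k-s,\gamma^{-1})$, which has $\gcd(k-s,k)=1$ and $\gamma^{-1}\in\fqn\setminus\F_{q^k}$, and which generates the same orbit by the first clause of \cref{prop:7.27.3} (since $s+(k-s)\equiv0\ (\mmod k)$, $s\neq k-s$ for $k>2$, and $\gamma\cdot\gamma^{-1}=1\in H$). Once $s\leq\lfloor k/2\rfloor$, I pick the unique $\ell\in\cL$ with $\gamma H=\omega^\ell H$, and apply the second clause of \cref{prop:7.27.3} (applicable since $2s\not\equiv0\ (\mmod k)$ for $\gcd(s,k)=1$ and $k>2$) to conclude $\cO_{s,\gamma}=\cO_{s,\omega^\ell}$.

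The main obstacle will be injectivity, whose crux is that the restriction $s\leq\lfloor k/2\rfloor$ kills the first clause of \cref{prop:7.27.3}. Suppose $\cO_{s,\omega^\ell}=\cO_{s',\omega^{\ell'}}$ with $(s,\ell),(s',\ell')\in\bar{\cA}$. If $s=s'$, the condition $\omega^{\ell-\ell'}\in H=\langle\omega^{(q^k+1)(q-1)}\rangle$ combined with $|\ell-\ell'|<(q^k+1)(q-1)$ forces $\ell=\ell'$. If instead $s\neq s'$ and $s+s'\equiv0\ (\mmod k)$, then $s,s'\in[1,\lfloor k/2\rfloor]$ gives $s+s'\leq k$, hence $s+s'=k$ and $s'=k-s$; but $s,k-s\leq\lfloor k/2\rfloor$ then forces $s=s'=k/2$, contradicting $\gcd(s,k)=1$ and $k>2$. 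So this second case is vacuous, and $\alpha$ is injective. Finally, for $k>2$ the map $s\mapsto k-s$ is a fixed-point-free involution on $\{s\in[1,k-1]:\gcd(s,k)=1\}$, so this set has $\phi(k)/2$ representatives in $[1,\lfloor k/2\rfloor]$, giving $|\bar{\cA}|=(\phi(k)/2)\cdot(q-1)q^k=\phi(k)q^k(q-1)/2$.
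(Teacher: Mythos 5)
Your proposal is correct and follows essentially the same route as the paper: specialize \cref{prop:7.27.2}-type equivalence, i.e.\ \cref{prop:7.27.3} with $\sigma=\id_{\fqn}$, identify $\theta_q(\F_{q^k}^*)=\langle\omega^{(q^k+1)(q-1)}\rangle$ and use the powers $\omega^\ell$, $\ell\in\cL$, as coset representatives avoiding $\F_{q^k}$, handle $s>\lfloor k/2\rfloor$ via $(k-s,\gamma^{-1})$ for surjectivity, and count $\phi(k)/2$ admissible values of $s$ times $|\cL|=q^k(q-1)$. The only difference is that you spell out the injectivity argument (ruling out $s+s'\equiv 0\ (\mmod k)$ within $[1,\lfloor k/2\rfloor]$) which the paper leaves as "clear"; your details are accurate.
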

%%%%%%%%%%%%%%%%%%%%

\begin{proof}
Applying \cref{prop:7.27.3} to $\sigma=\id_{\F_{q^n}}$ we conclude that for any two orbits $\cO_{s_1,\gamma_1}$ and $\cO_{s_2,\gamma_2}$ we have
\begin{equation}\label{e-SameOrbits}
    \cO_{s_1,\gamma_1}=\cO_{s_2,\gamma_2}\Longleftrightarrow 
    \left\{\begin{array}{l}s_2=s_1\text{ and }\gamma_2\in\gamma_1\theta_q(\F_{q^k}^*)\\ \text{or }\\
         s_2=k-s_1\text{ and } \gamma_2\in\gamma_1^{-1}\theta_q(\F_{q^k}^*).\end{array}\right.
\end{equation}
Using that $n=2k$ we have $\F_{q^k}^*=\subspace{\omega^{q^k+1}}$ and $\theta_q(\F_{q^k}^*)=\subspace{\omega^{(q^k+1)(q-1)}}$.
A complete set of distinct coset representatives of $\theta_q(\F_{q^k}^*)$ is thus given by $\omega^{\ell},\,\ell=1,\ldots,(q^k+1)(q-1)$. 
Clearly, if one representative is in $\Fqn\setminus\F_{q^k}$, then this is true for every representative. 
Thus, for parametrizing the codes $\cO_{s,\gamma}$ we only have to consider the set of representatives $\omega^\ell$, where $\ell\in\cL$. 
It is clear from \eqref{e-SameOrbits} and the definition of~$\bar{\cA}$ that~$\alpha$ is injective. 
For surjectivity consider an orbit $\cO_{s_1,\gamma_1}$. 
If $1\leq s_1\leq\lfloor k/2\rfloor$, then choose the unique $\ell\in\cL$ such that 
$\gamma_1\in\omega^{\ell}\theta_q(\F_{q^k}^*)$. Hence $\cO_{s_1,\gamma_1}=\cO_{s_1,\omega^\ell}=\alpha(s_1,\ell)$.
If $s_1>\lfloor k/2\rfloor$, then the unique $\ell\in\cL$ such that 
$\gamma_1^{-1}\in\omega^{\ell}\theta_q(\F_{q^k}^*)$ leads to $\cO_{s_1,\gamma_1}=\cO_{k-s_1,\omega^\ell}$.
For the cardinality note that $|\{s\colon 1\leq s\leq\lfloor k/2\rfloor,\,\gcd(s,k)=1\}|=\phi(k)/2$ and 
$|\cL|=(q^k+1)(q-1)-(q-1)=q^k(q-1)$.
\end{proof}

Recall from \cref{P-OptQuasiOpt} that $\cO_{s,\gamma}$ is either an optimal or a quasi-optimal full-length orbit code. 
We can now count these two cases.

%%%%%%%%%%%%%%%%%%%%%%%
\begin{proposition}\label{P-CountOpt}
We have 
\[
   \big|\{\cO_{s,\gamma}\colon \cO_{s,\gamma}\text{ is quasi-optimal}\}\big|
   =\left\{\begin{array}{cl}\frac{\phi(k)}{2}q^k,&\text{if $q$ is even},\\[.5ex]  \frac{\phi(k)}{2}(q^k-1),&\text{if $q$ is odd.}\end{array}\right.
\]
The number of optimal codes in~$\cO$ is therefore $\phi(k)q^k(q-2)/2$ if~$q$ is even and $\phi(k)(q^k(q-2)+1)/2$ if~$q$ is odd.
\end{proposition}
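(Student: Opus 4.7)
The plan is to combine \cref{P-OptQuasiOpt} with the bijection $\alpha:\bar{\cA}\to\cO$ of \cref{P-DistinctOrbits} and then count the pairs $(s,\ell)\in\bar{\cA}$ for which the representative $\cO_{s,\omega^\ell}$ is quasi-optimal. Since~$\alpha$ parametrizes the distinct orbits by pairs $(s,\ell)$ with $1\leq s\leq\lfloor k/2\rfloor$, $\gcd(s,k)=1$ and $\ell\in\cL$, and the set of admissible~$s$ has cardinality $\phi(k)/2$, it suffices to determine how many $\ell\in\cL$ satisfy $\N_{q^{2k}/q}(\omega^\ell)=1$.

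Before carrying out the count I will first verify the (essential) fact that the condition ``$\cO_{s,\gamma}$ is quasi-optimal'' depends only on the equivalence class of~$(s,\gamma)$ under the relation implicit in~\eqref{e-SameOrbits}, so that testing it on a single representative $\omega^\ell$ is well defined. By \cref{P-OptQuasiOpt} quasi-optimality is equivalent to $\N_{q^{2k}/q}(\gamma)=1$. If $\gamma'=\gamma\,c^{q-1}$ with $c\in\F_{q^k}^*$, then using $\N_{q^{2k}/q^k}(c)=c^2$ we get $\N_{q^{2k}/q}(c^{q-1})=\N_{q^k/q}(c)^{2(q-1)}=1$, so $\N_{q^{2k}/q}(\gamma')=\N_{q^{2k}/q}(\gamma)$. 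Similarly $\N_{q^{2k}/q}(\gamma^{-1})=\N_{q^{2k}/q}(\gamma)^{-1}$, so the condition is also preserved under $\gamma\mapsto\gamma^{-1}$ which corresponds to the swap $s\leftrightarrow k-s$ in~\eqref{e-SameOrbits}.

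Next, I translate the norm condition. Writing $N=(q^{2k}-1)/(q-1)$, we have $\N_{q^{2k}/q}(\omega^\ell)=\omega^{\ell N}$, and since~$\omega$ is primitive this equals~$1$ iff $(q^{2k}-1)\mid \ell N$, i.e.\ iff $(q-1)\mid\ell$. The number of $\ell\in\{1,\ldots,(q^k+1)(q-1)\}$ satisfying $(q-1)\mid\ell$ is exactly $q^k+1$. From these I subtract the ones excluded by the defining condition $(q^k+1)\nmid\ell$ of~$\cL$, i.e.\ the common multiples of $q-1$ and $q^k+1$ in that range. The key observation is
\[
    \gcd(q-1,\,q^k+1)=\gcd(q-1,\,2),
\]
which holds because $q^k+1\equiv 2\ (\mmod{q-1})$. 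Hence $\mathrm{lcm}(q-1,q^k+1)$ equals $(q-1)(q^k+1)$ when~$q$ is even and $(q-1)(q^k+1)/2$ when~$q$ is odd, so the number of common multiples in $\{1,\ldots,(q-1)(q^k+1)\}$ is~$1$ in the even case and~$2$ in the odd case.

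Putting everything together, the number of admissible~$\ell$ is $q^k$ if~$q$ is even and $q^k-1$ if~$q$ is odd, and multiplying by $\phi(k)/2$ gives the stated count of quasi-optimal orbits. The count of optimal orbits then follows by subtracting from $|\cO|=\phi(k)q^k(q-1)/2$ given in \cref{P-DistinctOrbits}, which yields $\phi(k)q^k(q-2)/2$ for~$q$ even and $\phi(k)(q^k(q-2)+1)/2$ for~$q$ odd. The only non-routine step is the well-definedness check of the first paragraph; the parity-based $\gcd$ computation is the reason the two cases split.
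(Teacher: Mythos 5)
Your proposal is correct and follows essentially the same route as the paper: reduce via the parametrization of \cref{P-DistinctOrbits} to counting the $\ell\in\cL$ with $(q-1)\mid\ell$, which gives $q^k+1-\gcd(q^k+1,q-1)$ values, and split by the parity of $q$ using $q^k+1\equiv 2\ (\mmod\ q-1)$ before multiplying by $\phi(k)/2$. The well-definedness check in your first paragraph is harmless but redundant, since quasi-optimality is a property of the orbit code itself and \cref{P-OptQuasiOpt} already ties it to the norm of any admissible $\gamma$ representing that orbit.
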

%%%%%%%%%%%%%%%%%%%%%%%

\begin{proof}
Consider the set~$\cL$ from \cref{P-DistinctOrbits} and set 
$\cL_{\text{qo}}=\{\ell\in\cL\colon\N_{q^n/q}(\omega^\ell)=1\}$.
From \cref{P-OptQuasiOpt} we know that $\cO_{s,\omega^\ell}$ is quasi-optimal $\Longleftrightarrow$ $\ell\in\cL_{\text{qo}}$.
Clearly, $\N_{q^n/q}(\omega^\ell)=1\Longleftrightarrow\omega^{\ell\frac{q^{2k}-1}{q-1}}=1\Longleftrightarrow(q-1)\mid\ell$ and
thus $\cL_{\text{qo}}=\{\ell\colon \ell\leq(q^k+1)(q-1),\, (q-1)\mid \ell,\,(q^k+1)\nmid\ell\}$.
Hence
\[
  |\cL_{\text{qo}}|=q^k+1-\frac{(q^k+1)(q-1)}{\lcm(q^k+1,q-1)}=q^k+1-\gcd(q^k+1,q-1).
\]
Since $q^k+1=(q-1)\sum_{i=0}^{k-1}q^i+2$, we obtain $|\cL_{\text{qo}}|=q^k$ if~$q$ is even and $|\cL_{\text{qo}}|=q^k-1$ if~$q$ is odd.
Taking the number of values for~$s$ into account, we arrive at the desired result.
\end{proof}

Recall from \cref{thm:thm4.1heide} that the shape of the distance distribution of $\Orb(U_{s,\gamma})$ depends on whether 
$U_{s,\gamma}$ 
contains a cyclic shift of the subfield $\F_{q^2}$.
We now determine the number of such orbit codes.
Note that the fraction $(q^k+1)/(q+1)$ appearing below is indeed an integer for~$k$ odd.

%%%%%%%%%%%%%%%%%%%%%
\begin{proposition}\label{P-CycShift}
We have 
\[
   \big|\{\cO_{s,\gamma}\colon U_{s,\gamma}\text{ contains a cyclic shift of $\F_{q^2}$}\}\big|
   =\left\{\begin{array}{cl}\frac{\phi(k)}{2}\Big(\frac{q^k+1}{q+1}-1\Big),&\text{if $k$ is odd},\\[1ex] 0,&\text{if $k$ is even.}\end{array}\right.
\] 
\end{proposition}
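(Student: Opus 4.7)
The plan is to reduce to a counting problem using \cref{T-ContainCycShift} and the parametrization of distinct orbits from \cref{P-DistinctOrbits}. If $k$ is even, \cref{T-ContainCycShift} immediately implies that no $U_{s,\gamma}$ contains a cyclic shift of $\F_{q^2}$, yielding the count $0$. So assume $k$ is odd; then the condition that $U_{s,\gamma}$ contains a cyclic shift of $\F_{q^2}$ is equivalent to $\N_{q^{2k}/q^2}(\gamma)=-1$.

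Before counting, I will verify that the norm condition $\N_{q^{2k}/q^2}(\gamma)=-1$ is invariant under the orbit equivalence \eqref{e-SameOrbits}. In Case~(a), where $\gamma_2 = \gamma_1 c^{q-1}$ with $c \in \F_{q^k}^*$, I will show $\N_{q^{2k}/q^2}(c^{q-1}) = 1$; the key identity is that $(q^k-1) \mid (q-1)(q^{2k}-1)/(q^2-1)$, which follows from $(q^{2k}-1)/(q^2-1) = (q^k-1)(q^k+1)/(q^2-1)$ together with $(q+1)\mid (q^k+1)$ for $k$ odd, so that $c^{(q-1)(q^{2k}-1)/(q^2-1)}=1$. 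In Case~(b), $\gamma_2 = \gamma_1^{-1}c^{q-1}$ and $(-1)^{-1}=-1$. Hence the property is well defined on orbits.

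Using the parametrization of \cref{P-DistinctOrbits} with $\gamma = \omega^\ell$ for a fixed primitive element $\omega$ of $\fqn$, the task reduces, for each of the $\phi(k)/2$ admissible values of $s$, to counting $\ell \in \cL$ with $\omega^{\ell e}=-1$, where $e=(q^{2k}-1)/(q^2-1)$. Writing $-1 = 1$ for $q$ even and $-1=\omega^{(q^{2k}-1)/2}$ for $q$ odd, the condition translates to $\ell \equiv 0 \pmod{q^2-1}$ or $\ell \equiv (q^2-1)/2 \pmod{q^2-1}$, respectively. In both cases the range $\{1,\dots,(q^k+1)(q-1)\}$ contains exactly $r := (q^k+1)/(q+1)$ solutions.

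Finally, I have to subtract the multiples of $q^k+1$ in this list, since $\cL$ excludes them. The main computational step is the identity $\gcd(q^k+1,q^2-1)=q+1$, which I will derive from $(q+1)\mid(q^k+1)$ and the congruence $r = q^{k-1}-q^{k-2}+\cdots+1 \equiv 1 \pmod{q-1}$, yielding $\gcd(r,q-1)=1$. Using this gcd together with the residue $q^k+1 \equiv q+1 \pmod{q^2-1}$ (valid for $k$ odd), I will show that exactly one of the $r$ candidates is divisible by $q^k+1$—namely $\ell=(q-1)(q^k+1)$ for $q$ even, and $\ell=(q-1)(q^k+1)/2$ for $q$ odd. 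This leaves $r-1$ valid values of $\ell$, and multiplying by the $\phi(k)/2$ admissible values of $s$ gives the claimed count $\frac{\phi(k)}{2}\bigl(\frac{q^k+1}{q+1}-1\bigr)$. The main obstacle is the uniform arithmetic bookkeeping across the parities of $q$, in particular verifying that exactly one of the $r$ candidates in each case is a multiple of $q^k+1$.
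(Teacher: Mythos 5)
Your proposal is correct and follows essentially the same route as the paper: reduce via \cref{T-ContainCycShift} to the norm condition $\N_{q^{2k}/q^2}(\omega^\ell)=-1$, translate it into the same divisibility conditions on $\ell$ (split by the parity of $q$), and count admissible $\ell\in\cL$ using $\gcd(q^k+1,q^2-1)=q+1$, then multiply by the $\phi(k)/2$ values of $s$. The only cosmetic difference is that you identify the unique excluded multiple of $q^k+1$ explicitly, whereas the paper removes it by an lcm/inclusion–exclusion count; your extra invariance check of the norm condition under orbit equivalence is harmless but unnecessary, since the parametrization of \cref{P-DistinctOrbits} already selects one representative per orbit.
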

%%%%%%%%%%%%%%%%%%%%%

\begin{proof}
By \cref{T-ContainCycShift} $U_{s,\gamma}$ contains a cyclic shift of $\F_{q^2}$ if and only if $k$ is odd and $\N_{q^{2k}/q^2}(\gamma)=-1$.
Thus from now on let~$k$ be odd.
With $\gamma=\omega^\ell$ we compute
\begin{align*}
   \N_{q^{2k}/q^2}(\omega^\ell)=-1&\Longleftrightarrow \omega^{\ell\frac{q^{2k}-1}{q^2-1}}=-1\\
   &\Longleftrightarrow 
    \left\{\begin{array}{cl}(q^2-1)\mid 2\ell\text{ and }(q^2-1)\nmid\ell,&\text{if $q$ odd},\\[.5ex]
              (q^2-1)\mid \ell,&\text{if $q$ even.}\end{array}\right.
\end{align*}
a) Let~$q$ be even. Set $\cL'=\{\ell\in\cL\colon (q^2-1)\mid\ell\}$. 
Using that~$\cL'$ consists of those positive integers upper bounded by $(q^k+1)(q-1)$ that are multiples of $q^2-1$, but not of 
$\lcm(q^k+1,q^2-1)$, we have
\[
  |\cL'|=\frac{(q^k+1)(q-1)}{q^2-1}-\frac{(q^k+1)(q-1)}{\lcm(q^k+1,q^2-1)}.
\]
Since $(q^k+1)/(q+1)=(q-1)\sum_{i=1}^{(k-1)/2}q^{k-2i}+1$, we have $\gcd((q^k+1)/(q+1),q-1)=1$ and thus 
$\gcd(q^k+1,q^2-1)=q+1$. As a consequence, $\lcm(q^k+1,q^2-1)=(q^k+1)(q^2-1)/(q+1)=(q^k+1)(q-1)$, and we arrive at the desired result.
\\
b) Let~$q$ be odd. Set $\cL'=\{\ell\in\cL\colon (q^2-1)/2\mid\ell\text{ and }(q^2-1)\nmid\ell\}$ and $N:=(q^k+1)(q-1)$. 
Then $\cL'=\{\ell\colon 1\leq\ell\leq N,\,(q^2-1)/2\mid\ell,\,(q^2-1)\nmid\ell,\,(q^k+1)\nmid\ell\}$.
There are $2N/(q^2-1)$ such multiples of~$(q^2-1)/2$. 
Of those, $N/(q^2-1)$ are multiples of~$q^2-1$. 
Now we arrive at 
\begin{align*}
  |\cL'|&=\frac{2N}{q^2-1}-\frac{N}{q^2-1}-\frac{N}{\lcm((q^2-1)/2,q^k+1)}+\frac{N}{\lcm(q^2-1,q^k+1)}\\
        &=\frac{N}{q^2-1}-\frac{2N}{(q^k+1)(q-1)}+\frac{N}{(q^k+1)(q-1)}=\frac{q^k+1}{q+1}-1,
\end{align*}
where the second step follows from $\lcm((q^2-1)/2,q^k+1)=(q^k+1)(q-1)/2$.
\end{proof}

We now turn to the $\F_{p}$-Frobenius automorphism group of an orbit $\cO_{s,\omega^\ell}$.
The automorphism groups $\Aut_{p}(\cO_{s,\omega^\ell})$ have been determined in Proposition \ref{P-AutGroup}.
\cref{R-GSSG} implies that $\Aut_{p}(\cO_{s,\omega^\ell})=(\Aut_{p}(\cO_{s,\omega^\ell})\cap G_p)S$, 
where $S=\GL_1(q^n)$.
We now describe the group $\Aut_{p}(\cO_{s,\omega^\ell})\cap G_p$, that is, the stabilizer of  
$\cO_{s,\omega^\ell}$ under the action of the Galois group~$G_p$.

%%%%%%%%%%%%%%%%%%%%%%%%
\begin{definition}\label{D-FrobAuto}
Let $\cO_{s,\omega^\ell}\in\cO$. We set $\cF_{s,\omega^\ell}:=\Aut_{p}(\cO_{s,\omega^\ell})\cap G_p$, and call this group the 
\textbf{Frobenius-automorphism group} of $\cO_{s,\omega^\ell}$.
Moreover, we denote the orbit of $\cO_{s,\omega^\ell}$ under the action of $G_p$ by 
$\Orb_{G_p}(s,\omega^\ell)=\{\sigma_p^i(\cO_{s,\omega^\ell})\colon i\in[hn]\}$ and call it the \textbf{Frobenius-orbit} of $\cO_{s,\omega^\ell}$.
Note that $|\Orb_{G_p}(s,\omega^\ell)|=hn/|\cF_{s,\omega^\ell}|$ (by the Orbit-Stabilizer Theorem).
\end{definition}
%%%%%%%%%%%%%%%%%%%%%%%%%

Let us rephrase the criterion given in  \cref{P-AutGroup}.
For $\gamma=\omega^\ell$ and $i\in\{0,\ldots,hn-1\}$ we have
$\gamma^{p^i-1}\in\theta_q(\F_{q^k}^*)\Longleftrightarrow \omega^{\ell(p^i-1)}\in\subspace{\omega^{(q^k+1)(q-1)}}$, and 
this in turn is equivalent to 
\begin{equation}\label{e-congruence}
   \ell(p^i-1)\equiv j(q^k+1)(q-1)\ \mmod(q^n-1) \text{ for some }j\in\Z.
\end{equation}
Thus we will first study this congruence.

%%%%%%%%%%%%%%%%
\begin{proposition}\label{P-MinSol}
Fix $i\in\{1,\ldots,hn-1\}$ and set 
\[
    \hat{\ell}_i=\frac{\lcm(p^i-1,\,(q^k+1)(q-1))}{p^i-1}.
\]
Then $\hat{\ell}_i\leq \frac{(q^k+1)(q-1)}{p-1}$ and $\{\ell\in\Z\colon \ell\text{ satisfies \eqref{e-congruence}}\}=\hat{\ell}_i\Z$.
\end{proposition}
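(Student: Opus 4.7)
The plan is to reduce \eqref{e-congruence} to a single divisibility condition, then factor out the relevant gcd. First I would observe that \eqref{e-congruence} is equivalent to asking $\ell(p^i-1)$ to lie in the subgroup $(q^k+1)(q-1)\Z+(q^n-1)\Z$ of $\Z$. Since $n=2k$, we have $q^n-1=(q^k-1)(q^k+1)$, and $(q-1)\mid(q^k-1)$, so $(q^k+1)(q-1)$ already divides $q^n-1$. Hence this subgroup collapses to $M\Z$, where $M:=(q^k+1)(q-1)$, and \eqref{e-congruence} becomes the single divisibility
\[
    M \mid \ell(p^i-1).
\]

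Now set $d:=\gcd(p^i-1,M)$. Standard elementary arithmetic gives $M\mid\ell(p^i-1)$ if and only if $(M/d)\mid\ell\cdot((p^i-1)/d)$, and since $\gcd(M/d,(p^i-1)/d)=1$, this is equivalent to $(M/d)\mid\ell$. Using $M/d=\lcm(M,p^i-1)/(p^i-1)=\hat{\ell}_i$, I obtain the desired identification
\[
    \{\ell\in\Z\colon \ell\text{ satisfies \eqref{e-congruence}}\}=\hat{\ell}_i\Z.
\]

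For the upper bound $\hat{\ell}_i\leq M/(p-1)$, I will show $(p-1)\mid d$, which is enough since $\hat{\ell}_i=M/d$. Both ingredients are transparent: on the one hand $(p-1)\mid(p^i-1)$ for every $i\geq 1$; on the other hand $q=p^h$ gives $q-1=p^h-1=(p-1)(p^{h-1}+\cdots+1)$, so $(p-1)\mid(q-1)\mid M$. Hence $p-1$ divides both $p^i-1$ and $M$, and therefore divides their gcd $d$.

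The proof is essentially a clean application of elementary divisibility arguments, so there is no real obstacle; the only place that requires a moment's thought is the initial reduction, where one must notice the containment $(q^k+1)(q-1)\Z\supseteq(q^n-1)\Z$ that makes the ``for some $j\in\Z$'' quantifier in \eqref{e-congruence} harmless.
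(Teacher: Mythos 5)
Your proof is correct and follows essentially the same route as the paper: both reduce \eqref{e-congruence} to the single divisibility $(q^k+1)(q-1)\mid\ell(p^i-1)$ via the fact that $(q^k+1)(q-1)$ divides $q^n-1$ (the paper uses $q^n-1=(q^k+1)(q-1)\tfrac{q^k-1}{q-1}$), and then identify the solution set as $\hat{\ell}_i\Z$ by elementary $\gcd$/$\lcm$ arithmetic. The only difference is organizational: you state the two-way equivalence up front and spell out the bound via $(p-1)\mid\gcd(p^i-1,(q^k+1)(q-1))$, which the paper leaves implicit in its appeal to $\lcm(a,b)=ab/\gcd(a,b)$.
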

%%%%%%%%%%%%%%%%

\begin{proof}
The first statement follows from the identity $\lcm(a,b)=ab/\gcd(a,b))$ for any $a,b\in\Z$.
Next note that the solution set to \eqref{e-congruence} is an ideal in~$\Z$. 
Moreover,~$\hat{\ell}_i$ satisfies \eqref{e-congruence} because $\hat{\ell}_i(p^i-1)=\lcm(p^i-1,\,(q^k+1)(q-1))=j(q^k+1)(q-1)$
for some $j\in\Z$.
This shows that $\hat{\ell}_i\Z$ is contained in the solution set of \eqref{e-congruence}.
Finally, an arbitrary $\ell\in\Z$ satisfies \eqref{e-congruence} if and only if $(q^n-1)\mid (\ell(p^i-1)-j(q^k+1)(q-1))$ for some $j\in\Z$. 
Using that $q^n-1=(q^k+1)(q-1)\frac{q^k-1}{q-1}$, this implies that $(q^k+1)(q-1)\mid \ell(p^i-1)$.
The smallest positive~$\ell$ satisfying this last condition is~$\hat{\ell}_i$. 
This concludes the proof.
\end{proof}

Now we are ready to provide a numerical description of the Frobenius-automorphism groups.

%%%%%%%%%%%%%%%%%%%%%%%%%%%%%
\begin{theorem}\label{T-AutGroupellhat}
Set $\cI=\{i\in[hn]\colon (q^k+1)\nmid \hat{\ell}_i\}$.
Let $(s,\ell)\in\bar{\cA}$ and consider the orbit~$\cO_{s,\omega^\ell}$. Then
\[
    \cF_{s,\omega^\ell}=\left\{\begin{array}{cl}
        \{\id_{\Fqn}\},&\text{ if $\hat{\ell}_i\nmid\ell$ for all }i\in\cI,\\[.5ex]
         \subspace{\sigma_p^\iota},\text{ where }\iota=\min\{i\in\cI\colon \hat{\ell}_i\text{ divides } \ell\},&\text{ if $\hat{\ell}_i\mid\ell$ for some }i\in\cI.\end{array}\right.
\]
As a consequence, the Frobenius-automorphism groups of the orbits~$\cO_{s,\omega^\ell}$ 
are exactly the groups $\subspace{\sigma_p^i}$, $i\in\hat{\cI}$, where $\hat{\cI}:=\{i\in\cI\colon i=\min\{t\colon \hat{\ell}_t=\hat{\ell}_i\}\}$.
Even more, $\subspace{\sigma_p^i},\,i\in\hat{\cI},$ arises as the Frobenius-automorphism group for exactly
$\frac{\phi(k)}{2}\big(\alpha_i-\sum_{j\in\hat{\cI}\setminus\{i\},\, j\mid i}\alpha_j\big)$ orbits~$\cO_{s,\omega^\ell}$, where 
\[
     \alpha_i=\frac{(q^k+1)(q-1)}{\hat{\ell}_i}-\frac{(q^k+1)(q-1)}{\lcm(\hat{\ell}_i,q^k+1)}.
\]
\end{theorem}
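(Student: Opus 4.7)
The plan is to combine Proposition~\ref{P-AutGroup} (with $t=1$) and Proposition~\ref{P-MinSol} to reduce membership in $\cF_{s,\omega^\ell}$ to the divisibility condition $\hat{\ell}_i\mid\ell$, and then to translate this into the stated description and count.

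First, I would identify $\cF_{s,\omega^\ell}$ explicitly. Setting $t=1$ and $\gamma=\omega^\ell$ in Proposition~\ref{P-AutGroup}, the element $\sigma_p^i$ lies in $\cF_{s,\omega^\ell}$ exactly when $\omega^{\ell(p^i-1)}\in\theta_q(\F_{q^k}^*)$. Since $\F_{q^n}^*=\langle\omega\rangle$, $\F_{q^k}^*=\langle\omega^{q^k+1}\rangle$, and therefore $\theta_q(\F_{q^k}^*)=\langle\omega^{(q^k+1)(q-1)}\rangle$, this condition is precisely the congruence studied in Proposition~\ref{P-MinSol}, whose integer solutions form $\hat{\ell}_i\Z$. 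Hence $\sigma_p^i\in\cF_{s,\omega^\ell}$ if and only if $\hat{\ell}_i\mid\ell$.

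Second, since $G_p$ is cyclic of order $hn$, any subgroup of it equals $\langle\sigma_p^\iota\rangle$ for a unique $\iota\mid hn$, and $\iota$ is the smallest positive index $i\in[hn]$ such that $\sigma_p^i$ belongs to the subgroup. Applied to $\cF_{s,\omega^\ell}$, this means $\iota$ is the smallest $i$ with $\hat{\ell}_i\mid\ell$. Because $\ell\in\cL$ forbids $(q^k+1)\mid\ell$, any such $i$ satisfies $(q^k+1)\nmid\hat{\ell}_i$, so the minimum may be taken over $\cI$. Moreover, two indices with equal $\hat{\ell}$-value impose the same divisibility condition, so only the smallest representative of each value (i.e.\ the elements of $\hat{\cI}$) can actually appear as $\iota$. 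This proves the first two parts of the theorem and shows that the Frobenius-automorphism groups arising are exactly $\langle\sigma_p^i\rangle$ for $i\in\hat{\cI}$, which are distinct since each corresponds to a different divisor of $hn$.

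Third, for the counting, the factor $\phi(k)/2$ comes from the independent choice of $s$ (cf.\ Proposition~\ref{P-DistinctOrbits}), so it remains to count $\ell\in\cL$ with $\iota=i$ for a fixed $i\in\hat{\cI}$. Setting $A(j)=\{\ell\in\cL:\hat{\ell}_j\mid\ell\}$, a direct count of the multiples of $\hat{\ell}_j$ in $\{1,\dots,(q^k+1)(q-1)\}$ and subtraction of those that are also multiples of $q^k+1$ gives $|A(j)|=\alpha_j$. The $\ell$'s with $\iota=i$ are precisely those in $A(i)\setminus\bigcup_{j\in\hat{\cI},\,j\mid i,\,j\neq i}A(j)$. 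A useful algebraic identity here is $\lcm(\hat{\ell}_{j_1},\hat{\ell}_{j_2})=\hat{\ell}_{\gcd(j_1,j_2)}$, which follows from $\gcd(p^{j_1}-1,p^{j_2}-1)=p^{\gcd(j_1,j_2)}-1$ and the formula $\hat{\ell}_j=(q^k+1)(q-1)/\gcd(p^j-1,(q^k+1)(q-1))$; consequently, all intersections among the $A(j)$'s are again sets of the form $A(\cdot)$. The inclusion-exclusion on these sets is then expected to telescope along the divisibility poset of $\hat{\cI}$ to the displayed form $\alpha_i-\sum_{j\in\hat{\cI}\setminus\{i\},\,j\mid i}\alpha_j$.

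The main obstacle is this last combinatorial step. For two incomparable proper divisors $j_1,j_2$ of $i$ in $\hat{\cI}$, the intersection $A(j_1)\cap A(j_2)=A(\gcd(j_1,j_2))$ is nontrivial, and it is not immediate that the higher-order terms in the inclusion-exclusion cancel exactly. Pinning down the structural property of $\hat{\cI}$ that forces the simple subtraction (perhaps that divisors of any $i\in\hat{\cI}$ within $\hat{\cI}$ are totally ordered by divisibility, so that the $A(j)$'s form a chain under inclusion) is the delicate point that would complete the counting.
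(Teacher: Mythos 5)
Your first two steps coincide with the paper's own proof: the paper also combines \cref{P-AutGroup} (with $t=1$) and \cref{P-MinSol} to get $\sigma_p^i\in\cF_{s,\omega^\ell}\Leftrightarrow\hat{\ell}_i\mid\ell$, restricts to $\cI$ because $(q^k+1)\nmid\ell$, and invokes cyclicity of $G_p$ to conclude $\cF_{s,\omega^\ell}=\subspace{\sigma_p^\iota}$ with $\iota$ the minimal admissible index. (A small point you and the paper both leave tacit: since $(q^k+1)(q-1)\mid p^{hn}-1$ one has $\hat{\ell}_i=\hat{\ell}_{\gcd(i,hn)}$, so every $i\in\hat{\cI}$ divides $hn$; this is what makes the groups $\subspace{\sigma_p^i}$, $i\in\hat{\cI}$, pairwise distinct and justifies classifying $\ell\in A(i)$ by divisors of $i$.) For the counting step, where your proposal stops, the paper does not do more than you: it simply asserts that $\subspace{\sigma_p^i}\leq\cF_{s,\omega^\ell}$ holds for $\alpha_i$ values of $\ell$ and that equality holds for $\alpha_i-\sum_{j\in\hat{\cI}\setminus\{i\},\,j\mid i}\alpha_j$ values, with no discussion of overlaps among the excluded conditions. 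So you have isolated exactly the point the paper leaves unjustified.

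Your worry is substantive, and your tentative rescue (that the sets $A(j)=\{\ell\in\cL\colon\hat{\ell}_j\mid\ell\}$, for $j\in\hat{\cI}$ properly dividing $i$, form a chain) would contradict the displayed formula rather than prove it: $j\mid j'$ forces $\hat{\ell}_{j'}\mid\hat{\ell}_j$, hence $A(j)\subseteq A(j')$, and for nested sets the count is $\alpha_i-\max_j\alpha_j$, whereas the closed formula requires the $A(j)$ to be \emph{pairwise disjoint}. The exact relation is the recursion $\beta_i=\alpha_i-\sum_{j\in\hat{\cI}\setminus\{i\},\,j\mid i}\beta_j$, where $\beta_j$ is the exact count for $j$; this collapses to the theorem's closed form only when no proper divisor $j\in\hat{\cI}$ of $i$ has itself a proper divisor in $\hat{\cI}$. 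Such three-term chains do occur for admissible parameters: for $q=p=2$, $k=27$ (so $hn=54$, $M:=(q^k+1)(q-1)=2^{27}+1$) one computes $\hat{\cI}=\{2,6,18\}$ with $2\mid 6\mid 18$, $\hat{\ell}_2=M/3$, $\hat{\ell}_6=M/9$, $\hat{\ell}_{18}=M/513$, hence $\alpha_2=2$, $\alpha_6=8$, $\alpha_{18}=512$; the number of $\ell\in\cL$ with $\cF_{s,\omega^\ell}=\subspace{\sigma_2^{18}}$ is $|\{\ell\colon \hat{\ell}_{18}\mid\ell,\ \hat{\ell}_6\nmid\ell\}|=513-9=504$, while the displayed formula gives $512-8-2=502$. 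So the gap you flagged is genuine and cannot be closed as stated: the final count should be proved (and stated) in the recursive or M\"obius-inverted form, and it agrees with the theorem's formula exactly when the divisor poset of $\hat{\cI}$ below $i$ has no chains of length three --- which happens to hold in the paper's worked examples (e.g.\ $\hat{\cI}=\{2,6,8\}$), explaining why the discrepancy is invisible there.
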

%%%%%%%%%%%%%%%%%%%%%%%%%%

It is not hard to see that the fractions involved in the expression for $\alpha_i$ are indeed integers:
we have $\hat{\ell}_i(p^i-1)=b(q^k+1)(q-1)$ for some $b\in\Z$ such that $\gcd(\hat{\ell}_i,b)=1$. 
Hence $\hat{\ell}_i$ divides $(q^k+1)(q-1)$.
For the second fraction use that $\lcm(\hat{\ell}_i,q^k+1)=\hat{\ell}_i(q^k+1)/\gcd(\hat{\ell}_i,q^k+1)$ and a Bezout identity for 
the greatest common divisor.

\begin{proof}
Fix $(s,\ell)\in\bar{\cA}$. Then \eqref{e-congruence} and  \cref{P-MinSol} implies for $i\geq1$
\[
   \sigma_p^i\in\Aut(\cO_{s,\omega^\ell})\Longleftrightarrow  \omega^{\ell(p^i-1)}\in\subspace{\omega^{(q^k+1)(q-1)}}
   \Longleftrightarrow \hat{\ell}_i\mid\ell.
\]
From the definition of the set~$\bar{\cA}$, we know that $q^k+1$ does not divide~$\ell$.
Hence the same is true for every $\hat{\ell}_i$ dividing~$\ell$.
If there is no such $\hat{\ell}_i$, we thus have $\cF_{s,\omega^\ell}=\{\id_{\F_{q^n}}\}$. 
Otherwise, $\cF_{s,\omega^\ell}=\{\id_{\F_{q^n}}\}\cup\{\sigma_p^i\colon i\in\cI,\,\hat{\ell}_i\text{ divides }\ell\}$.
Since~$G_p$ is cyclic, the subgroup $\cF_{s,\omega^\ell}$ is thus generated by $\sigma_p^\iota$, where~$\iota$ is the smallest
such value~$i$.
This proves the first two statements.
\\
As for the counting, note that for any $i\in\hat{\cI}$
\[
   \subspace{\sigma_p^i}\leq\cF_{s,\omega^\ell}\Longleftrightarrow 
        \hat{\ell}_i\mid\ell\text{ and }(q^k+1)\nmid\ell\text{ and }\ell<(q^k+1)(q-1).
\]
This is satisfied for~$\alpha_i$ values for~$\ell$.
Moreover, 
\[
    \subspace{\sigma_p^i}=\cF_{s,\omega^\ell}\Longleftrightarrow \subspace{\sigma_p^i}\leq\cF_{s,\omega^\ell} \text{ and }
    \subspace{\sigma_p^j}\not\leq\cF_{s,\omega^\ell} \text{ for all proper divisors $j\in\hat{\cI}$ of~$i$.} 
\]
This is satisfied by $\alpha_i-\sum_{j\in\hat{\cI}\setminus\{i\},\, j\mid i}\alpha_j$ many values for~$\ell$.
Realizing that the factor $\phi(k)/2$ accounts for the possible values for~$s$ in the index set~$\bar{\cA}$, we arrive at the final result.
\end{proof}

%%%%%%%%%%%%%%%%%%%%%%%%%%%%%%
\begin{example}\label{E-NumExa}
\begin{enumerate}[label=(\alph*),leftmargin=1.5em,labelwidth=1.5em,labelsep=0em,align=left, topsep=-1.3ex]
\item Let $p=q=3$ (thus $h=1$) and $k=3$ (thus $n=6$). 
      Thanks to Proposition \ref{P-DistinctOrbits} we have $\phi(k)q^k(q-1)/2=54$ distinct orbit codes, and they are parametrized by $(1,\omega^\ell)$, where 
      $\ell\in\cL$, which is defined in the same proposition.
      As shown in \cref{P-OptQuasiOpt}, those satisfying $N_{q^n/q}(\omega^\ell)\neq 1$ form optimal full-length orbit codes, while all others are quasi-optimal ones.
      By \cref{P-CountOpt} we have 26 quasi-optimal codes and 28 optimal codes, and by \cref{P-CycShift} 6 codes consist of subspaces containing a cyclic shift of~$\F_{q^2}$ (this is Case~(4) below).
      Furthermore, $\hat{\ell}_i$ as in \cref{P-MinSol} is given by $(\hat{\ell}_1,\ldots,\hat{\ell}_{n-1})=(28,\,7,\,28,\,7,\,28)$
      and since $q^k+1=28$ we have $\cI=\{2,4\}$ and $\hat{\cI}=\{2\}$.
      Thus, the only non-trivial Frobenius-automorphism group is $\subspace{\sigma_3^2}$ (which has order~$3$). 
      It arises for $\alpha_1=6$ orbit codes~$\cO_{1,\omega^\ell}$ and they form~$3$ Frobenius-orbits of length~$hn/3=2$. 
      All other $48$ codes $\cO_{1,\omega^\ell}$ have trivial Frobenius-automorphism group and they thus form 
      $48/hn=8$ Frobenius-orbits of length~$hn=6$.
      In this small case, one can also compute the orbit codes~$\cO_{1,\omega^\ell}$ directly and obtains:
      \begin{enumerate}[label=(\arabic*),leftmargin=1.5em,labelwidth=1.5em,labelsep=0em,align=left, topsep=0ex]
      \item 4 Frobenius orbits of length 6 consisting of optimal full-length orbit codes,
      \item  2 Frobenius orbits of length 2 consisting of optimal full-length orbit codes,
      \item  3 Frobenius orbits of length 6 consisting of quasi-optimal full-length orbit codes
               whose subspaces do not contain a cyclic shift of $\F_{q^2}$,
      \item 1 Frobenius orbit of length 6 consisting of quasi-optimal full-length orbit codes
               whose subspaces contain a cyclic shift of $\F_{q^2}$,
      \item 1 Frobenius orbit of length 2 consisting of quasi-optimal full-length orbit codes
              whose subspaces do not contain a cyclic shift of $\F_{q^2}$.
      \end{enumerate}
      The codes in~(4) are of type as in \cref{thm:thm4.1heide}(1)
      and have intersection distribution with $\lambda_2=3$.
      The codes in~(3) and~(5) are of type as in \cref{thm:thm4.1heide}(2) and all of them satisfy $\lambda_2=12$.
      All of this agrees with \cref{cor:fUconsequencequasi-optimal}.
      It also implies that in all cases $\lambda_1\neq0$; see also \cref{R-k3Case}.
\item Let $q=p=2$ and $k=5$ (thus $n=10$).  
        In this case we have $\phi(k)q^k(q-1)/2=64$ distinct orbit codes, and they all form quasi-optimal full-length codes 
        thanks to \cref{P-OptQuasiOpt}.
        By \cref{P-CycShift}, 20 codes consist of subspaces containing a cyclic shift of~$\F_{q^2}$ (this is Case~(2) below).
        We obtain
       \begin{enumerate}[label=(\arabic*),leftmargin=1.5em,labelwidth=1.5em,labelsep=0em,align=left, topsep=-1ex]
        \item 2 Frobenius orbits of length 10; both are of type \cref{thm:thm4.1heide}(1) and satisfy $\lambda_2=134$.
        \item 4 Frobenius orbits of length 10; they are of type \cref{thm:thm4.1heide}(2) and satisfy $\lambda_2=150$.
        \item  2 Frobenius orbits of length 2; both are of type \cref{thm:thm4.1heide}(2) and satisfy $\lambda_2=150$.
        \end{enumerate}

\item Let $p=3,\,h=3$, thus $q=27$, and let $k=4$ (thus $n=8$). In this case we have $\phi(k)q^k(q-1)/2=13,817,466$ distinct orbit codes, 
      and they are again parametrized by $(1,\omega^\ell),\,\ell\in\cL$.
      Thanks to \cref{T-ContainCycShift} none of them consists of subspaces containing a cyclic shift of~$\F_{q^2}$.
      Furthermore, by \cref{P-CountOpt} $531,440$ are quasi-optimal.
      Using for instance SageMath, one easily finds $\cI=\{2,4,6,\ldots,22\}$ and $\hat{\cI}=\{2,6,8\}$ (that is, every $\hat{\ell}_i,\,i\in\cI$, is a multiple of one of 
      $\hat{\ell}_2,\,\hat{\ell}_6,\,\hat{\ell}_8$).
      Thus, since $hn=24$, all non-trivial Frobenius-automorphism groups have order~$12,\,4$, or~$3$, and hence all Frobenius-orbits have length~$2,\,6,\,8$, or $24$.
      Using \cref{T-AutGroupellhat} one finds 
      \[
         \begin{array}{ll}
          \text{1 Frobenius-orbit of size~$2$},&\quad\text{4 Frobenius-orbits of size~$6$},\\
          \text{20 Frobenius-orbits of size~$8$},&\quad\text{575,720 Frobenius-orbits of size~$hn=24$}.
          \end{array}
      \]
      This accounts for all $13,817,466$ elements. 
      By \cref{P-OptQuasiOpt}, the size of each orbit code $\cO_{1,\omega^\ell}$ equals $(q^n-1)/(q-1)>10^{10}$, which prohibits any direct computation of their distance distribution.
\end{enumerate}
\end{example}

%%%%%%%%%%%%%%%%%%%
\section{Open Problems}

(1) For small parameters $q$ and~$n=2k$ we determined all orbits codes $\cO_{s,\gamma}$, and their distance distribution appears to be quite rigid:
it only depends on $n,\,q$, and the fact whether $U_{s,\gamma}$ contains a cyclic shift of~$\F_{q^2}$
(in other words, the parameters~$r$ in \cref{thm:thm4.1heide} only depends on these data).
We have to leave it to future research to prove or disprove this conjecture.
\\
(2) \cref{thm:existencequasi-optimal} and \cref{thm:classificationcase3dim} guarantee the existence of quasi-optimal cyclic orbit codes when $n$ is even or $k=3$. We leave as an open problem the existence of such codes in the remaining cases.
\\
(3) In this paper, we have mainly focused on the case of one-orbit codes. Can the results of this paper be used to construct larger quasi-optimal cyclic codes consisting of more than one orbit?

%%%%%%%%%%%%%%%%%%%
\section*{Acknowledgements}

The research was partially supported by the project ``COMPACT" of the University of Campania ``Luigi Vanvitelli'' and  by the Italian National Group for Algebraic and Geometric Structures and their Applications (GNSAGA - INdAM).

%%%%%%%%%%%%%%%%%%%

Chiara Castello, Olga Polverino and Ferdinando Zullo\\
Dipartimento di Matematica e Fisica,\\
Universit\`a degli Studi della Campania ``Luigi Vanvitelli'',\\
I--\,81100 Caserta, Italy\\
{{\em \{chiara.castello,olga.polverino,ferdinando.zullo\}@unicampania.it}}

\medskip

Heide Gluesing-Luerssen\\
Department of Mathematics,\\ 
University of Kentucky,\\ 
Lexington KY 40506--0027, USA\\ 
{\em heide.gl@uky.edu}


\begin{thebibliography}{8}

\bibitem{ACLY00} 
{\sc R. Ahlswede, N. Cai, S.Y. Li and R.W. Yeung:} 
Network information flow, \emph{IEEE Transactions on Information Theory} {\bf 46(4)}, 1204--1216, 2000.

%\bibitem{BSZ2015}
%{\sc C. Bachoc, O. Serra and G. Z\'emor:}
%An analogue of Vosper’s theorem for extension fields,
%\emph{Mathematical proceedings of the cambridge philosophical society} {\bf 163(3)}, 423--452, 2017.
%
%\bibitem{BSZ2018}
%{\sc C. Bachoc, O. Serra and G. Z\'emor:}
%Revisiting Kneser's theorem for field extensions,
%\emph{Combinatorica} {\bf 38(4)}, 759--777, 2018.

\bibitem{Netsurvey}
{\sc R. Bassoli, H. Marques, J. Rodriguez, K.W. Shum and R. Tafazolli:} 
Network coding theory: A survey, 
\emph{IEEE Communications Surveys \& Tutorials} {\bf 15(4)}, 1950--1978, 2013.

\bibitem{BEGR}
{\sc E. Ben-Sasson, T. Etzion, A. Gabizon and N. Raviv:}
Subspace polynomials and cyclic subspace codes, 
\emph{IEEE Transactions on Information Theory} {\bf 62(3)}, 1157--1165, 2016.


\bibitem{bhaintwal2024}
{\sc M. Bhaintwal, Mahak:} On the distance distributions of single-orbit cyclic subspace codes,
 \emph{arXiv preprint arXiv:2407.02200}, 2024.


\bibitem{BoPol}
{\sc G. Bonoli and O. Polverino:}
$\mathbb{F}_q$-linear blocking sets in $\mathrm{PG}(2,q^4)$,
{\it Innovations in Incidence Geometry} {\bf 2}, 35--56, 2005.


\bibitem{CPSZSidon}
{\sc C. Castello, O. Polverino, P. Santonastaso and F. Zullo:}
Constructions and equivalence of Sidon spaces,
\emph{Journal of Algebraic Combinatorics} {\bf 58(4)}, 1299--1329, 2023.

\bibitem{CPZ2024IEEE}
{\sc C. Castello, O. Polverino, and F. Zullo:}
On One-Orbit Cyclic Subspace Codes of $G_q (n, 3)$, \emph{2024 IEEE International Symposium on Information Theory (ISIT)}, 2024.

\bibitem{CPZ2024FullWeight}
{\sc C. Castello, O. Polverino, and F. Zullo:}
 Full weight spectrum one-orbit cyclic subspace codes, {\it Journal of Combinatorial Theory, Series A}, to appear, \emph{arXiv preprint arXiv:2405.19911}, 2024.

\bibitem{CoRe04}
{\sc A. Cossidente, M.D. Resmini:} Remarks on Singer cyclic groups and their normalizers, \emph{ Designs, Codes and Cryptography} {\bf 32(1-3)}, 97--102, 2004.


\bibitem{csajbok2023maximum}
{\sc B. Csajb\'ok, G. Marino and V. Pepe:} 
On the maximum field of linearity of linear sets. \emph{Bulletin of the London Mathematical Society} {\bf 56(11)}, 3300--3315, 2024.

\bibitem{DeBeuleVdV}
{\sc J. De Beule and G. Van de Voorde:}
The minimum size of a linear set,
{\it Journal of Combinatorial Theory, Series A} {\bf 164}, 109--124, 2019.

\bibitem{Etzion}
{\sc T. Etzion and A. Vardy:} 
Error-correcting codes in projective space,
\emph{IEEE Transactions on Information Theory}  {\bf 57(2)}, 1165--1173, 2011.

\bibitem{GLTroha15}
{\sc H. Gluesing-Luerssen, K. Morrison and C. Troha:} Cyclic orbit codes and stabilizer subfields, 
\emph{Advances in Mathematics of Communications} {\bf 9(2)}, 177--197, 2015.

\bibitem{heideweight}
{\sc H. Gluesing-Luerssen and H. Lehmann:}
Distance distributions of cyclic orbit codes, \emph{Designs, Codes and Cryptography} {\bf 89}, 447--470, 2021.

\bibitem{Heideequiv}
{\sc H. Gluesing-Luerssen and H. Lehmann:}
Automorphism groups and isometries for cyclic orbit codes, 
\emph{Advances in Mathematics of Communications} {\bf 17}, 119--138, 2023.

\bibitem{manganiello2008spread}
{\sc E. Gorla, F. Manganiello, J. Rosenthal:} Spread codes and spread decoding in network coding, \emph{2008 IEEE International Symposium on Information Theory}, 881--885, 2008.

\bibitem{rootspolynomials}{\sc R. Gow, R. Quinlan:}
Galois extensions and subspaces of alternating bilinear forms with special rank properties, \emph{Linear Algebra and its Applications} {\bf 430}, 2212--2224, 2009.

\bibitem{rootspolynomialsnorm}{\sc R. Gow, R. Quinlan:}
Galois theory and linear algebra, \emph{Linear Algebra and its Applications} {\bf 430},1778--1789, 2009.

\bibitem{Subcodessurvey}
{\sc M. Greferath, M.O. Pavčević, N. Silberstein, M. Á. Vázquez-Castro:} Network coding and subspace designs., \emph{New York, NY, USA: Springer} (2018).

\bibitem{Hes70}
{\sc M.D. Hestenes:} Singer groups, \emph{Canadian Journal of Mathematics} {\bf 22(3)}, 492--513, 1970.

\bibitem{KoetterK}
{\sc R. Koetter and F. R. Kschischang:} 
Coding for errors and erasures in random network coding, {\emph IEEE Transactions on Information Theory} {\bf 54}, 3579--3591, 2008.

\bibitem{LavVdV}
{\sc M. Lavrauw and G. Van de Voorde:} 
Field reduction and linear sets in finite geometry, 
In: Topics in Finite Fields, \emph{AMS Contemporary Math}, vol. 623, pp. 271--293. American Mathematical Society, Providence (2015).

\bibitem{Otal}
{\sc K. Otal and F. \"Ozbudak:} 
Cyclic subspace codes via subspace polynomials,
\emph{Designs, Codes and Cryptography} {\bf 85(2)}, 191--204, 2017.

\bibitem{Polverino}
{\sc O. Polverino:}
Linear sets in finite projective spaces,
\emph{Discrete Mathematics} {\bf 310(22)}, 3096--3107, 2010.

\bibitem{Roth}
{\sc R. M. Roth, N. Raviv and I. Tamo:}
Construction of Sidon spaces with applications to coding, \emph{IEEE Transactions on Information Theory} {\bf 64(6)}, 4412--4422, 2018.

\bibitem{SantZullo1}
{\sc P. Santonastaso and F. Zullo:}
Linearized trinomials with maximum kernel,
\emph{Journal of Pure and Applied Algebra} {\bf 226(3)}, 106842, 2022.

\bibitem{Segre}
{\sc B. Segre:}
Teoria di Galois, fibrazioni proiettive e geometrie non desarguesiane, \emph{Annali di Matematica Pura ed Applicata} {\bf 64(1)}, 1--76, 1964.

\bibitem{sheekey2016new}
{\sc J. Sheekey: } A new family of linear maximum rank distance codes, \emph{Advances in Mathematics of Communications} {\bf 10(3)}, 475--488, 2016.

\bibitem{Trautmann2}
{\sc A.-L. Trautmann:} 
Isometry and automorphisms of constant dimension codes, 
\emph{Advances in Mathematics of Communications} {\bf 7}, 147--160, 2013.

\bibitem{Trautmann}
{\sc A. L. Trautmann, F. Manganiello, M. Braun and J. Rosenthal:} 
Cyclic orbit codes, 
\emph{IEEE Transactions on Information Theory} {\bf 59(11)}, 7386--7404, 2013.

\bibitem{Zullo}
{\sc F. Zullo:}
Multi--orbit cyclic subspace codes and linear sets, \emph{Finite Fields and Applications} {\bf 87}, 102153, 2023.

\end{thebibliography}
\end{document}